\def\hat{\widehat}
\def\phi{\varphi}
\newtheorem{theorem}{Theorem}[section]
\newtheorem{lemma}[theorem]{Lemma}
\newtheorem{corollary}[theorem]{Corollary}
\newtheorem{claim}[theorem]{Claim}
\newtheorem{observation}[theorem]{Observation}
\newcounter{rnc}
\def\B{{\cal B}}
\def\M{\mathbf{M}}
\def\K{\mathbf{K}}
\def\R{\mathbb{R}}
\def\rank{\mathrm{rank}\,}
\def\Pf{\mathrm{Pf}\,}
\def\deg{\mathrm{deg}}
\def\odd{\mathrm{odd}}
\def\Search{{\sf Search}}
\def\Blossom{{\sf Blossom}}
\def\DBlossom{{\sf Graft}}
\def\Expand{{\sf Expand}}
\def\Augment{{\sf Augment}}
\newcommand{\X}[1]{}
\title{A Weighted Linear Matroid Parity Algorithm\thanks{
A preliminary version of this paper has
appeared in Proceedings of the 49th
Annual ACM Symposium on Theory of Computing
(STOC 2017), pp.~264--276.}}
\author{Satoru Iwata
\thanks{Department of Mathematical Informatics, University of Tokyo, Tokyo 113-8656, Japan.
E-mail: iwata@mist.i.u-tokyo.ac.jp} 
\and Yusuke Kobayashi
\thanks{Research Institute for Mathematical Sciences, Kyoto University, Kyoto, 606-8502, Japan.
E-mail: yusuke@kurims.kyoto-u.ac.jp}
}
\begin{document}
\maketitle
\begin{abstract}
The matroid parity (or matroid matching) problem, introduced as a common generalization of matching 
and matroid intersection problems, is so general that it requires an exponential number of oracle calls. 
Nevertheless, Lov\'asz (1980) showed that this problem admits a min-max formula and a polynomial algorithm for 
linearly represented matroids. Since then efficient algorithms have been developed for 
the linear matroid parity problem. 

In this paper, we present a combinatorial, deterministic, polynomial-time algorithm for 
the weighted linear matroid parity problem. The algorithm builds on a polynomial matrix 
formulation using Pfaffian and adopts a primal-dual approach based on 
the augmenting path algorithm of Gabow and Stallmann (1986) for the unweighted problem.  
\end{abstract}

\newpage

\section{Introduction}
The matroid parity problem \cite{Law76} (also known as the matchoid problem~\cite{Jen74} or the matroid matching problem~\cite{Lov78}) 
was introduced as a common generalization of matching and matroid intersection problems. 
In the general case, it requires an exponential number of independence oracle calls \cite{JK82,Lov80a}, 
and a PTAS has been developed only recently \cite{LSV13}.  
Nevertheless, Lov\'asz~\cite{Lov78,Lov80a,Lov80b} showed that the problem admits 
a min-max theorem for linear matroids and presented a polynomial algorithm that 
is applicable if the matroid in question is represented by a matrix. 

Since then, efficient combinatorial algorithms have been developed for 
this linear matroid parity problem \cite{GS86,Orl08,OV92}. 
Gabow and Stallmann~\cite{GS86} developed an augmenting path algorithm with 
the aid of a linear algebraic trick, 
which was later extended to the linear delta-matroid parity problem~\cite{GIM03}. 
Orlin and Vande Vate~\cite{OV92} provided 
an algorithm that solves this problem by repeatedly solving matroid intersection 
problems coming from the min-max theorem. Later, Orlin~\cite{Orl08} improved the 
running time bound of this algorithm. The current best deterministic running time 
bound due to \cite{GS86,Orl08} is $O(nm^\omega)$, where $n$ is the cardinality of 
the ground set, $m$ is the rank of the linear matroid, and $\omega$ is 
the matrix multiplication exponent, which is at most $2.38$.  
These combinatorial algorithms, however, tend to be complicated. 

An alternative approach that leads to simpler randomized algorithms is based on 
an algebraic method. This is originated by Lov\'asz~\cite{Lov79}, who formulated 
the linear matroid parity problem as rank computation of a skew-symmetric matrix 
that contains independent parameters. Substituting randomly generated numbers to 
these parameters enables us to compute the optimal value with high probability. 
A straightforward adaptation of this approach requires iterations to find an optimal 
solution. Cheung, Lau, and Leung \cite{CLL14} have improved this algorithm to 
run in $O(nm^{\omega-1})$ time, extending the techniques of 
Harvey~\cite{Har09} developed for matching and matroid intersection. 

While matching and matroid intersection algorithms~\cite{Edm65,Edm68} have been successfully extended 
to their weighted version~\cite{Edm65b,Edm79,IT76,Law75}, 
no polynomial algorithms have been known for the 
weighted linear matroid parity problem for more than three decades.
Camerini, Galbiati, and Maffioli~\cite{CGM92} developed a random pseudopolynomial 
algorithm for the weighted linear matroid parity problem by introducing 
a polynomial matrix formulation that extends the matrix formulation of 
Lov\'asz~\cite{Lov79}. This algorithm was later improved by 
Cheung, Lau, and Leung~\cite{CLL14}. The resulting complexity, however,  
remained pseudopolynomial. 
Tong, Lawler, and Vazirani~\cite{TLV84} observed that 
the weighted matroid parity problem on gammoids can be solved in polynomial time 
by reduction to the weighted matching problem. 
As a relaxation of the matroid matching polytope, 
Vande Vate~\cite{Van92} introduced the fractional matroid matching polytope. 
Gijswijt and Pap~\cite{GP13} devised a polynomial algorithm for optimizing 
linear functions over this polytope. The polytope was shown to be half-integral, 
and the algorithm does not necessarily yield an integral solution. 
 
This paper presents a combinatorial, deterministic, polynomial-time algorithm 
for the weighted linear matroid parity problem. To do so, we combine 
algebraic approach and augmenting path technique
together with the use of node potentials. 
The algorithm builds on a polynomial matrix formulation, which naturally extends 
the one discussed in \cite{GI05} for the unweighted problem. The algorithm employs 
a modification of the augmenting path search procedure for the unweighted 
problem by Gabow and Stallmann~\cite{GS86}. It adopts a primal-dual approach 
without writing an explicit LP description. The correctness proof 
for the optimality is based on the idea of combinatorial relaxation for 
polynomial matrices due to Murota~\cite{Mur95}.
The algorithm is shown to require $O(n^3m)$ arithmetic operations. 
This leads to a strongly polynomial algorithm for linear matroids represented over a finite field.  
For linear matroids represented over the rational field, 
one can exploit our algorithm to solve the problem in polynomial time. 

Independently of the present work, Gyula Pap has obtained another 
combinatorial, deterministic, polynomial-time algorithm for the 
weighted linear matroid parity problem based on a different approach.  

The matroid matching theory of Lov\'asz \cite{Lov80b} in fact deals with 
a more general class of matroids that enjoy the double circuit property. 
Dress and Lov\'asz \cite{DL87} showed that algebraic matroids satisfy this property. 
Subsequently, Hochst\"attler and Kern \cite{HK89} showed the same phenomenon for 
pseudomodular matroids. The min-max theorem follows for this class of matroids. 
To design a polynomial algorithm, however, one has to establish how to represent 
those matroids in a compact manner. Extending this approach to the weighted problem 
is left for possible future investigation. 

The linear matroid parity problem finds various applications: 
structural solvability analysis of passive electric networks \cite{Mil74}, 
pinning down planar skeleton structures \cite{LP86}, and 
maximum genus cellular embedding of graphs \cite{FGM88}. 
We describe below two interesting applications of the weighted matroid parity problem
in combinatorial optimization. 

A $T$-path in a graph is a path between two distinct vertices in the terminal set $T$. 
Mader~\cite{Mad78} showed a min-max characterization of the maximum number 
of openly disjoint $T$-paths. The problem can be equivalently formulated 
in terms of ${\mathcal S}$-paths, where ${\mathcal S}$ is a partition of $T$ 
and an ${\mathcal S}$-path is a $T$-path between two different components of 
${\mathcal S}$. Lov\'asz~\cite{Lov80b} formulated the problem as a matroid 
matching problem and showed that one can find a maximum number of disjoint 
${\mathcal S}$-paths in polynomial time. Schrijver~\cite{Sch03} has described
a more direct reduction to the linear matroid parity problem. 

The disjoint ${\cal S}$-paths problem has been extended to path packing problems 
in group-labeled graphs \cite{CCG08,CGGGLS06,Pap07}. 
Tanigawa and Yamaguchi~\cite{TY16} have shown that these problems also reduce 
to the matroid matching problem with double circuit property. Yamaguchi~\cite{Yam16a} 
clarifies a characterization of the groups for which those problems reduce 
to the linear matroid parity problem. 

As a weighted version of the disjoint ${\mathcal S}$-paths problem, it is quite 
natural to think of finding disjoint ${\mathcal S}$-paths of minimum total length. 
It is not immediately clear that this problem reduces to the weighted linear 
matroid parity problem. A recent paper of Yamaguchi~\cite{Yam16b} clarifies that 
this is indeed the case. He also shows that the reduction results on 
the path packing problems on group-labeled graphs also extend to the 
weighted version. 

The weighted linear matroid parity is also useful in the design of 
approximation algorithms. Pr\"omel and Steger~\cite{PS00} provided  
an approximation algorithm for the Steiner tree problem. 
Given an instance of the Steiner tree problem, construct a hypergraph 
on the terminal set such that each hyperedge corresponds to a terminal subset 
of cardinality at most three and regard the shortest length of a Steiner tree 
for the terminal subset as the cost of the hyperedge. The problem of finding 
a minimum cost spanning hypertree in the resulting hypergraph can be 
converted to the problem of finding a minimum spanning tree in a 3-uniform hypergraph, 
which is a special case of the weighted parity problem for graphic matroids. 
The minimum spanning hypertree thus obtained costs at most 5/3 of the optimal 
value of the original Steiner tree problem, and one can construct 
a Steiner tree from the spanning hypertree without increasing the cost. 
Thus they gave a 5/3-approximation algorithm for the Steiner tree problem 
via weighted linear matroid parity. This is a very interesting approach that 
suggests further use of weighted linear matroid parity in the design of 
approximation algorithms, even though the performance ratio is larger than 
the current best one for the Steiner tree problem \cite{BGRS13}.

\section{The Minimum-Weight Parity Base Problem}
\label{sec:problemdef}

Let $A$ be a matrix of row-full rank over an arbitrary field $\K$
with row set $U$ and column set $V$. Assume that both $m=|U|$ and $n=|V|$ are even. 
The column set $V$ is partitioned into pairs, called {\em lines}. 
Each $v\in V$ has its {\em mate} $\bar{v}$ such that $\{v,\bar{v}\}$ 
is a line. We denote by $L$ the set of lines, and suppose that 
each line $\ell\in L$ has a weight $w_\ell \in \R$.

The linear dependence of the column vectors naturally defines a matroid $\M(A)$ on $V$. Let   
$\B$ denote its base family. A base $B\in\B$ is called a {\em parity base} if it consists of lines. 
As a weighted version of the linear matroid parity problem, we will consider the problem of 
finding a parity base of minimum weight, where the weight of a parity base is the sum of the 
weights of lines in it. We denote the optimal value by $\zeta(A,L,w)$. This problem generalizes 
finding a minimum-weight perfect matching in graphs and a minimum-weight common base of a pair 
of linear matroids on the same ground set. 

As another weighted version of the matroid parity problem, one can think of finding 
a matching (independent parity set) of maximum weight. This problem can be easily 
reduced to the minimum-weight parity base problem. 

Associated with the minimum-weight parity base problem, we consider a skew-symmetric polynomial matrix 
$\Phi_A(\theta)$ in variable $\theta$ defined by 
$$\Phi_A(\theta)=\begin{pmatrix} O & A \\ -A^\top & D(\theta) \end{pmatrix},$$
where $D(\theta)$ is a block-diagonal matrix in which each block is a $2\times 2$ skew-symmetric polynomial matrix 
$D_\ell(\theta)=\begin{pmatrix} 0 & -\tau_\ell\theta^{w_\ell} \\ \tau_\ell\theta^{w_\ell} & 0 \end{pmatrix}$ 
corresponding to a line $\ell\in L$. Assume that the coefficients $\tau_\ell$ are independent parameters 
(or indeterminates). 

For a skew-symmetric matrix $\Phi$ whose rows and columns are indexed by $W$, 
the {\em support graph} of $\Phi$ is the graph $\Gamma=(W,E)$ with edge set 
$E= \{(u, v) \mid \Phi_{uv}\neq 0\}$. 
We denote by $\Pf\Phi$ the {\em Pfaffian} of $\Phi$, which is defined as follows: 
$$
\Pf\Phi = \sum_{M} \sigma_M \prod_{(u, v) \in M} \Phi_{uv},
$$
where the sum is taken over all perfect matchings $M$ in $\Gamma$ and
$\sigma_M$ takes $\pm 1$ in a suitable manner, see~\cite{LP86}.
It is well-known that $\det \Phi=(\Pf \Phi)^2$ and 
$\Pf(S\Phi S^\top)=\Pf\Phi\cdot\det S$ for any square matrix $S$. 

We have the following lemma that associates the optimal value of the minimum-weight
parity base problem with $\Pf\Phi_A(\theta)$. 

\begin{lemma}
\label{lem:Pf}
The optimal value of the minimum-weight parity base problem is given by 
$$\zeta(A,L,w)=\sum_{\ell\in L}w_\ell-\deg_\theta\,\Pf\Phi_A(\theta).$$ 
In particular, if $\Pf\Phi_A(\theta)=0$ (i.e., $\deg_\theta\,\Pf\Phi_A(\theta) = -\infty$), then there is no parity base. 
\end{lemma}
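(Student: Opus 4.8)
The plan is to expand $\Pf\Phi_A(\theta)$ along the structure of the support graph and identify which perfect matchings contribute, then match the $\theta$-degree of the surviving terms against parity bases. First I would observe that any perfect matching $M$ in the support graph of $\Phi_A(\theta)$ must match each row $u\in U$ (the rows of the $O$-block) to some column in $V$ via the $A$-block, since the $O$-block forces rows in $U$ to have no internal edges. This pairs off $U$ with an $m$-element subset $V_M\subseteq V$; the remaining $n-m$ columns in $V\setminus V_M$ must be matched among themselves through $D(\theta)$, which, being block-diagonal with one $2\times 2$ block per line, forces $V\setminus V_M$ to be a union of lines, say the line set $L_M$. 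Thus perfect matchings $M$ are in bijection with pairs consisting of (i) a subset $V_M$ of columns meeting $U$ in a transversal and (ii) a complementary line set $L_M$; moreover $V_M$ must be such that the corresponding square submatrix $A[U,V_M]$ has a nonzero term in its determinant expansion, i.e.\ $V_M$ spans $\K^U$, equivalently $V_M$ contains a base of $\M(A)$ — and by a rank count $|V_M|=m$ forces $V_M$ itself to be a base.

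Next I would compute the $\theta$-contribution of such a matching. The $A$-block and $(-A^\top)$-block carry no $\theta$, so the entire $\theta$-weight of the monomial for $M$ comes from the $D(\theta)$-part, namely $\prod_{\ell\in L_M}(\pm\tau_\ell\theta^{w_\ell})$, which has $\theta$-degree $\sum_{\ell\in L_M} w_\ell$. Since $L_M$ is the complement of the line set whose columns form the base $V_M$, and since $V_M$ being a base consisting of lines is exactly the condition that $V\setminus V_M$ — hence $L\setminus L_M$ — is a parity base, we get $\deg_\theta(\text{monomial for }M) = \sum_{\ell\in L} w_\ell - w(L\setminus L_M)$, which is maximized precisely when $w(L\setminus L_M)$ is minimized over parity bases, giving $\sum_{\ell\in L} w_\ell - \zeta(A,L,w)$. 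The last thing to check is that the leading terms do not cancel: because the $\tau_\ell$ are independent indeterminates, distinct line sets $L_M$ yield monomials in the $\tau$'s that are distinct (the squarefree monomial $\prod_{\ell\in L_M}\tau_\ell$ records $L_M$), so no cancellation across different parity bases can occur; within a fixed $L_M$ the coefficient is $\pm\Pf$ of the corresponding principal submatrix of $\Phi_A$ with $\theta$ set appropriately, which is $\pm\det A[U,V_M]\neq 0$ up to sign by the identity $\Pf(S\Phi S^\top)=\Pf\Phi\cdot\det S$. Hence $\deg_\theta\Pf\Phi_A(\theta)=\sum_{\ell\in L}w_\ell-\zeta(A,L,w)$, and if there is no parity base the sum over matchings is empty so $\Pf\Phi_A(\theta)=0$.

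The main obstacle I anticipate is the bookkeeping of signs $\sigma_M$ and the precise reduction of the coefficient of the leading monomial to $\det A[U,V_M]$: one must argue that, after fixing the line set $L_M$, summing over all perfect matchings with that $L_M$ reproduces a Pfaffian of a block matrix of the form $\begin{pmatrix} O & A[U,V_M] \\ -A[U,V_M]^\top & * \end{pmatrix}$ (with the $D$-part on $V\setminus V_M$ contributing the scalar $\prod\tau_\ell\theta^{w_\ell}$ as a separate factor), and then invoke $\det\Phi=(\Pf\Phi)^2$ together with $\Pf(S\Phi S^\top)=\Pf\Phi\cdot\det S$ to evaluate it as $\pm\det A[U,V_M]$ times the line factor. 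The non-cancellation argument via independence of the $\tau_\ell$ is the conceptual crux and should be stated carefully, since it is what upgrades a lower bound on $\deg_\theta\Pf\Phi_A(\theta)$ to an equality.
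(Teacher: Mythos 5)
Your proposal is correct and follows essentially the same route as the paper: the paper groups the Pfaffian expansion by the subset $X$ matched through $D(\theta)$ (citing Murota's formula $\Pf(\Psi_A+\Delta)=\sum_X\pm\Pf\Psi_A[W\setminus X]\cdot\Pf\Delta(\theta)[X]$), which is exactly your grouping of perfect matchings by $L_M$, and it uses the same non-cancellation argument via the independent parameters $\tau_\ell$ and the same degree count $\sum_{\ell\in L_M}w_\ell$. One small caveat: your parenthetical ``i.e.\ $V_M$ spans'' does not follow from a single nonzero diagonal of $A[U,V_M]$, but your later coefficient computation (the terms with fixed $L_M$ sum to $\pm\det A[U,V_M]$ times the line factor) repairs this and matches the paper's criterion that $\Pf\Psi_A[W\setminus X]\neq 0$ iff $V\setminus X$ is a base.
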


\begin{proof}
We split $\Phi_A(\theta)$ into $\Psi_A$ and $\Delta(\theta)$
such that 
\begin{align*}
& \Phi_A(\theta) = \Psi_A + \Delta(\theta), &
&\Psi_A=\begin{pmatrix} O & A \\ -A^\top & O \end{pmatrix}, &
\Delta(\theta)=\begin{pmatrix} O & O \\ O & D(\theta) \end{pmatrix}.
\end{align*}
The row and column sets of these skew-symmetric matrices are indexed by $W:=U\cup V$. 
By \cite[Lemma 7.3.20]{Mur00}, we have 
$$\Pf \Phi_A(\theta) = \sum_{X \subseteq W} \pm \Pf \Psi_A[W \setminus X] 
\cdot \Pf \Delta(\theta)[X],$$
where each sign is determined by the choice of $X$,  
$\Delta(\theta)[X]$ is the principal submatrix of $\Delta(\theta)$ 
whose rows and columns are both indexed by $X$, 
and $\Psi_A[W \setminus X]$ is defined in a similar way. 
One can see that $\Pf \Delta(\theta)[X] \not= 0$ if and only if $X \subseteq V$ 
(or, equivalently $B:=V\setminus X$) is a union of lines.
One can also see for $X \subseteq V$ that $\Pf \Psi_A[W \setminus X] \not= 0$ if and only if
$A[U, V \setminus X]$ is 
nonsingular, which means that 
$B$ is a base of $\M(A)$. 
Thus, we have 
$$
\Pf \Phi_A (\theta) = \sum_{B} \pm \Pf \Psi_A[U \cup B] \cdot \Pf \Delta(\theta)[V \setminus B], 
$$
where the sum is taken over all parity bases $B$. 
Note that no term is canceled out in the summation, 
because each term contains a distinct set of independent parameters.
For a parity base $B$, we have 
$$\deg_\theta (\Pf \Psi_A[U \cup B] \cdot \Pf \Delta(\theta)[V \setminus B]) 
= \sum_{\ell\subseteq V \setminus B}w_\ell = \sum_{\ell\in L}w_\ell - \sum_{\ell\subseteq B}w_\ell,$$ 
which implies that 
the minimum weight of a parity base is  
$\displaystyle\sum_{\ell\in L}w_\ell-\deg_\theta\,\Pf \Phi_A (\theta)$. 
\end{proof}

Note that Lemma~\ref{lem:Pf} does not immediately lead to a (randomized) polynomial-time 
algorithm for the minimum weight parity base problem. 
This is because computing the degree of the Pfaffian of a skew-symmetric 
polynomial matrix is not so easy. 
Indeed, the algorithms in \cite{CGM92,CLL14} for the weighted linear matroid parity problem
compute the degree of the Pfaffian of another skew-symmetric polynomial matrix, 
which results in pseudopolynomial complexity.

\section{Algorithm Outline}
\label{sec:algorithm}
In this section, we describe the outline of our algorithm for solving the minimum-weight 
parity base problem. 

We regard the column set $V$ as a vertex set. 
The algorithm works on a vertex set $V^*\supseteq V$ that includes some new vertices 
generated during the execution. 
The algorithm keeps a nested (laminar) collection 
$\Lambda=\{H_1,\ldots,H_{|\Lambda|}\}$ of vertex subsets of $V^*$ 
such that $H_i \cap V$ is a union of lines for each $i$.
The indices satisfy that, for any two members $H_i, H_j \in \Lambda$ with $i < j$, 
either $H_i \cap H_j = \emptyset$ or $H_i \subsetneq H_j$ holds. 
Each member of $\Lambda$ is called a {\em blossom}.
The algorithm maintains a potential $p:V^*\to\R$ 
and a nonnegative variable $q:\Lambda\to\R_+$, which 
are collectively called {\em dual variables}. 

We note that although $p$ and $q$ are called dual variables, they do not correspond to 
dual variables of an LP-relaxation of the minimum-weight parity base problem. 
Indeed, this paper presents neither an LP-formulation nor a min-max formula for 
the minimum-weight parity base problem, explicitly.
We will show instead that one can obtain a parity base $B$ that admits feasible dual variables $p$ and $q$, 
which provide a certificate for the optimality of $B$. 

The algorithm starts with splitting the weight $w_\ell$ into $p(v)$ and $p(\bar{v})$ for each 
line $\ell=\{v,\bar{v}\}\in L$, i.e., $p(v) + p(\bar{v}) = w_\ell$. 
Then it executes the greedy algorithm for finding a base $B\in\B$ 
with minimum value of $p(B)=\sum_{u\in B}p(u)$. If $B$ is a parity base, then $B$ is obviously 
a minimum-weight parity base. 
Otherwise, there exists
a line $\ell=\{v,\bar{v}\}$ in which exactly one of its two vertices belongs to $B$. 
Such a line is called a {\em source line} and each vertex in a source line is called 
a {\em source vertex}. A line that is not a source line is called a {\em normal line}. 

The algorithm initializes $\Lambda:=\emptyset$ and 
proceeds iterations of primal and dual updates, 
keeping dual feasibility. 
In each iteration, the algorithm applies the breadth-first search to find an augmenting path. 
In the meantime, the algorithm sometimes detects a new blossom and adds it to $\Lambda$.
If an augmenting path $P$ is found, the algorithm updates $B$ along $P$. 
This will reduce the number of source lines by two. 
If the search procedure terminates without finding an augmenting path, 
the algorithm updates the dual variables to create new tight edges.  
The algorithm repeats this process until $B$ becomes a parity base.  
Then $B$ is a minimum-weight parity base. 
See Fig.~\ref{fig:revisionfig31} for a flowchart of our algorithm.

\begin{figure}[htbp]
  \centering
    \includegraphics[width=10cm]{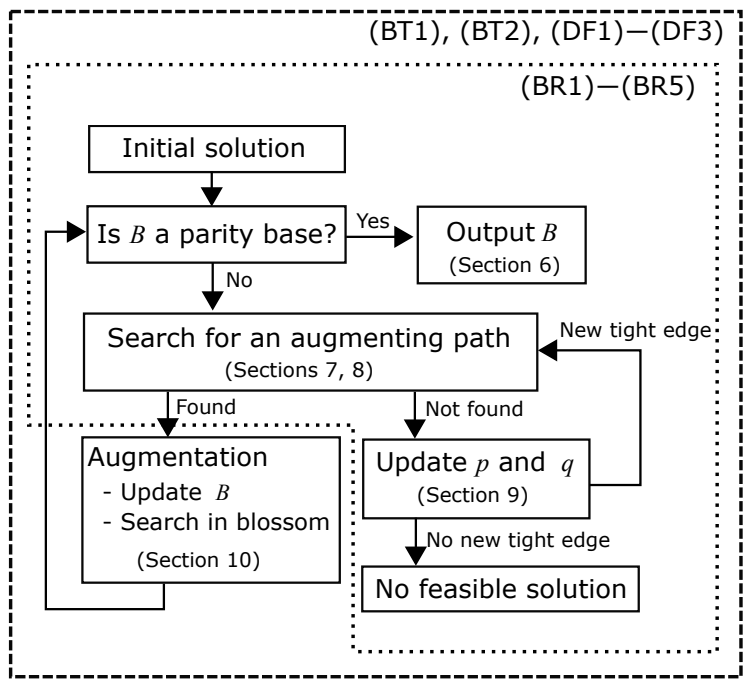}
      \caption{Flow chart of our algorithm. The conditions (BT1), (BT2), and (DF1)--(DF3) always hold, 
whereas (BR1)--(BR5) do not necessarily hold during the augmentation procedure in Section~\ref{sec:augmentation}.} 
    \label{fig:revisionfig31} 
\end{figure}

The rest of this paper is organized as follows. 

In Section~\ref{sec:blossoms}, 
we introduce new vertices and operations attached to blossoms. 
We describe some properties of blossoms kept in the algorithm, which we denote (BT1) and (BT2).

The feasibility of the dual variables is defined in Section~\ref{sec:dual}. 
The dual feasibility is denoted by (DF1)--(DF3). 
We also describe several properties of feasible dual variables
that are used in other sections.

In Section~\ref{sec:optimality}, 
we show that a parity base that admits feasible dual variables 
attains the minimum weight. 
The proof is based on the polynomial matrix formulation of 
the minimum-weight parity base problem given in Section~\ref{sec:problemdef}. 
Combining this with
some properties of the dual variables and 
the duality of the maximum-weight matching problem, 
we show the optimality of such a parity base. 

In Section~\ref{sec:search}, we describe a search procedure for an augmenting path. 
We first define an augmenting path, and then we describe our search procedure. 
Roughly, our procedure finds a part of the augmenting path outside the blossoms.  
The routing in each blossom is determined by a prescribed vertex set  
that satisfies some conditions, which we denote (BR1)--(BR5). 
Note that the search procedure may create new blossoms.

The validity of the procedure is shown in Section~\ref{sec:validity}. 
We show that the output of the procedure is an augmenting path
by using the properties (BR1)--(BR5) of the routing in each blossom. 
We also show that creating a new blossom does not violate 
the conditions (BT1), (BT2),  (DF1)--(DF3), and (BR1)--(BR5).

In Section~\ref{sec:dualupdatealgo}, 
we describe how to update the dual variables when 
the search procedure terminates without finding an augmenting path. 
We obtain new tight edges by updating the dual variables, and repeat the search procedure.  
We also show that if we cannot obtain new tight edges, then 
the instance has no feasible solution, i.e., there is no parity base.

If the search procedure succeeds in finding an augmenting path $P$, 
the algorithm updates the base $B$ along $P$. 
The details of this process are presented in Section~\ref{sec:augmentation}. 
Basically, we replace the base $B$ with the symmetric difference of $B$ and $P$. 
In addition, since there exist new vertices corresponding to the blossoms, 
we update them carefully to keep the conditions (BT1), (BT2), and (DF1)--(DF3). 
In order to define a new routing in each blossom, 
we apply the search procedure in each blossom, 
which enables us to keep the conditions (BR1)--(BR5).

Finally, in Section~\ref{sec:complexity}, 
we describe the entire algorithm and analyze its running time.  
We show that our algorithm solves the minimum-weight parity base problem 
in $O(n^3 m)$ time when $\K$ is a finite field of fixed order. 
When $\K = \mathbb{Q}$, it is not obvious that 
a direct application of our algorithm runs in polynomial time. 
However, we show that the minimum-weight parity base problem over $\mathbb Q$ can be solved 
in polynomial time by applying our algorithm over a sequence of finite fields.

\section{Blossoms}
\label{sec:blossoms}

In this section, we introduce buds and tips attached to blossoms
and construct auxiliary matrices that will be used in the definition of dual feasibility. 

Each blossom contains at most one source line. 
A blossom that contains a source line is called a {\em source blossom}. 
A blossom with no source line is called a {\em normal blossom}. 
Let $\Lambda_{\rm s}$ and $\Lambda_{\rm n}$ denote 
the sets of source blossoms and normal blossoms, respectively. 
Then, $\Lambda = \Lambda_{\rm s} \cup \Lambda_{\rm n}$. 
Let $\lambda$ denote the number of blossoms in $\Lambda$. 

Each normal blossom $H_i\in\Lambda_{\rm n}$ has a pair of associated 
vertices $b_i$ and $t_i$ outside $V$, which are called the {\em bud} and 
the {\em tip} of $H_i$, respectively. The pair $\{b_i,t_i\}$ is called a {\em dummy line}. 
To simplify the description, we denote $\bar b_i = t_i$ and $\bar t_i = b_i$.
The vertex set $V^*$ is defined by $V^* := V \cup T$ with $T:=\{b_i,t_i \mid H_i \in \Lambda_{\rm n}\}$. 
The tip $t_i$ is contained in $H_i$, whereas the bud $b_i$ is outside $H_i$. 
For every $i, j$ with $H_j \in \Lambda_{\rm n}$, we have 
$t_j\in H_i$ if and only if $H_j \subseteq H_i$. 
Similarly, we have $b_j\in H_i$ if and only if $H_j\subsetneq H_i$. 
Thus, each normal blossom $H_i$ is of odd cardinality. 
The algorithm keeps a subset $B^* \subseteq V^*$ such that $B^* \cap V=B$ 
and $|B^* \cap \{b_i, t_i\}| = 1$ for each $H_i \in \Lambda_{\rm n}$. 
It also keeps $H_i \cap V \not = H_j \cap V$ for distinct $H_i, H_j \in \Lambda$
and $H_i \cap V \not = \emptyset$ for each $H_i \in \Lambda$. 
This implies that $|\Lambda| = O(n)$, where $n = |V|$, and hence $|V^*| = O(n)$.

Recall that $U$ is the row set of $A$. 
The {\em fundamental cocircuit matrix} $C$ with respect to a base $B$ is 
a matrix with row set $B$ and column set $V\setminus B$ obtained 
by $C=A[U,B]^{-1}A[U,V\setminus B]$. 
In other words, $(I \ C)$ is obtained from $A$ 
by identifying $B$ and $U$, applying row transformations, 
and changing the ordering of columns. For a subset $S\subseteq V$, we have 
$B\triangle S\in\B$ if and only if $C[S]:=C[S\cap B,S\setminus B]$ is nonsingular.
Here, $\triangle$ denotes the symmetric difference. 
Then the following lemma characterizes the fundamental cocircuit matrix with respect 
to $B\triangle S$.
\begin{lemma}
\label{lem:pivot}
Suppose that $C$ is in the form of 
$C=\begin{pmatrix} \alpha & \beta \\ \gamma & \delta \end{pmatrix}$ 
with $\alpha=C[S]$ being nonsingular. Then 
\begin{equation*}
C':=\begin{pmatrix}
\alpha^{-1} & \alpha^{-1}\beta \\ 
-\gamma\alpha^{-1} & \delta-\gamma\alpha^{-1}\beta
\end{pmatrix}
\end{equation*}
is the fundamental cocircuit matrix with respect to $B\triangle S$. 
\end{lemma}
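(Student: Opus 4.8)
The plan is to verify directly that $C'$ is the fundamental cocircuit matrix with respect to $B' := B\triangle S$, using the defining identity $C' = A[U,B']^{-1}A[U,V\setminus B']$ together with the characterization that $B\triangle T\in\B$ iff $C[T]$ is nonsingular. The most transparent route is to reinterpret the block decomposition of $C$ in terms of a change of basis in the column space of $A$. Write $V\setminus B$ as the disjoint union $(S\setminus B)\cup R$ and $B$ as $(S\cap B)\cup Q$, so that the four blocks $\alpha,\beta,\gamma,\delta$ correspond to the partition of rows into $S\cap B, Q$ and columns into $S\setminus B, R$. Passing from $B$ to $B' = B\triangle S = Q\cup (S\setminus B)$ amounts to swapping the roles of $S\cap B$ and $S\setminus B$, so the new fundamental cocircuit matrix has row set $Q\cup(S\setminus B)$ and column set $(S\cap B)\cup R$.

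First I would recall that $(I\ C)$, with the identity indexed by $B$ and $C$ indexed by $V\setminus B$, represents $\M(A)$ after identifying $B$ with $U$; concretely, for $v\in V$ the column of $(I\ C)$ indexed by $v$ is the coordinate vector of $A$'s $v$-th column in the basis $\{A_u : u\in B\}$. To get the representation in the basis $\{A_u : u\in B'\}$, I multiply $(I\ C)$ on the left by the inverse of the submatrix of $(I\ C)$ whose columns are indexed by $B'$. That submatrix, after ordering its columns as $(S\cap B, Q)$ internally versus the new basis order, is $\begin{pmatrix}\alpha & 0\\ \gamma & I\end{pmatrix}$ — the columns indexed by $S\setminus B$ contribute $\begin{pmatrix}\alpha\\\gamma\end{pmatrix}$ and the columns indexed by $Q$ contribute the identity block. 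Its inverse is $\begin{pmatrix}\alpha^{-1} & 0\\ -\gamma\alpha^{-1} & I\end{pmatrix}$, which exists exactly because $\alpha = C[S]$ is nonsingular, i.e.\ $B'\in\B$. Multiplying $(I\ C)$ by this inverse and then reading off the block indexed by the new non-basis columns $(S\cap B)\cup R$ yields precisely $\begin{pmatrix}\alpha^{-1} & \alpha^{-1}\beta\\ -\gamma\alpha^{-1} & \delta-\gamma\alpha^{-1}\beta\end{pmatrix}$.

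An alternative I might present for robustness is a purely algebraic verification: set $P := A[U,B]$, so $A[U,V\setminus B] = PC$ and in particular $A[U,S\setminus B] = P\begin{pmatrix}\alpha\\\gamma\end{pmatrix}$ (in block form with respect to $B = (S\cap B)\cup Q$). Then $A[U,B'] = \big(P\begin{pmatrix}\alpha\\\gamma\end{pmatrix}\ \big|\ P\,(\text{columns of }Q)\big)$, whose inverse I can compute by block elimination; composing with $A[U, (S\cap B)\cup R]$ and simplifying the products gives the claimed $C'$. This is just the $2\times 2$ block-matrix pivoting formula (a Schur-complement computation), so it is routine once the bookkeeping of which index sets become basic and which become non-basic is fixed.

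The main obstacle is bookkeeping rather than mathematics: one must be careful that the rows and columns of $C'$ really are indexed by $B'\cap V$ and $V\setminus B'$ in the right order, since the blocks $\alpha^{-1}$ and $-\gamma\alpha^{-1}$ sit in rows indexed by $S\setminus B$ (which is now part of the base) while $\alpha^{-1}$ and $\alpha^{-1}\beta$ sit in columns indexed by $S\cap B$ (now non-basic). Making the identification $A[U,B]\leftrightarrow U$ explicit and tracking the permutation of column labels is what makes the argument clean; once that is in place, the formula drops out of the single block-inversion identity $\begin{pmatrix}\alpha & 0\\ \gamma & I\end{pmatrix}^{-1} = \begin{pmatrix}\alpha^{-1} & 0\\ -\gamma\alpha^{-1} & I\end{pmatrix}$.
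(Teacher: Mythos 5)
Your proposal is correct and follows essentially the same route as the paper: the left multiplication by $\begin{pmatrix}\alpha^{-1} & 0\\ -\gamma\alpha^{-1} & I\end{pmatrix}$ that you derive as the inverse of the $B\triangle S$-indexed column submatrix of $(I\ C)$ is exactly the row transformation the paper applies to make those columns the identity. The extra bookkeeping you supply is a more explicit justification of the same one-line computation.
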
 

\begin{proof}
In order to obtain the fundamental cocircuit matrix with respect to $B\triangle S$, 
we apply row elementary transformations 
to $(I \ C) = \begin{pmatrix} I & 0 & \alpha & \beta \\ 0 & I & \gamma & \delta \end{pmatrix}$
so that the columns corresponding to $B \triangle S$ form the identity matrix. 
Hence, the obtained matrix is 
$$
\begin{pmatrix} \alpha^{-1} & 0 \\ - \gamma \alpha^{-1} & I \end{pmatrix} 
\begin{pmatrix} I & 0 & \alpha & \beta \\ 0 & I & \gamma & \delta \end{pmatrix}
= \begin{pmatrix} \alpha^{-1} & 0 & I & \alpha^{-1} \beta \\ - \gamma \alpha^{-1} & I & 0 & \delta - \gamma \alpha^{-1} \beta \end{pmatrix},
$$
which shows that $C'$ is the fundamental cocircuit matrix with respect to $B\triangle S$. 
\end{proof}

This operation converting $C$ to $C'$ is called {\em pivoting around $S$}. 
We have the following property on the nonsingularity of their submatrices.

\begin{lemma}\label{lem:pivotsing}
Let $C$ and $C'$ be the fundamental cocircuit matrices with respect to $B$ and $B \triangle S$, respectively. 
Then, for any $X \subseteq V$, $C[X]$ is nonsingular if and only if $C'[X \triangle S]$ is nonsingular. 
\end{lemma}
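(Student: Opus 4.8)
The statement to prove is Lemma~\ref{lem:pivotsing}: for fundamental cocircuit matrices $C$ (w.r.t.\ $B$) and $C'$ (w.r.t.\ $B\triangle S$), and any $X\subseteq V$, the submatrix $C[X]$ is nonsingular if and only if $C'[X\triangle S]$ is nonsingular.

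The plan is to reduce everything to the matroid-theoretic characterization already stated in the excerpt: for the base $B$, $C[X]$ is nonsingular if and only if $B\triangle X\in\B$; and for the base $B':=B\triangle S$, the fundamental cocircuit matrix $C'$ satisfies the analogous statement, namely $C'[Y]$ is nonsingular if and only if $B'\triangle Y\in\B$. Applying the second fact with $Y=X\triangle S$ gives that $C'[X\triangle S]$ is nonsingular iff $B'\triangle(X\triangle S)\in\B$. Now the key algebraic identity is $B'\triangle(X\triangle S)=(B\triangle S)\triangle(X\triangle S)=B\triangle X$, since symmetric difference is associative and $S\triangle S=\emptyset$. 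Hence $C'[X\triangle S]$ is nonsingular iff $B\triangle X\in\B$ iff $C[X]$ is nonsingular, which is exactly the claim.

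The one point that needs care is justifying that $C'$ really does satisfy the cocircuit characterization with respect to $B'$, i.e.\ that the characterization ``$C[X]$ nonsingular $\iff$ $B\triangle X\in\B$'' is not special to the particular base $B$ but holds for any base and its fundamental cocircuit matrix. This is immediate from the definition of the fundamental cocircuit matrix and is precisely the content invoked just before Lemma~\ref{lem:pivot} in the excerpt; by Lemma~\ref{lem:pivot}, $C'$ is genuinely the fundamental cocircuit matrix with respect to $B'=B\triangle S$, so the same characterization applies verbatim with $B$ replaced by $B'$. One should also note that $B\triangle S\in\B$ in the first place (otherwise $C'$ is not defined), which is guaranteed because we are in the situation of Lemma~\ref{lem:pivot} where $\alpha=C[S]$ is nonsingular, equivalently $B\triangle S\in\B$.

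I do not expect any real obstacle here: the proof is a two-line application of the cocircuit characterization together with the associativity/cancellation law for symmetric difference. The only thing to be careful about is bookkeeping of which base each characterization is stated against, and making sure the index sets $X\cap B$ versus $X\cap B'$ are handled correctly when passing between $C[X]$ and $C'[X\triangle S]$ — but since the characterization is stated purely in terms of $C[X]:=C[X\cap B,X\setminus B]$ and the membership $B\triangle X\in\B$, this is subsumed in the notation and requires no extra argument.
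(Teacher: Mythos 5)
Your proof is correct and is essentially the paper's own argument: the paper passes through the statement ``the columns of $(I\ C)$ indexed by $X\triangle B$ form a nonsingular matrix,'' which is exactly the base-membership criterion $B\triangle X\in\B$ that you invoke, and the symmetric-difference bookkeeping is identical. Routing the middle step through $\B$ rather than directly through the row-equivalence of $(I\ C)$ and $(I\ C')$ changes nothing of substance.
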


\begin{proof}
Consider the matrix $(I \ C)$ whose column set is equal to $V$. 
Then, $C[X]$ is nonsingular if and only if the columns of $(I \ C)$ indexed by $X \triangle B$ form a nonsingular matrix. 
This is equivalent to that the corresponding columns of $(I \ C')$ form a nonsingular matrix, 
which means that $C'[X \triangle B \triangle (B \triangle S)] = C'[X \triangle S]$ is nonsingular. 
\end{proof}

The algorithm keeps a matrix $C^*$ whose row and column sets are 
$B^*$ and $V^* \setminus B^*$, respectively. The matrix $C^*$ is 
obtained from $C$ by attaching additional rows/columns
corresponding to $T$, and then pivoting around $T$. Thus we have $B^*\cap V=B$.  
In other words, the matrix obtained from $C^*$ by pivoting around $T$ contains $C$ as a submatrix (see (BT1) below).  
If the row and column sets of $C^*$ are clear, for a vertex set 
$X \subseteq V^*$, we denote $C^*[X]=C^*[X \cap B^*, X \setminus B^*]$. 

In our algorithm, 
the matrix $C^*$ satisfies the following properties. 
\begin{description}
\item[(BT1)] 
Let $C'$ be the matrix obtained from $C^*$ by pivoting around $T$. 
Then, $C'[V]$ is the fundamental cocircuit matrix with respect to $B = B^*\cap V$. 
\item[(BT2)] 
Each normal blossom $H_i \in \Lambda_{\rm n}$ satisfies the following.
\begin{itemize}
\item 
If $b_i\in B^*$ and $t_i \in V^* \setminus B^*$, then 
$C^*_{b_it_i} \neq 0$,
$C^*_{b_i v} = 0$ for any $v \in H_i\setminus B^*$ with $v\neq t_i$, and
$C^*_{u t_i} = 0$ for any $u \in B^* \setminus H_i$ with $u\neq b_i$ (see Fig.~\ref{fig:41}).  
\item
If $b_i\in V^* \setminus B^*$ and $t_i \in B^*$, then 
$C^*_{t_i b_i} \neq 0$,
$C^*_{u b_i} = 0$ for any $u \in B^*\cap H_i$ with $u\neq t_i$, and
$C^*_{t_i v} = 0$ for any $v \in (V^*\setminus B^*)\setminus H_i$ with $v\neq b_i$.
\end{itemize}
\end{description}

\begin{figure}[htbp]
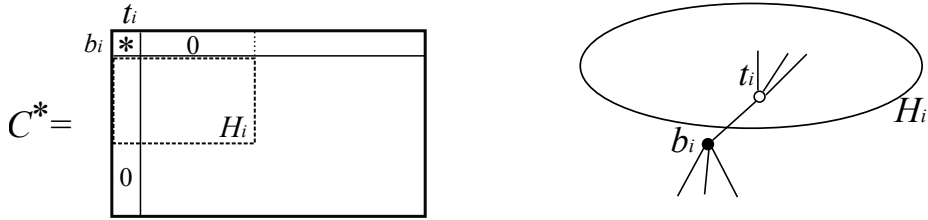

 \begin{minipage}{0.5\hsize}
  \begin{center}
   \includegraphics[width=6cm]{revisionfig01.pdf}
  \end{center}
 \end{minipage}
 \begin{minipage}{0.5\hsize}
  \begin{center}
   \includegraphics[width=5cm]{fig41.pdf}
  \end{center}
 \end{minipage}
    \caption{Illustration of (BT2).  In the right figure, real lines represent nonzero entries of $C^*$.} 
    \label{fig:41} 
\end{figure}

\section{Dual Feasibility}
\label{sec:dual}

In this section, we define feasibility of the dual variables 
and show their properties. 
Our algorithm for the minimum-weight parity base problem is designed so that 
it keeps the dual feasibility. 

Recall that a potential $p:V^*\to\R$, and a nonnegative variable $q:\Lambda\to\R_+$
are called dual variables. 
A blossom $H_i$ is said to be {\em positive} if $q(H_i) >0$. 
For distinct vertices $u, v \in V^*$ and for $H_i \in \Lambda$, 
we say that a pair $(u, v)$ {\em crosses} $H_i$ if $|\{u, v\} \cap H_i| =1$. 
For distinct $u, v \in V^*$, we denote by $I_{uv}$ the set of indices 
$i\in \{1,\ldots,|\Lambda|\}$ such that $(u, v)$ crosses $H_i$. 
We introduce the set $F^*$ of ordered vertex pairs defined by
$$
F^* := \{ (u, v) \mid u\in B^*,\,v\in V^*\setminus B^*,\, C^*_{uv}\neq 0\}.
$$ 
For distinct $u, v \in V^*$, we define
\[
Q_{uv} := \sum_{i \in I_{uv}} q(H_i). 
\]
The dual variables are called {\em feasible with respect to $C^*$ and $\Lambda$} if they satisfy the following.
\begin{description}
\item[(DF1)] $p(v)+p(\bar{v})=w_\ell$ for every line $\ell=\{v,\bar{v}\}\in L$.
\item[(DF2)] $p(v)-p(u)\geq Q_{uv}$ for every $(u,v) \in F^*$.
\item[(DF3)] $p(v)-p(u)=q(H_i)$ for every $H_i\in\Lambda_{\rm n}$ and $(u,v)\in F^*$ 
with $\{u,v\}=\{b_i,t_i\}$.   
\end{description}
If no confusion may arise, we omit $C^*$ and $\Lambda$ when we discuss dual feasibility. 

Note that if $\Lambda = \emptyset$, then 
$F^*$ corresponds to the nonzero entries of $C = C^*$, 
which shows that $(B \setminus \{u\}) \cup \{v\} \in \B$ holds for $(u,v) \in F^*$. 
This implies that (DF2) holds if 
$B\in\B$ is a base minimizing $p(B)=\sum_{u\in B}p(u)$, 
because $Q_{uv} = 0$ for any $(u,v) \in F^*$.  
We also note that (DF3) holds if $\Lambda = \emptyset$. 
Therefore, $p$ and $q$ are feasible if $p$ satisfies (DF1), $\Lambda = \emptyset$, and 
$B\in\B$ minimizes $p(B)=\sum_{u\in B}p(u)$ in $\B$.  
This ensures that the initial setting of the algorithm satisfies the dual feasibility.  

We now show some properties of feasible dual variables.

\begin{lemma}
\label{lem:keyodd}
Suppose that $p$ and $q$ are feasible dual variables. 
Let $X \subseteq V^*$ be a vertex subset such that $C^*[X]$ is nonsingular.
Then, we have 
$$ p(X \setminus B^*) - p(X \cap B^*) \geq 
\sum \{ q(H_i) \mid H_i \in \Lambda,\ \mbox{$|X \cap H_i|$ is odd}\}.$$
\end{lemma}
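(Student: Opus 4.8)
The plan is to prove the inequality by induction on the number of positive blossoms, using Lemma~\ref{lem:pivotsing} to reduce the general case to the base case $\Lambda=\emptyset$ (or more precisely to the case where no blossom is positive). First I would handle the base case: if no $H_i$ has $q(H_i)>0$, then the right-hand side is $0$, and since $C^*[X]$ is nonsingular one can find, by expanding the determinant, a permutation matching each $u\in X\cap B^*$ to some $v\in X\setminus B^*$ with $C^*_{uv}\neq 0$, i.e.\ a perfect matching on $X$ inside the support graph of $C^*$ consisting of pairs in $F^*$. Summing (DF2) over the pairs of this matching gives $p(X\setminus B^*)-p(X\cap B^*)=\sum (p(v)-p(u))\ge \sum Q_{uv}\ge 0$, which is what we want. (Here we use $|X\cap B^*|=|X\setminus B^*|$, which follows from nonsingularity of $C^*[X]$.)

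For the inductive step, pick a positive blossom $H_i$ that is maximal in $\Lambda$ among positive ones, or more conveniently pick $H_i$ minimal; the key move is to pivot around an appropriate set so as to ``remove'' the contribution of $H_i$. Concretely, I would pivot $C^*$ around a set $S$ related to $H_i$ (for a normal blossom, $S$ should be chosen so that after pivoting the bud/tip configuration of $H_i$ is swapped, exploiting (BT2), which says $C^*$ has a near-triangular structure across $\partial H_i$); Lemma~\ref{lem:pivotsing} tells us $C^*[X]$ nonsingular iff $(C^*)'[X\triangle S]$ nonsingular, so the nonsingularity hypothesis is preserved for the pivoted matrix and the set $X':=X\triangle S$. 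Simultaneously I would modify the dual variables by shifting $p$ by $q(H_i)$ on one side of $H_i$ and decreasing $q(H_i)$ to $0$ (a standard "blossom shrinking/expansion" bookkeeping): this keeps (DF1)–(DF3) by the crossing structure of $\Lambda$ and the choice of $H_i$, reduces the number of positive blossoms by one, and changes both sides of the claimed inequality by exactly the same amount, namely $q(H_i)$ when $|X\cap H_i|$ is odd and $0$ otherwise — this is where the parity condition "$|X\cap H_i|$ is odd" enters, since crossing pairs contribute $q(H_i)$ to $Q_{uv}$ exactly when the matching restricted to $X$ meets $\partial H_i$ an odd number of times, which happens iff $|X\cap H_i|$ is odd.

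The main obstacle I expect is verifying that the pivot-and-shift operation genuinely preserves dual feasibility and cleanly subtracts off a single $q(H_i)$ from the right-hand side while leaving the rest of the sum intact. This requires care with the laminar structure of $\Lambda$: shifting $p$ on $H_i$ changes $Q_{uv}$ for pairs crossing blossoms nested inside or containing $H_i$, so one must choose the shift and the pivot set $S$ so that these changes cancel with the change in $p(v)-p(u)$, using (BT2) to control which entries of $C^*$ are nonzero near $\partial H_i$ and (DF3) to pin down the bud–tip relation. An alternative, possibly cleaner, route is to argue directly: take a perfect matching $N$ on $X$ certifying nonsingularity of $C^*[X]$, sum (DF2) over $N$ to get $p(X\setminus B^*)-p(X\cap B^*)\ge \sum_{(u,v)\in N} Q_{uv}=\sum_{H_i\in\Lambda} q(H_i)\cdot |\{(u,v)\in N: (u,v)\text{ crosses }H_i\}|$, and then observe that the number of edges of $N$ crossing $H_i$ has the same parity as $|X\cap H_i|$, hence is $\ge 1$ whenever $|X\cap H_i|$ is odd; this immediately yields the bound without any induction, and I would present that as the primary argument if the sign/support technicalities in choosing $N$ (ensuring every edge of $N$ lies in $F^*$, which needs that the determinant of $C^*[X]$ is a sum of nonzero monomials, each a valid matching) can be dispatched using that $C^*$ arises from $A$ together with the $\tau_\ell$ parameters.
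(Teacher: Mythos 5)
Your ``alternative route'' at the end is exactly the paper's proof, and you should promote it to the primary (indeed only) argument: since $C^*[X]$ is a nonsingular square matrix with rows $X\cap B^*$ and columns $X\setminus B^*$, its determinant expansion contains a nonzero permutation term, which yields a perfect matching $M=\{(u_j,v_j)\}$ with every $C^*_{u_jv_j}\neq 0$, i.e.\ every pair in $F^*$; summing (DF2) over $M$ and noting that $|X\cap H_i|\equiv|\{j: i\in I_{u_jv_j}\}|\pmod 2$ gives the claim, since $q\ge 0$. The ``sign/support technicality'' you worry about is vacuous --- no appeal to the indeterminates $\tau_\ell$ is needed, only that a nonzero determinant has at least one nonzero term. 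The induction-plus-pivoting route you lead with is unnecessary and, as you yourself note, leaves the key step (modifying $p$ and zeroing out $q(H_i)$ while preserving (DF1)--(DF3) and shifting both sides of the inequality by the same amount) unverified; it would be considerably harder to make rigorous than the direct argument, so drop it.
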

\begin{proof} 
Since $C^* [X]$ is nonsingular, there exists a perfect matching 
$M = \{(u_j,v_j)\mid j=1,\ldots,\mu\}$ between $X \cap B^*$ and $X \setminus B^*$ 
such that $u_j \in X \cap B^*$, $v_j \in X \setminus B^*$, and $C^*_{u_j v_j} \not= 0$ for $j=1,\dots,\mu$. 
The dual feasibility implies that $p(v_j)-p(u_j)\geq Q_{u_j v_j}$ for $j=1, \dots , \mu$. 
Combining these inequalities, we obtain 
\begin{equation}\label{eq:dual03} 
p(X \setminus B^*)-p(X \cap B^*) 
             \ge \sum_{j=1}^\mu Q_{u_j v_j} = \sum_{j=1}^\mu 
             \sum_{i \in I_{u_j v_j}} q(H_i). 
\end{equation}
If $|X \cap H_i|$ is odd, there exists an index $j$ such that 
$i \in I_{u_{j} v_{j}}$, which shows that the coefficient of $q(H_i)$ 
in the right hand side of (\ref{eq:dual03}) is at least $1$. 
This completes the proof
\end{proof}

We now consider the tightness of the inequality in Lemma~\ref{lem:keyodd}. 
Let $G^* = (V^*, F^*)$ be the undirected graph with vertex set $V^*$ and edge set $F^*$, where we regard $F^*$ as a set of unordered pairs. 
An edge $(u, v) \in F^*$ with $u \in B^*$ and $v \in V^* \setminus B^*$ 
is said to be {\em tight} if $p(v)-p(u) = Q_{uv}$. 
We say that a matching $M \subseteq F^*$ is {\em consistent with a blossom $H_i \in \Lambda$}
if at most one edge in $M$ crosses $H_i$.  
We say that a matching $M \subseteq F^*$ is {\em tight} if 
every edge of $M$ is tight and $M$ is consistent with every positive blossom $H_i$. 
As the proof of Lemma~\ref{lem:keyodd} clarifies, if there exists a tight perfect matching $M$ in 
the subgraph $G^*[X]$ of $G^*$ induced by $X$, then the inequality of Lemma~\ref{lem:keyodd} is tight. 
Furthermore, in such a case, every perfect matching in $G^*[X]$ must be tight, 
which is stated as follows.

\begin{lemma}
\label{lem:tightmatching}
For a vertex set $X \subseteq V^*$, if $G^*[X]$ has a tight perfect matching, then 
any perfect matching in $G^*[X]$ is tight.  
\end{lemma}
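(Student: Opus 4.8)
The plan is to compare an arbitrary perfect matching $M$ in $G^*[X]$ with a tight perfect matching $N$ in $G^*[X]$ and show that $M$ must itself be tight, exploiting the fact that the weighted degree bound of Lemma~\ref{lem:keyodd} is attained with equality precisely at tight matchings. First I would record the key identity behind Lemma~\ref{lem:keyodd}: for \emph{any} perfect matching $M=\{(u_j,v_j)\}$ in $G^*[X]$ (with $u_j\in B^*$, $v_j\in V^*\setminus B^*$), summing the dual inequalities (DF2) gives
\[
p(X\setminus B^*)-p(X\cap B^*)=\sum_{j}\bigl(p(v_j)-p(u_j)\bigr)\ \ge\ \sum_j Q_{u_jv_j}
=\sum_{i\in\{1,\dots,|\Lambda|\}} q(H_i)\cdot\#\{j : (u_j,v_j)\text{ crosses }H_i\}.
\]
The left-hand side depends only on $X$, not on the choice of $M$. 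For a blossom $H_i$, the number of edges of a perfect matching of $G^*[X]$ crossing $H_i$ has the same parity as $|X\cap H_i|$ (each non-crossing edge uses an even number of vertices of $X\cap H_i$, each crossing edge an odd number), so it is $\ge 1$ whenever $|X\cap H_i|$ is odd; this is exactly what yields the bound in Lemma~\ref{lem:keyodd}.

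Next I would use the hypothesis that a tight perfect matching $N$ exists. For $N$, every edge is tight, so (DF2) holds with equality on each edge, and $N$ is consistent with every positive blossom, so the crossing count $\#\{j:(u_j,v_j)\text{ crosses }H_i\}$ equals exactly $1$ for every positive $H_i$ with $|X\cap H_i|$ odd and equals $0$ for every positive $H_i$ with $|X\cap H_i|$ even (for such a blossom a consistent matching has at most one crossing edge, but the crossing count is even, hence zero). Plugging this into the identity above shows that for $N$ the chain of inequalities is all equalities, i.e.
\[
p(X\setminus B^*)-p(X\cap B^*)=\sum\{q(H_i)\mid H_i\in\Lambda,\ |X\cap H_i|\text{ odd}\}.
\]
So the left-hand side is pinned to this exact value, independently of $M$.

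Now take an arbitrary perfect matching $M$ in $G^*[X]$. Re-running the computation of the first paragraph for $M$ gives
\[
\sum\{q(H_i)\mid |X\cap H_i|\text{ odd}\}
= p(X\setminus B^*)-p(X\cap B^*)
= \sum_j\bigl(p(v_j)-p(u_j)\bigr)
\ \ge\ \sum_j Q_{u_jv_j}
\ \ge\ \sum\{q(H_i)\mid |X\cap H_i|\text{ odd}\},
\]
where the last inequality uses $q\ge 0$, the parity argument (crossing count $\ge 1$ for odd blossoms), and that crossing counts are nonnegative for all other blossoms. Hence both inequalities are equalities. Equality in the first forces $p(v_j)-p(u_j)=Q_{u_jv_j}$ for every $j$, i.e.\ every edge of $M$ is tight. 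Equality in the second, combined with $q(H_i)>0$ for positive blossoms, forces the crossing count of $M$ to equal exactly $1$ for each positive blossom with $|X\cap H_i|$ odd and exactly $0$ for each positive blossom with $|X\cap H_i|$ even; in both cases $M$ has at most one edge crossing $H_i$, so $M$ is consistent with every positive blossom. Therefore $M$ is tight, as claimed.

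The only genuinely delicate point is the parity bookkeeping: one must argue carefully that, for a perfect matching of the induced graph $G^*[X]$, the number of matching edges crossing a fixed $H_i$ has the same parity as $|X\cap H_i|$, and then separate the ``odd'' blossoms (coefficient forced to $1$) from the ``even'' ones (coefficient forced to $0$) when extracting consistency from the equality case. Everything else is a direct squeeze between the two applications of (DF2)-summation, one for the given tight matching $N$ and one for the arbitrary matching $M$.
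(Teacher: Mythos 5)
Your proof is correct and follows exactly the route the paper intends: the paper states this lemma without a separate proof, remarking only that it follows from the argument for Lemma~\ref{lem:keyodd} (a tight perfect matching forces equality there, and then any perfect matching is squeezed). Your write-up supplies precisely that squeeze, including the parity bookkeeping for crossing counts, so it is a faithful and complete elaboration of the paper's argument.
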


When $q(H_i)=0$ for some $H_i\in\Lambda$, one can delete $H_i$ from $\Lambda$ without 
violating the dual feasibility. In fact, removing such a source blossom does not affect 
the dual feasibility, (BT1), and (BT2). If $H_i$ is a normal blossom, 
then apply the pivoting operation around $\{b_i,t_i\}$ to $C^*$, remove $b_i$ and $t_i$ 
from $V^*$, and remove $H_i$ from $\Lambda$. This process is referred to as $\Expand(H_i)$. 
\begin{lemma}
\label{lem:expand}
If $q(H_i)=0$ for some $H_i\in\Lambda_{\rm n}$, 
the dual variables $(p,q)$ remain feasible 
and {\em (BT1)} and {\em (BT2)} hold
after $\Expand(H_i)$ is executed. 
\end{lemma}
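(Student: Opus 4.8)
The plan is to analyze the effect of $\Expand(H_i)$ on the three dual-feasibility conditions separately, using the explicit description of pivoting from Lemma~\ref{lem:pivot} together with the structural properties (BT2) of the bud–tip pair. Throughout, write $C^*$ for the matrix before the operation and $\tilde C$ for the matrix obtained after pivoting around $\{b_i,t_i\}$ and deleting the two rows/columns indexed by $b_i$ and $t_i$; similarly write $\tilde\Lambda=\Lambda\setminus\{H_i\}$, $\tilde V^*=V^*\setminus\{b_i,t_i\}$, and keep $p$ restricted to $\tilde V^*$ and $q$ restricted to $\tilde\Lambda$. First I would check (BT1): by Lemma~\ref{lem:pivot}, pivoting $C^*$ around $\{b_i,t_i\}$ produces a matrix whose principal part on $\tilde V^*$ is exactly $\tilde C$, and since $\{b_i,t_i\}$ is disjoint from $T':=T\setminus\{b_i,t_i\}$, pivoting this around $T'$ commutes with the previous pivot, so pivoting $\tilde C$ around $T'$ yields the same submatrix on $V$ as pivoting $C^*$ around $T$ did — hence (BT1) is preserved. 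The (BT2) conditions for the remaining normal blossoms $H_j$ ($j\neq i$) must be re-examined because their bud/tip entries in $\tilde C$ differ from those in $C^*$; here the key point is that (BT2) for $H_i$ guarantees the row $b_i$ (or $t_i$) and column $t_i$ (or $b_i$) of $C^*$ are supported only on $H_i$, so the rank-one-type updates induced by pivoting around $\{b_i,t_i\}$ only affect entries $C^*_{uv}$ with $u,v$ both appearing in the $H_i$-supported cross, and one checks these do not disturb the vanishing patterns required by (BT2) for $H_j$.

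Next I would verify dual feasibility. For (DF1) there is nothing to check, since $p$ is unchanged on $V$ and (DF1) only concerns lines $\ell\in L$, which are untouched. For (DF2) the delicate issue is that $\tilde F^*$, the support of $\tilde C$, is not simply $F^*$ restricted to $\tilde V^*$: pivoting creates new nonzero entries. The plan is to show that for every $(u,v)\in\tilde F^*$ the inequality $p(v)-p(u)\ge \tilde Q_{uv}$ holds, where $\tilde Q_{uv}=\sum\{q(H_j)\mid j\neq i,\ (u,v)\text{ crosses }H_j\}$. If $(u,v)$ was already in $F^*$ and did not cross $H_i$, then $\tilde Q_{uv}=Q_{uv}$ and the old (DF2) suffices (here $q(H_i)=0$ is irrelevant). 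If $(u,v)\in F^*$ crossed $H_i$, then $Q_{uv}=\tilde Q_{uv}+q(H_i)=\tilde Q_{uv}$ by hypothesis, so again (DF2) carries over. The genuinely new case is a pair $(u,v)\in\tilde F^*\setminus F^*$ created by the pivot: by the formula for $C'$, such an entry arises as $\tilde C_{uv}=C^*_{uv}\pm C^*_{u b_i}(C^*_{b_it_i})^{-1}C^*_{t_iv}$ (taking the case $b_i\in B^*$; the other case is symmetric), so being newly nonzero forces $C^*_{ub_i}\neq 0$ and $C^*_{t_iv}\neq 0$, i.e.\ $(u,b_i),(t_i,v)\in F^*$. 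Then chaining the two (DF2) inequalities $p(t_i)-p(u)\ge Q_{u b_i}$ and $p(v)-p(b_i)\ge Q_{t_i v}$ and adding $p(b_i)-p(t_i)=-q(H_i)=0$ from (DF3) gives $p(v)-p(u)\ge Q_{u b_i}+Q_{t_i v}$; the remaining task is to argue $Q_{ub_i}+Q_{t_iv}\ge \tilde Q_{uv}$, which follows from (BT2): since $(u,b_i)\in F^*$ with $u\neq t_i$ forces (by BT2) $u\notin H_i$, and $(t_i,v)\in F^*$ with $v\neq b_i$ forces $v\in H_i$, every blossom $H_j$ ($j\neq i$) crossed by $(u,v)$ is crossed by at least one of $(u,b_i)$, $(t_i,v)$ (using $b_i\notin H_j$ or $t_i\in H_j$ according to whether $H_j\subsetneq H_i$, $H_i\subseteq H_j$, or $H_i\cap H_j=\emptyset$ — a short laminarity case check). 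Finally (DF3) for the surviving normal blossoms $H_j$: one must show that for any newly tight cross pair $\{b_j,t_j\}$ in $\tilde F^*$ the equation $p(v)-p(u)=q(H_j)$ still holds; but (BT2) for $H_j$ restricts which entries $\tilde C_{b_jt_j}$ or $\tilde C_{t_jb_j}$ can be nonzero, and combined with the pivot formula and (BT2) for $H_i$ one sees that the relevant bud–tip entry of $\tilde C$ is nonzero iff the corresponding entry of $C^*$ was, so the old (DF3) for $H_j$ applies verbatim.

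I expect the main obstacle to be the bookkeeping in the newly-created-entry case of (DF2): correctly identifying, via (BT2) for $H_i$, that the only way pivoting produces a fresh nonzero $\tilde C_{uv}$ is through the single nonzero "spine" entry $C^*_{b_it_i}$ with $u$ outside $H_i$ and $v$ inside $H_i$, and then running the laminarity argument that each blossom crossed by $(u,v)$ is charged to one of the two pieces $(u,b_i)$, $(t_i,v)$ without double counting relative to $\tilde Q_{uv}$. All the other verifications reduce to the commutativity of disjoint pivots (Lemma~\ref{lem:pivot}, applied twice) and direct substitution, so no further machinery beyond what is already in the excerpt is needed.
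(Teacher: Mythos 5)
Your proposal is correct and follows essentially the same route as the paper's proof: (BT1) via commuting disjoint pivots, (DF2) by observing that a newly nonzero entry must arise through the single spine entry $C^*_{b_it_i}$, then chaining two (DF2) inequalities with $p(b_i)=p(t_i)$ (from (DF3) and $q(H_i)=0$) and a laminarity counting bound on the $Q$-terms, and (BT2)/(DF3) by the same case analysis on the relative position of $H_i$ and $H_j$. The only blemish is a notational swap of the roles of $b_i$ and $t_i$ as row/column indices in the case $b_i\in B^*$ (the correction term is $C^*_{ut_i}(C^*_{b_it_i})^{-1}C^*_{b_iv}$, so the forced nonzeros are $(u,t_i)$ and $(b_i,v)$), which does not affect the validity of the argument.
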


\begin{proof}
We only consider the case when $b_i \in B^*$ and $t_i \in V^* \setminus B^*$, 
since we can deal with the case of $b_i \in V^* \setminus B^*$ and $t_i \in B^*$ in the same way. 
Let $C^*$ be the original matrix and $C'$ be the matrix obtained after $\Expand(H_i)$ is executed.  
Let $F^*$ (resp.~$F'$) be the ordered vertex pairs corresponding to the nonzero entries of $C^*$ (resp.~$C'$). 

Suppose that $p$ and $q$ are feasible with respect to $F^*$. 
In order to show that $p$ and $q$ are feasible with respect to $F'$, 
it suffices to consider (DF2), since (DF1) and (DF3) are obvious. 
Suppose that $(u, v) \in F'$. 
If $(u, v) \in F^*$, then $p(v) - p(u) \ge Q_{uv}$ by the dual feasibility with respect to $F^*$. 
Otherwise, we have $(u, v) \in F'$ and $(u, v) \not\in F^*$. 
By Lemma~\ref{lem:pivot}, 
$C'_{uv} = C^*_{uv} - C^*_{u t_i} (C^*_{b_i t_i})^{-1} C^*_{b_i v}$, 
and hence 
$(u, v) \in F'$ and $(u, v) \not\in F^*$ imply that $C^*_{b_i v} \neq 0$ and  $C^*_{u t_i} \neq 0$. 
Then, by the dual feasibility with respect to $F^*$, we obtain 
\begin{align*}
p(v) - p(b_i) &\ge Q_{b_i v}, \\
p(t_i) - p(u) &\ge Q_{u t_i}. 
\end{align*}
Furthermore, 
we have $p(b_i) = p(t_i)$ by (DF3) and 
$Q_{b_i v} + Q_{u t_i} = Q_{b_i v} + Q_{u t_i} + q(H_i) \ge Q_{uv}$. 
By combining these inequalities, 
we obtain $p(v) - p(u) \ge Q_{uv}$. 
This shows that (DF2) holds with respect to $F'$. 

By the definition of $\Expand(H_i)$, it is obvious that $C'$ satisfies (BT1). 

To show (BT2), let $H_j$ be a normal blossom that is different from $H_i$. 
Suppose that $b_j \in B^*$ and $t_j \in V^* \setminus B^*$.
we consider the following cases, separately. 
\begin{itemize}
\item
If $H_j \subseteq H_i$, then 
$C^*_{b_i v} = 0$ for any $v \in H_j \setminus B^*$. 
In particular, $C^*_{b_i t_j} = 0$. 
\item
If $H_i \subseteq H_j$, then 
$C^*_{u t_i} = 0$ for any $u \in B^* \setminus H_j$.
In particular, $C^*_{b_j t_i} = 0$. 
\item
If $H_i \cap H_j = \emptyset$, then 
we have that $C^*_{b_i t_j} = 0$ and $C^*_{b_j t_i} = 0$. 
\end{itemize}
In every case, we have that
$C'_{b_j v} = C^*_{b_j v} - C^*_{b_j t_i} (C^*_{b_i t_i})^{-1} C^*_{b_i v} = C^*_{b_j v}$ for any $v \in H_j \setminus B^*$, and 
$C'_{u t_j} = C^*_{u t_j} - C^*_{u t_i} (C^*_{b_i t_i})^{-1} C^*_{b_i t_j} = C^*_{u t_j}$ for any $u \in B^* \setminus H_j$.  
Therefore, 
$C'_{b_j t_j} = C^*_{b_j t_j} \not= 0$, 
$C'_{b_j v} = C^*_{b_j v} = 0$ for any $v \in H_j \setminus B^*$ with $v \not= t_j$, and 
$C'_{u t_j} = C^*_{u t_j} = 0$ for any $u \in B^* \setminus H_j$ with $u \not= b_j$. 
We can deal with the case when $b_j \in V^* \setminus B^*$ and $t_j \in B^*$ in a similar way. 
This shows that $C'$ satisfies (BT2).   
\end{proof}

\section{Optimality}
\label{sec:optimality}

In this section, we show that
if we obtain a parity base $B$ and feasible dual variables $p$ and $q$, then $B$ is a minimum-weight parity base. 

Note again that although $p$ and $q$ are called dual variables, they do not correspond to 
dual variables of an LP-relaxation of the minimum-weight parity base problem. 
Our optimality proof is based on the algebraic formulation of the problem (Lemma~\ref{lem:Pf}) and the duality of the maximum-weight matching problem.

\begin{theorem}\label{lem:optimality}
If $B := B^*\cap V$ is a parity base and there exist feasible dual variables $p$ and $q$, then $B$ is a minimum-weight parity base. 
\end{theorem}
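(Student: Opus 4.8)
The plan is to combine Lemma~\ref{lem:Pf} with a combinatorial-relaxation estimate of $\deg_\theta\Pf\Phi_A(\theta)$. Write $w(B):=\sum_{\ell\subseteq B}w_\ell$. Since $B$ is a parity base, $w(B)\ge\zeta(A,L,w)=\sum_{\ell\in L}w_\ell-\deg_\theta\Pf\Phi_A(\theta)$, so it suffices to prove
\[
\deg_\theta\Pf\Phi_A(\theta)\ \le\ \sum_{\ell\in L}w_\ell-w(B);
\]
this forces $\zeta(A,L,w)=w(B)$, i.e.\ $B$ is optimal. The whole argument is therefore an upper bound on a Pfaffian degree, obtained — in the spirit of Murota's combinatorial relaxation for polynomial matrices — by a suitable congruence transformation followed by a weighted-matching / LP-duality estimate in which $p$ and $q$ play the role of the dual solution.

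First I would transform $\Phi_A(\theta)$ by a congruence $\Phi_A(\theta)\mapsto S(\theta)\Phi_A(\theta)S(\theta)^\top$, built in three stages, so that the dual variables become visible. Multiplying the $A$-block by $A[U,B]^{-1}$ (possible because $B$ is a base) turns $A$ into $(I\ C)$ with $C=A[U,B]^{-1}A[U,V\setminus B]$ the fundamental cocircuit matrix with respect to $B$; adjoining the bud/tip vertices $T$ and pivoting around $T$, as in Lemma~\ref{lem:pivot} and condition (BT1), replaces $C$ by $C^*$ and brings the blossoms into the picture; finally conjugating by $\mathrm{diag}(\theta^{-p(v)})$ rescales the line blocks so that, by (DF1), every $D$-block entry of a real line becomes a $\theta$-monomial of degree $0$, the dummy-line gadgets attached to the normal blossoms being rescaled using their weights $q(H_i)$ together with condition (DF3). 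Each stage multiplies the Pfaffian by a nonzero monomial $c\theta^{d}$ with an explicit exponent $d$, so the task reduces to bounding $\deg_\theta\Pf$ of the resulting skew-symmetric polynomial matrix $\Phi^*(\theta)$, whose support graph is described by $F^*$ together with the real-line and dummy-line edges. (Note $\Pf\Phi_A(\theta)\neq0$ by hypothesis and Lemma~\ref{lem:Pf}, so this support graph has a perfect matching.)

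Next I would bound $\deg_\theta\Pf\Phi^*(\theta)$ by the maximum weight $\nu^*$ of a perfect matching in the support graph of $\Phi^*(\theta)$ — the degree of a sum is at most the maximum of the degrees of its monomials, and each such monomial corresponds to a perfect matching. Then I would apply the linear programming dual of the maximum-weight perfect matching problem: $\nu^*$ equals the minimum value of a dual solution consisting of node potentials together with nonnegative odd-set potentials. The point is that, read through the transformation above, the feasibility conditions (DF1)–(DF3) say precisely that $p$ furnishes feasible node potentials and $q$ furnishes feasible odd-set potentials — admissible because (BT1) and (BT2) guarantee that every blossom $H_i$ induces a set of odd cardinality and control which edges cross it, so that the dual constraint for an edge $(u,v)$ is exactly $p(v)-p(u)\ge Q_{uv}$ — and the value of this dual solution works out to $\sum_{\ell\in L}w_\ell-w(B)+d$ once the $\lfloor\cdot\rfloor$ factors in the matching-polytope objective are accounted for. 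Hence $\deg_\theta\Pf\Phi_A(\theta)=\deg_\theta\Pf\Phi^*(\theta)-d\le\nu^*-d\le\sum_{\ell\in L}w_\ell-w(B)$, as required.

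The main obstacle is the middle portion: choosing $S(\theta)$ so that (a) $\det S(\theta)$ is a monomial with a correctly tracked exponent; (b) the support graph of $\Phi^*(\theta)$ still has the clean structure needed — its perfect matchings corresponding to transversals of a base of an enlarged matroid together with line/dummy-line pairings — despite the bud/tip gadgets; and (c) the crossing sets $I_{uv}$ and the quantities $Q_{uv}$ of (DF2) match, term by term, the odd-set part of the matching dual. Verifying (c) is where (BT1), (BT2), Lemma~\ref{lem:pivotsing}, and the rigidity underlying Lemma~\ref{lem:tightmatching} all get used, and I expect that this bookkeeping — rather than any single conceptual step — will be the delicate part of the proof.
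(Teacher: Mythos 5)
Your proposal follows essentially the same route as the paper's proof: reduce via Lemma~\ref{lem:Pf} to an upper bound on $\deg_\theta\Pf\Phi_A(\theta)$, pass to the enlarged matrix built from $C^*$ (justified by (BT1)), and bound the degree by weak LP duality for the maximum-weight perfect matching problem on the support graph, with $p$ supplying the node potentials and $q$ the odd-set variables on the sets induced by the blossoms. The only cosmetic difference is that you fold the potentials into the matrix via a $\theta^{-p(v)}$ conjugation, whereas the paper leaves the matrix alone and plugs $p,q$ directly into the matching dual; the bookkeeping you defer (feasibility via (DF1)--(DF3) and the dual objective collapsing to $\sum_{\ell\subseteq V\setminus B}w_\ell$ via (DF3)) is exactly what the paper carries out.
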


\begin{proof}
Since the optimal value of the minimum-weight parity base problem is represented with $\deg_\theta\,\Pf\Phi_A(\theta)$
as shown in Lemma~\ref{lem:Pf}, 
we evaluate the value of $\deg_\theta\,\Pf\Phi_A(\theta)$, 
assuming that we have a parity base $B$ and feasible dual variables $p$ and $q$. 

Recall that $A$ is transformed to $(I \ C)$ by applying row transformations and column permutations, 
where $C$ is the fundamental cocircuit matrix with respect to the base $B$ obtained by $C=A[U,B]^{-1}A[U,V\setminus B]$. 
Note that the identity submatrix gives a one to one correspondence between $U$ and $B$, and 
the row set of $C$ can be regarded as $U$. 
We now apply the same row transformations and column permutations to $\Phi_A(\theta)$, 
and then apply also the corresponding column transformations and row permutations
to obtain a skew-symmetric polynomial matrix $\Phi_A'(\theta)$, that is, 
$$
\Phi_A'(\theta)=\left( 
\begin{array}{c|cc}
O &  I & C \\ \hline
-I  &    &   \\
-C^\top &  \multicolumn{2}{c}{\raisebox{5pt}[0pt][0pt]{\large $D'(\theta)$}} 
\end{array}\right) 
\begin{array}{l}
\leftarrow U \\ 
\leftarrow B \\
\leftarrow V\setminus B,  
\end{array}
$$
where $D'(\theta)$ is a skew-symmetric matrix obtained from $D(\theta)$ 
by applying row and column permutations simultaneously. 
Note that $\Pf\Phi_A'(\theta) = \pm \Pf\Phi_A(\theta)/\det A[U,B]$, where 
the sign is determined by the ordering of $V$. 

We now consider the following skew-symmetric matrix: 
$$\Phi_A^*(\theta)=
\left( 
\begin{array}{c|c|c|c|c}
\multicolumn{2}{c|}{}  &  O & \multicolumn{2}{c}{}   \\ \cline{3-3}
\multicolumn{2}{c|}{\raisebox{7pt}[0pt][0pt]{$O$}} & I  & \multicolumn{2}{c}{\raisebox{7pt}[0pt][0pt]{\quad $C^*$}\quad}   \\ \hline
 O & -I  & \multicolumn{2}{c|}{} & \\ \cline{1-2}
\multicolumn{2}{c|}{} & \multicolumn{2}{c|}{\raisebox{7pt}[0pt][0pt]{$D'(\theta)$}} & \raisebox{7pt}[0pt][0pt]{$O$} \\ \cline{3-5}
\multicolumn{2}{c|}{\raisebox{7pt}[0pt][0pt]{$-{C^*}^\top$}} & \multicolumn{2}{c|}{O} & O  
\end{array}
\right)
\begin{array}{l}
 \\
\raisebox{7pt}[0pt][0pt]{$\leftarrow U^*$ (identified with $B^*$)} \\
\leftarrow B \\
\leftarrow V\setminus B \\ 
\leftarrow T\setminus B^*.
\end{array}
$$
Here, the row and column sets of $\Phi_A^*(\theta)$ are both indexed by $W^* := U^* \cup V \cup (T \setminus B^*)$, 
where $U^*$ is the row set of $C^*$, which can be identified with $B^*$. 
Then, we have the following claim. 

\begin{claim}
It holds that $\deg_\theta\Pf\Phi^*_A(\theta)= \deg_\theta\Pf\Phi'_A(\theta) = \deg_\theta\Pf\Phi_A(\theta)$.
\end{claim}

\begin{proof}
Since $C^*$ satisfies (BT1), 
we can obtain 
$
\left( 
\begin{array}{c|c|c}
O & X & I   \\ \hline
I  & C & O 
\end{array}
\right)
$
from 
$
\left( 
\begin{array}{c|c|c}
O & \multicolumn{2}{c}{}  \\ \cline{1-1}
I & \multicolumn{2}{c}{\raisebox{7pt}[0pt][0pt]{\ $C^*$}}  
\end{array}
\right)
$
by applying elementary row transformations, where $X$ is some matrix. 
Here, the row and column sets are 
$U^*$ and $B \cup (V \setminus B) \cup (T \setminus B^*)$, respectively. 
We apply the same row transformations and their corresponding column transformations to $\Phi_A^*(\theta)$. 
Then, we obtain the following matrix: 
$$\hat \Phi_A(\theta)=
\left( 
\begin{array}{c|c|c|c|c}
\multicolumn{2}{c|}{}  &  O & X & I    \\ \cline{3-5}
\multicolumn{2}{c|}{\raisebox{7pt}[0pt][0pt]{$O$}} & I  & C & O   \\ \hline
 O & -I  & \multicolumn{2}{c|}{} & \\ \cline{1-2}
- X^\top & - C^\top & \multicolumn{2}{c|}{\raisebox{7pt}[0pt][0pt]{$D'(\theta)$}} & \raisebox{7pt}[0pt][0pt]{$O$} \\ \hline
- I & O & \multicolumn{2}{c|}{O} & O  
\end{array}
\right)
\begin{array}{l}
 \\
\raisebox{7pt}[0pt][0pt]{$\leftarrow U^*$ (identified with $B^*$)} \\
\leftarrow B \\
\leftarrow V\setminus B \\ 
\leftarrow T\setminus B^*, 
\end{array}
$$
and hence 
$\deg_\theta\Pf\Phi^*_A(\theta) = \deg_\theta\Pf \hat \Phi_A(\theta)$. 
Since $\Pf \hat \Phi_A(\theta) = \pm \Pf \Phi'_A(\theta)$, 
we have that 
$$
\deg_\theta\Pf\Phi^*_A(\theta) = 
\deg_\theta\Pf \hat \Phi_A(\theta) = 
\deg_\theta\Pf \Phi'_A(\theta) = 
\deg_\theta\Pf\Phi_A(\theta), 
$$
which completes the proof. 
\end{proof}

In what follows, we evaluate $\deg_\theta\Pf\Phi^*_A(\theta)$. 
Construct a graph $\Gamma^*=(W^*,E^*)$ 
with edge set $E^*:=\{(u,v) \mid (\Phi^*_A(\theta))_{u v} \neq 0 \}$.
Each edge $(u,v) \in E^*$ has a weight $\deg_\theta\,(\Phi_A^*(\theta))_{u v}$. 
Then it can be easily seen by the definition of Pfaffian that the maximum weight of a perfect matching in 
$\Gamma^*$ is at least $\deg_\theta \Pf\Phi^*_A(\theta) = \deg_\theta \Pf \Phi_A(\theta)$.  
Let us recall that the dual linear program of the maximum-weight perfect matching problem 
on $\Gamma^*$ is formulated as follows. 
\begin{eqnarray}
\mbox{Minimize} & & \sum_{v\in W^*}\pi(v)-\sum_{Z\in\Omega}\xi(Z) \notag \\
\mbox{subject to} & & \pi(u)+\pi(v)-\sum_{Z \in \Omega_{uv}} \xi(Z) \geq \deg_\theta\,(\Phi_A^*(\theta))_{u v} 
\quad\quad (\forall (u,v)\in E^*), \label{eq:dualmatching}\\
                  & & \xi(Z)\geq 0 \quad\quad\quad (\forall Z\in\Omega), \notag
\end{eqnarray}
where $\Omega=\{Z\mid Z\subseteq W^*,\,\mbox{$|Z|$: odd}, |Z|\geq 3\}$ 
and $\Omega_{uv}=\{Z\mid Z\in\Omega,|Z \cap \{u, v\}| = 1\}$  (see, e.g., \cite[Theorem 25.1]{Sch03}).
In what follows, we construct a feasible solution $(\pi, \xi)$ of this linear program.
The objective value provides an upper bound on the maximum weight of a perfect matching 
in $\Gamma^*$, and consequently serves as an upper bound on $\deg_\theta \Pf\Phi_A(\theta)$. 

Since 
$U^*$ can be identified with $B^*$, 
we can naturally define a bijection $\eta: B^* \to U^*$ between $B^*$ and $U^*$. 
We define $\pi: W^*\to \R$ by
$$\pi(v) = 
\begin{cases}
p(v) & \mbox{if $v \in V \cup (T \setminus B^*)$}, \\
-p(\eta^{-1}(v)) & \mbox{if $v \in U^*$},
\end{cases}
$$
For each $i \in \{1, \dots , \lambda\}$, we introduce  
$Z_i = (H_i \setminus (T \cap B^*)) \cup \eta(H_i \cap B^*) \subseteq W^*$
and set $\xi(Z_i) = q(H_i)$ (see Fig.~\ref{fig:51}). 
Since $H_i$ is of odd cardinality and there is no source line in $G$, 
we see that 
$$|Z_i|=|H_i\setminus (T \cap B^*)|+|H_i\cap B^*|=|H_i|+|H_i\cap B|$$  
is odd and $|Z_i|\geq 3$. 
Define $\xi(Z) = 0$ for any $Z \in \Omega \setminus \{Z_1, \dots , Z_\lambda\}$. 
We now show the following claim. 

 \begin{figure}[htbp]
   \centering
    \includegraphics[width=6cm,clip]{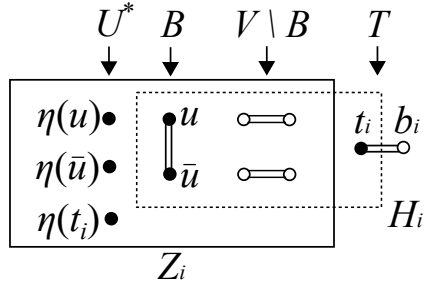}
    \caption{Definition of $Z_i$. Lines and dummy lines are represented by double bonds.} 
     \label{fig:51} 
 \end{figure}

\begin{claim}
The dual variables $\pi$ and $\xi$ defined as above form a feasible solution of the linear program (\ref{eq:dualmatching}).  
\end{claim}

\begin{proof}
Suppose that $(u,v)\in E^*$. 
If $u, v \in V$ and $u = \bar v$, then  
(DF1) shows that $\pi(u) + \pi(v) = p(\bar v) + p(v) = w_\ell = \deg_\theta\,(\Phi_A^*(\theta))_{u v}$, 
where $\ell = \{v, \bar v\}$. 
Since $|Z_i \cap \{v, \bar v\}|$ is even for any $i \in \{1, \dots , \lambda\}$, this shows (\ref{eq:dualmatching}).
If $u \in U$ and $v \in B$, then $(u, v) \in E^*$ implies that $u = \eta(v)$, 
and hence $\pi(u) + \pi(v) = 0$, 
which shows (\ref{eq:dualmatching}) 
as $|Z_i \cap \{u, v\}|$ is even for any $i \in \{1, \dots , \lambda\}$.

The remaining case of $(u,v)\in E^*$ is when 
$u \in U^*$ and $v \in V^*\setminus B^*$.  
That is, it suffices to show that $(u, v)$ satisfies (\ref{eq:dualmatching}) 
if $C^*_{uv} \neq 0$. 
By the definition of $\pi$, we have $\pi(u)+\pi(v) = p(v) - p(u')$, 
where $u'=\eta^{-1}(u)$. By the definition of $Z_i$, we have 
$Z_i\in\Omega_{uv}$ if and only if $i \in I_{u'v}$, which shows that 
$$\sum_{i:\,Z_i\in\Omega_{uv}} \xi(Z_i) = \sum_{i \in I_{u'v}} q(H_i).$$
Since $C^*_{uv} \not= 0$, by (DF2),  we have 
$$p(v) - p(u') \ge Q_{u'v}=\sum_{i \in I_{u'v}} q(H_i).$$  
Thus, we obtain 
$$\pi(u)+\pi(v) - \sum_{i:\, Z_i\in\Omega_{uv}} \xi(Z_i) \ge 0,$$
which shows that $(u, v)$ satisfies (\ref{eq:dualmatching}). 
\end{proof}

The objective value of this feasible solution is 
\begin{eqnarray}\label{eq:opt04}\nonumber
\sum_{v\in W^*}\pi(v)-\sum_{i=1}^\lambda \xi(Z_i) &  
= & \sum_{v\in V \setminus B} p(v) + \sum_{v\in T\setminus B^*}p(v)-\sum_{v\in T\cap B^*}p(v)
-\sum_{i=1}^\lambda\xi(Z_i) \\ 
& = & \sum_{v\in V \setminus B} p(v) = \sum_{\ell\subseteq V\setminus B} w_\ell,
\end{eqnarray}
where the first equality follows from the definition of $\pi$, 
the second one follows from the definition of $\xi$ and (DF3), 
and the third one follows from (DF1). 
By the weak duality of the maximum-weight matching problem, we have 
\begin{align}
\sum_{v\in W^*}\pi(v)-\sum_{i=1}^\lambda \xi(Z_i) 
&\ge 
\mbox{(maximum weight of a perfect matching in $\Gamma^*$)} \notag \\ 
&\ge 
\deg_\theta \Pf\Phi_A^*(\theta) = \deg_\theta \Pf\Phi_A(\theta). \label{eq:opt05}
\end{align} 
On the other hand, Lemma~\ref{lem:Pf} shows that 
any parity base $B'$ satisfies that 
\begin{equation}\label{eq:opt06} 
\sum_{\ell \subseteq B'} w_\ell \ge \sum_{\ell \in L} w_\ell - \deg_\theta \Pf\Phi_A(\theta),
\end{equation} 
Combining (\ref{eq:opt04})--(\ref{eq:opt06}), we have 
$\sum_{\ell\subseteq V\setminus B} w_\ell = \deg_\theta \Pf\Phi_A(\theta)$, which means 
$B$ is a minimum-weight parity base by Lemma~\ref{lem:Pf}. 
\end{proof}

\section{Finding an Augmenting Path}
\label{sec:search}

In this section, we define an augmenting path and present a procedure for finding one. 
The validity of our procedure is shown in Section~\ref{sec:validity}. 

Suppose we are given $V^*$, $B^*$, $C^*$, $\Lambda$, and feasible dual 
variables $p$ and $q$. Let $F^\circ \subseteq F^*$ be the set of 
tight edges, i.e., 
$F^\circ = \{ (u, v) \in F^* \mid u \in B^*,\ v \in V^* \setminus B^*,\ p(v) - p(u) = Q_{uv}\}$. 
Our procedure works primarily on the undirected graph $G^\circ=(V^*, F^\circ)$, 
where we ignore the ordering of the vertices when we regard $F^\circ$ or $F^*$ as an edge set. 
For a vertex set $X \subseteq V^*$, $G^\circ[X]$ denotes the subgraph of $G^\circ$ induced by $X$. 
For $H_i \in \Lambda$, 
define $H^-_i$ as 
$$
H^-_i  = \{ v \in H_i \setminus \{t_i\} \mid \mbox{there is an edge in $F^*$ between $v$ and $V^* \setminus H_i$} \}. 
$$
Here, $\{t_i\}$ is regarded as $\emptyset$ if $H_i \in \Lambda_{\rm s}$. 
This definition shows that we can ignore $H_i \setminus H^-_i$ when we consider edges in $F^*$ (or $F^\circ$) 
connecting $H_i$ and $V^* \setminus H_i$.

Roughly, our procedure finds a part of the augmenting path outside the blossoms.  
The routing in each blossom $H_i$ is determined by a prescribed vertex set $R_{H_i}(x)$ for $x \in H^\bullet_i$, 
where $H^\bullet_i := H^-_i \cup  (H_i \cap V)$.
For any $i \in \{1, \dots , \lambda\}$ and for any $x \in H^\bullet_i$, 
the prescribed vertex set $R_{H_i}(x) \subseteq H_i$ is assumed to satisfy the following. 
\begin{description}
\item[(BR1)]
$x \in R_{H_i}(x) \subseteq H_i$. 
\item[(BR2)]
If $H_i \in \Lambda_{\rm n}$, then $R_{H_i}(x)$ consists of lines, dummy lines, and the tip $t_i$.  
If $H_i \in \Lambda_{\rm s}$, then $R_{H_i}(x)$ consists of lines, dummy lines, and a source vertex. 
\item[(BR3)]
For any $H_j \in \Lambda_{\rm n}$ with $R_{H_i}(x) \cap H_j \not= \emptyset$ and $H_j \subsetneq H_i$,
it holds that $\{b_j, t_j \} \subseteq R_{H_i}(x)$.
\end{description}
See Fig.~\ref{fig:revisionfig04} for an example of $R_{H_i}(x)$.
We sometimes regard $R_{H_i}(x)$ as a sequence of vertices, and in such a case, 
the last two vertices are $\bar x x$. 
We also suppose that the first vertex of $R_{H_i}(x)$ is $t_i$ if $H_i \in \Lambda_{\rm n}$ 
and the unique source vertex in $R_{H_i}(x)$ if $H_i \in \Lambda_{\rm s}$.  
Each blossom $H_i \in \Lambda$ is assigned a total order $<_{H_i}$ 
among all the vertices in $H^\bullet_i$. In the procedure, $R_{H_i}(x)$ keeps 
additional properties which will be described in Section~\ref{sec:propertyR}. 

\begin{figure}[htbp]
  \centering
    \includegraphics[width=7cm]{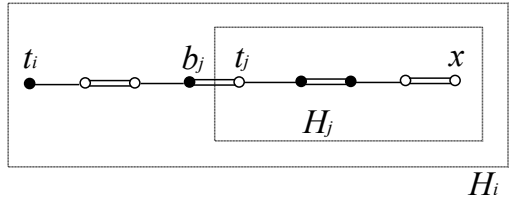}
\vspace*{-0.4cm}
      \caption{An example of $R_{H_i}(x)$.} 
    \label{fig:revisionfig04} 
\end{figure}

We say that a vertex set $P \subseteq V^*$ is an augmenting path
if it satisfies the following properties. 
\begin{description}
\item[(AP1)]
$P$ consists of normal lines, dummy lines, and two vertices from distinct source lines. 
\item[(AP2)]
For each $H_i \in \Lambda$, either $P \cap H_i = \emptyset$ or $P \cap H_i = R_{H_i}(x_i)$ 
for some $x_i \in H^\bullet_i$. 
\item[(AP3)]
$G^\circ[P]$ has a unique tight perfect matching. 
\end{description}

By (AP1), (AP2), and (BR2), we have the following observation. 
\begin{observation}
\label{obs:budtipinP}
For an augmenting path $P$ and for each $H_i \in \Lambda_{\rm n}$ with $P \cap H_i \not= \emptyset$, 
it holds that $\{b_i, t_i\} \subseteq P$. 
\end{observation}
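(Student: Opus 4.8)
The plan is to unwind the relevant definitions; no substantive argument is needed. First I would fix a normal blossom $H_i \in \Lambda_{\rm n}$ with $P \cap H_i \neq \emptyset$. Since $P \cap H_i \neq \emptyset$, property (AP2) forces $P \cap H_i = R_{H_i}(x_i)$ for some $x_i \in H^\bullet_i$. Then (BR2), applied in the case $H_i \in \Lambda_{\rm n}$, tells us that $R_{H_i}(x_i)$ contains the tip $t_i$, so $t_i \in P$.

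The remaining step is to upgrade $t_i \in P$ to $\{b_i, t_i\} \subseteq P$ using the global structure of $P$. By (AP1), $P$ is a disjoint union of normal lines, dummy lines, and a two-element set formed by one vertex from each of two distinct source lines. All normal lines and all source lines are subsets of $V$, whereas $t_i \in T = V^* \setminus V$; hence $t_i$ can lie in none of these pieces and must therefore belong to one of the dummy lines that constitute $P$. As the only dummy line containing $t_i$ is $\{b_i, t_i\}$ (recall the convention $\bar t_i = b_i$), we conclude $\{b_i, t_i\} \subseteq P$, and in particular $b_i \in P$.

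I do not expect any real obstacle here: the statement is a direct bookkeeping consequence of (AP1), (AP2), and (BR2). The one point deserving a moment of attention is that $t_i \notin V$, which is exactly what rules out $t_i$ sitting inside a normal line or a source line in the (AP1)-decomposition of $P$, thereby pinning it to the dummy line $\{b_i, t_i\}$.
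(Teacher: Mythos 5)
Your proposal is correct and follows exactly the route the paper intends: the paper derives this observation directly from (AP1), (AP2), and (BR2) without further argument, and your write-up simply spells out that chain — (AP2) gives $P \cap H_i = R_{H_i}(x_i)$, (BR2) puts $t_i$ in it, and (AP1) together with $t_i \notin V$ forces the whole dummy line $\{b_i,t_i\}$ into $P$. No issues.
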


In the rest of this section, we describe how to find an augmenting path. 
Section~\ref{sec:searchproc} is devoted to the search procedure, which 
calls two procedures: $\Blossom$ and $\DBlossom$. 
The details of these procedures are described in Sections~\ref{sec:createblossom} and~\ref{sec:graft}, respectively. 
In Section~\ref{sec:basicpro}, 
we show that the procedure keeps some conditions.

\subsection{Search Procedure}
\label{sec:searchproc}

In this subsection, we describe a procedure for searching for an augmenting path. 
The procedure performs the breadth-first search using a queue to grow paths from 
source vertices. A vertex $v\in V^*$ is labeled 
and put into the queue when it is reached by the search. 
The procedure picks the first labeled element from the queue, 
and examines its neighbors. 
A linear order $\prec$ is defined on the labeled vertex set so that 
$u \prec v$ means $u$ is labeled prior to $v$. 

For each $x\in V^*$, 
we define $K(x) = H_i \cup \{b_i\}$ if there exists a maximal blossom $H_i$ such that 
$H_i$ is a normal blossom with $x \in H_i \cup \{b_i\}$, and 
define $K(x) = H_i$ if there exists a maximal blossom $H_i$ such that 
$H_i$ is a source blossom with $x \in H_i$. 
If such a blossom does not exist, then 
it is called {\em single} and we denote $K(x) = \{x, \bar x\}$. 
The procedure also labels some blossoms with $\oplus$ or $\ominus$, 
which will be used later for modifying dual variables.
With each labeled vertex $v$, the procedure associates a path $P(v)$ and its subpath $J(v)$,  
where a path is a sequence of vertices. 
The first vertex of $P(v)$ is a 
labeled vertex in a source line and the last one is $v$. 
The reverse path of $P(v)$ is denoted by $\overline{P(v)}$. 
For a path $P(v)$ and a vertex $r$ in $P(v)$, we denote by $P(v|r)$ 
the subsequence of $P(v)$ after $r$ (not including $r$). 
We sometimes identify a path with its vertex set. 
When an unlabeled vertex $u$ is examined in the procedure, 
we assign a vertex $\rho(u)$ and a path $I(u)$. 
Roughly, $\rho(u)$ is a neighbor of $u$ such that  
$u$ is examined when we pick up $\rho(u)$ from the queue.  
Paths $I(u)$ and $J(v)$, where $u$ is an unlabeled vertex and $v$ is a labeled vertex, 
are used to decompose a search path as we will see in Lemma~\ref{lem:dec} later. 
Roughly, $I(u)$ and $J(v)$ represent ``fractions'' of the search path containing $u$ and $v$, respectively.
The procedure is described as follows.

\begin{description}
\item[Procedure] $\Search$ 
\item[Step 0:]
Initialize the objects so that 
the queue is empty, every vertex is unlabeled, and 
every blossom is unlabeled.  

\item[Step 1:] 
While there exists an unlabeled single vertex $x$ in a source line, 
label $x$ with $P(x):= J(x) := x$ and
put $x$ into the queue. 
While there exists a source line $\{x, \bar x\}$ such that $K(x)=K(\bar x)=\{x, \bar x\}$ and $x$ is adjacent to $\bar x$ in $G^\circ$, 
add a new source blossom $H=\{x, \bar x\}$ to $\Lambda$, label $H$ with $\oplus$, 
and define $R_H(x) := x$ and $R_H(\bar x) := \bar x$. 
While there exists an unlabeled maximal source blossom $H_i \in \Lambda_{\rm s}$, 
label $H_i$ with $\oplus$ and do the following: 
for each vertex $x \in H^\bullet_i$ in the order of $<_{H_i}$, 
label $x$ with $P(x):=J(x):=R_{H_i}(x)$ and put $x$ into the queue. 

\item[Step 2:] If the queue is empty, then 
return $\emptyset$ and terminate the procedure (see Section~\ref{sec:dualupdatealgo}). 
Otherwise, remove the first element $v$ from the queue. 
\item[Step 3:] While there exists a labeled vertex $u$ adjacent to $v$ in $G^\circ$ 
with $K(u) \not= K(v)$, choose such $u$ that is minimum with respect to $\prec$ and 
do the following (3-1) and (3-2) (see Fig.~\ref{fig:revisionfig05}). 
\begin{description}
\item[(3-1)] If the first elements in $P(v)$ and in $P(u)$ belong to different source lines,
then return $P:= P(v) \overline{ P(u)}$ as an augmenting path. 
\item[(3-2)] Otherwise, apply $\Blossom(v,u)$  
to add a new blossom to $\Lambda$.  
\end{description}
\item[Step 4:] While there exists an unlabeled vertex $u$ adjacent to $v$ in $G^\circ$ 
with $K(u) \not= K(v)$
such that $\rho(u)$ is not assigned, do the following (4-1)--(4-3). 
\begin{description} 
\item[(4-1)] 
If $K(u) = \{u, \bar u\}$, then 
label $\bar{u}$ with $P(\bar{u}):=P(v)u\bar{u}$ and $J(\bar{u}) := \{\bar u\}$, 
set $\rho(u) := v$ and $I(u) := \{u\}$, and put $\bar{u}$ into the queue (see Fig.~\ref{fig:revisionfig06}). 
Furthermore, if $(v, \bar u) \in F^\circ$, then apply $\Blossom(\bar u,v)$. 
\item[(4-2)] 
If $K(u) = H_i \cup \{b_i\}$ for some $H_i \in \Lambda_{\rm n}$ and 
$(v, b_i) \in F^\circ$, then 
apply $\DBlossom(v,H_i)$ (see Fig.~\ref{fig:revisionfig09}). 
\item[(4-3)] 
If $K(u) = H_i \cup \{b_i\}$ for some $H_i \in \Lambda_{\rm n}$ and  
$(v, b_i) \not\in F^\circ$, 
then choose $y \in H^\bullet_i$ with $(v, y) \in F^\circ$
that is minimum with respect to $<_{H_i}$, and do the following.\footnote{Such $y$ always exists, because $u$ satisfies the condition.}  
Label $H_i$ with $\ominus$, 
label $b_i$ with $P(b_i):=P(v) \overline{R_{H_i}(y)}b_i$ and $J(b_i) := \{b_i\}$, 
and put $b_i$ into the queue. For each unlabeled vertex $x \in H^\bullet_i$, 
set $\rho(x) := v$ and $I(x) := \overline{R_{H_i}(x)}$ (see Fig.~\ref{fig:revisionfig10}).  
\end{description}
\item[Step 5:] Go back to Step 2.  
\end{description}

\begin{figure}[htbp]
  \centering
    \includegraphics[width=11cm]{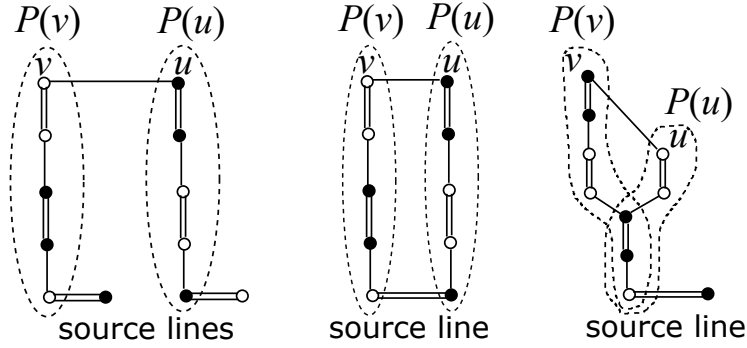}
      \caption{Illustrations of Step 3. We apply (3.1) for the leftmost case, and apply (3.2) for the other cases.}
    \label{fig:revisionfig05} 
\end{figure}

\begin{figure}[htbp]
 \begin{minipage}{0.32\hsize}
  \begin{center}
   \includegraphics[width=21mm]{revisionfig06.pdf}
  \end{center}
\vspace{-12pt}
  \caption{Step (4-1).}
  \label{fig:revisionfig06}
 \end{minipage}
 \begin{minipage}{0.33\hsize}
  \begin{center}
   \includegraphics[width=33mm]{revisionfig09.pdf}
  \caption{Step (4-2).}
  \label{fig:revisionfig09}
  \end{center}
 \end{minipage}
 \begin{minipage}{0.32\hsize}
  \begin{center}
   \includegraphics[width=45mm]{revisionfig10.pdf}
  \caption{Step (4-3).}
  \label{fig:revisionfig10}
  \end{center}
 \end{minipage}
\end{figure}

\subsection{Creating a Blossom}
\label{sec:createblossom}

In this subsection, we describe procedure $\Blossom$ that creates a new blossom, 
which is called in Steps (3-2) and (4-1) of $\Search$.

\begin{description}
\item[Procedure] $\Blossom(v,u)$ 
\item[Step 1:] 
Let $c$ be the last vertex in $P(v)$ such that $K(c)$ contains a vertex in $P(u)$. 
Let $d$ be the last vertex in $P(u)$ contained in $K(c)$. 
Note that $K(c) = K(d)$. 
If $c=d$, then define $H := \bigcup \{K (x) \mid x \in P(v|c) \cup P(u|d) \}$ and $r := c$.
If $c\not=d$, then define $H := \bigcup \{K (x) \mid x \in P(v|c) \cup P(u|d) \cup \{c\} \}$ 
and let $r$ be the last vertex in $P(v)$ not contained in $H$ if exists. 
See Fig.~\ref{fig:61} for an example.

\begin{figure}[htbp]
  \centering
    \includegraphics[width=8cm]{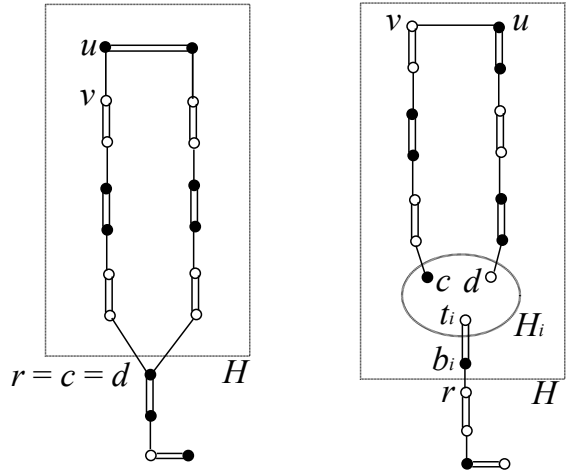}
    \caption{Definition of $H$.} 
    \label{fig:61} 
\end{figure}

\item[Step 2:]
If $H$ contains no source line, then define $g$ to be the vertex subsequent to $r$ in $P(v)$, 
introduce new vertices $b$ and $t$, namely $V^*:=V^*\cup\{b,t\}$, 
and add $t$ to $H$, namely $H:=H\cup\{t\}$. 
Update $B^*$, $C^*$, and $p$ as follows (see Fig.~\ref{fig:revisionfig13}). 
\begin{itemize}
\item 
If $r\in B^*$ and $g\in V^*\setminus B^*$, then  
$B^*:=B^*\cup\{b\}$, $C^*_{bt}:=C^*_{rg}$, $C^*_{by}:=C^*_{ry}$ for $y\in H\setminus B^*$, 
$C^*_{by}:=0$ for $y\in (V^*\setminus B^*)\setminus H$,  
$C^*_{xt}:=C^*_{xg}$ for $x\in B^*\setminus H$,  
$C^*_{xt}:=0$ for $x\in B^*\cap H$, and $p(b):=p(t):=p(r)+Q_{rb}$. 
\item 
If $r\in V^*\setminus B^*$ and $g\in B^*$, then  
$B^*:=B^*\cup\{t\}$, $C^*_{tb}:=C^*_{gr}$, $C^*_{xb}:=C^*_{xr}$ for $x\in B^*\cap H$,  
$C^*_{xb}:=0$ for $x\in B^*\setminus H$, 
$C^*_{ty}:=C^*_{gy}$ for $y\in (V^*\setminus B^*)\setminus H$, 
$C^*_{ty}:=0$ for $y\in H\setminus B^*$, and $p(b):=p(t):=p(r)-Q_{rb}$. 
\item Apply the pivoting operation around $\{b,t\}$ to $C^*$, namely $B^*:=B^*\triangle\{b,t\}$, 
and update $F^*$ accordingly. 
\end{itemize}

\begin{figure}[htbp]
  \centering
    \includegraphics[width=13cm]{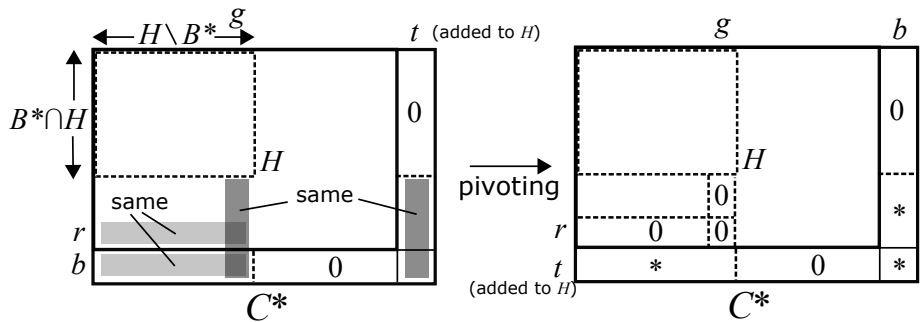}
      \caption{Definition of $C^*$. By the definition, $C^*_{ry}=0$ for $y \in H \setminus B^*$ and $C^*_{xg}=0$ for $x \in B^* \setminus H$ after the pivoting operation (see Lemma~\ref{lem:72}).} 
    \label{fig:revisionfig13} 
\end{figure}

\item[Step 3:]
If $H$ contains no source line, then for each labeled vertex $x$ with 
$P(x) \cap H\not= \emptyset$, replace $P(x)$ by $P(x) := P(r) b t P(x|r)$. 
Label $t$ with $P(t) := P(r)bt$ and $J (t) := \{t\}$, and
extend the ordering $\prec$ of the labeled vertices
so that $t$ is just after $r$, i.e., 
$r \prec t$ and no element is between $r$ and $t$. 
For each vertex $x \in H$ with $\rho(x) = r$, update $\rho(x)$ as $\rho(x) := t$.
Set $\rho(b) := r$ and $I(b) := \{b\}$ (see Fig.~\ref{fig:revisionfig11}).

\begin{figure}[htbp]
  \centering
    \includegraphics[width=8cm]{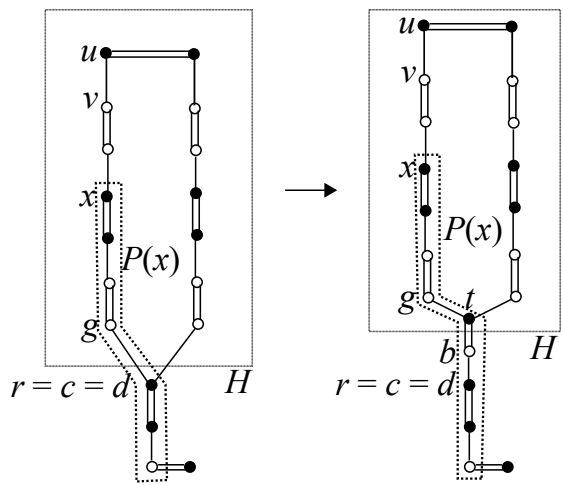}
    \caption{Update of $P(x)$.} 
    \label{fig:revisionfig11} 
\end{figure}

\item[Step 4:]
For each unlabeled vertex $x\in H^\bullet$, label $x$  as follows.
\begin{enumerate}
\item[(i)]
If $K(x) = \{x, \bar x\}$ and $x \in P(u|d)$, then $P(x):=P(v)\overline{P(u|x)}x$. 
\item[(ii)]
If $K(x) = \{x, \bar x\}$ and $x \in P(v|c)$, then $P(x):=P(u)\overline{P(v|x)}x$. 
\item[(iii)]
If $K(x) = H_i \cup \{b_i\}$ for some $H_i \in \Lambda_{\rm n}$ labeled with $\oplus$ such that $x=b_i$ and $x \in P(u|d)$, then $P(x):=P(v)\overline{P(u|x)}x$. 
\item[(iv)]
If $K(x) = H_i \cup \{b_i\}$ for some $H_i \in \Lambda_{\rm n}$ labeled with $\oplus$ such that $x=b_i$ and $x \in P(v|c)$, then $P(x):=P(u)\overline{P(v|x)}x$. 
\item[(v)]
If $K(x) = H_i \cup \{b_i\}$ for some $H_i \in \Lambda_{\rm n}$ labeled with $\ominus$ such that $x \in H^\bullet_i$ and $t_i \in P(u|d)$, then $P(x):=P(v)\overline{P(u|t_i)} R_{H_i}(x)$. 
\item[(vi)]
If $K(x) = H_i \cup \{b_i\}$ for some $H_i \in \Lambda_{\rm n}$ labeled with $\ominus$ such that $x \in H^\bullet_i$ and $t_i \in P(v|c)$, then $P(x):=P(u)\overline{P(v|t_i)} R_{H_i}(x)$. 
\end{enumerate}
Define $J(x) := P(x|t)$ and put $x$ into the queue (see Fig.~\ref{fig:revisionfig12}). 
Here, we choose the vertices in the ordering such that the following conditions hold. 
\begin{itemize}
\item
For two unlabeled vertices $x,y\in H^\bullet$, 
if $\rho(x) \succ \rho(y)$, then we choose $x$ prior to $y$. 
\item
For two unlabeled vertices $x,y\in H^\bullet$, 
if $\rho(x) = \rho(y)$, $K(x) = K(y) = H_i \cup \{b_i\}$, and $x <_{H_i} y$, then
we choose $x$ prior to $y$.
\item
If $r=c=d\neq u$ holds, then no element is chosen 
between $g$ and $h$, where $h$ is the vertex subsequent 
to $t$ in $P(u)$. Note that this condition makes 
sense only when $K(g)$ or $K(h)$ corresponds to 
a blossom labeled with $\ominus$.  
\end{itemize}

\begin{figure}[htbp]
  \centering
    \includegraphics[width=4.5cm]{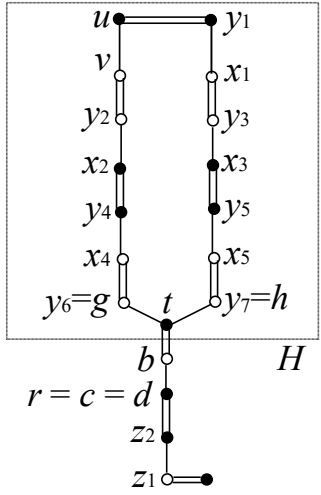}
    \caption{Illustration of Step 4 of $\Blossom(v,u)$. In this example, we define $P(y_3) = z_1 z_2 r b t y_6 x_4 y_4 x_2 y_2 v u y_1 x_1 y_3$. 
When $x_1 \succ x_2 \succ \dots \succ x_5 \succ t$, we choose $y_1, y_2, \dots , y_5, y_6, y_7$ (or $y_1, y_2, \dots , y_5, y_7, y_6$) in this order. 
} 
    \label{fig:revisionfig12} 
\end{figure}

\item[Step 5:]
Label $H$ with $\oplus$. 
Define $R_{H} (x) := P(x|b)$ for each $x \in H^\bullet$ 
if $H$ contains no source line, and define  
$R_{H} (x) := P(x)$ for each $x \in H^\bullet$ 
if $H$ contains a source line.   
Define $<_{H}$ by the ordering $\prec$ of the labeled vertices in $H^\bullet$.  
Add $H$ to $\Lambda$ with $q(H)=0$ regarding $b$ and $t$, if exist, 
as the bud and the tip of $H$, respectively, 
and update $\Lambda_{\rm n}$, 
$\Lambda_{\rm s}$, $\lambda$, $G^\circ$, and $K (y)$ for $y \in V^*$, accordingly. 
\end{description}
We note that, for any $x \in V^*$,  
if $J(x)$ (resp.~$I(x)$) is defined, then it is equal to  
either $\{x\}$ or $R_{H_i}(x)$ (resp.~either $\{x\}$ or $\overline{R_{H_i}(x)}$) 
for some $H_i \in \Lambda$. 
In particular, the last element of $J(x)$ and the first element of $I(x)$ are $x$. 
We also note that $J(x)$ and $I(x)$ are not used in the procedure explicitly, 
but we introduce them to show the validity of the procedure.

\subsection{Grafting a Blossom}
\label{sec:graft}

In this subsection, we describe $\DBlossom$ that replaces a blossom with another blossom, 
which is called in Step (4-2) of $\Search$. 
See Fig.~\ref{fig:revisionfig15} for an illustration.

\begin{description}
\item[Procedure] $\DBlossom(v,H_i)$ 

\item[Step 1:]
Set $H := H_i \cup \{b_i\}$, where $H_i$ is a normal blossom. 
Introduce new vertices $b$ and $t$ in the same say as Step 2 of $\Blossom(v, u)$ 
with $r:=v$ and $g:=b_i$, add $t$ to $H$, and 
apply the pivoting operation around $\{b,t\}$ to $C^*$. 
Label $t$ with $P(t) := P(v)b t$ and $J (t) := \{ t\}$, 
and extend the ordering $\prec$ of the labeled vertices
so that $t$ is just after $v$, i.e., 
$v \prec t$ and no element is between $v$ and $t$.
Set $\rho(b) := v$ and $I(b) := \{b\}$. 

\item[Step 2:]
For each vertex $x\in H^\bullet_i$ in the order of $<_{H_i}$, 
label $x$ with $P(x) := P(v) b t b_i R_{H_i}(x)$ and 
$J(x) := t b_i R_{H_i}(x)$, and put $x$ into the queue.

\item[Step 3:]
Label $H$ with $\oplus$. 
Define $R_{H} (x) := P(x|b)$ for each $x \in H^\bullet$. 
Define $<_{H}$ by the ordering $\prec$ of the labeled vertices in $H^\bullet$.  
Add $H$ to $\Lambda$ with $q(H)=0$ regarding $b$ and $t$ as the bud and the tip of $H$, 
respectively, 
and update $\Lambda_{\rm n}$, $\lambda$, $G^\circ$, and 
$K (y)$ for $y \in V^*$, accordingly.

\item[Step 4:]
Set $\epsilon := q(H_i)$ and modify the dual variables by  
$q(H_i) := 0$, $q(H) := \epsilon$, 
\begin{align*}
p(b_i) &:= 
\begin{cases}
p(b_i) - \epsilon & \mbox{if $b_i \in V^* \setminus B^*$}, \\
p(b_i) + \epsilon & \mbox{if $b_i \in B^*$}, 
\end{cases} \\
p(t) &:= 
\begin{cases}
p(t) - \epsilon & \mbox{if $t \in B^*$}, \\
p(t) + \epsilon & \mbox{if $t \in V^* \setminus B^*$}.
\end{cases}
\end{align*}
Apply $\Expand(H_i)$ to delete $H_i$ from $\Lambda$, and set $H:=H\setminus\{b_i,t_i\}$.  
For each vertex $x$, delete $b_i$ and $t_i$ from  $P(x)$, $R_H(x)$, and $J(x)$. 
\end{description}

\begin{figure}[htbp]
  \centering
    \includegraphics[width=13cm]{revisionfig15.pdf}
      \caption{Illustration of $\DBlossom(v,H_i)$.} 
    \label{fig:revisionfig15}
\end{figure}

We note that Step 4 of $\DBlossom(v,H_i)$ is executed to keep the condition $H_i \cap V \not = H_j \cap V$ for distinct $H_i, H_j \in \Lambda$.

\subsection{Basic Properties}
\label{sec:basicpro}

For better understanding of the pivoting operations in $\Blossom(v,u)$ and $\DBlossom(v,H_i)$, 
we give several lemmas in this subsection. 
Then, we show that 
the conditions (BT1), (BT2), and (DF1)--(DF3) hold in the search procedure. 

\begin{lemma}
\label{lem:72}
Suppose that $\Blossom(v, u)$ or {\em Steps 1--3} of $\DBlossom(v,H_i)$ have created a new blossom $H$ containing no source line. 
Then the following conditions hold after the pivoting operation: 
\begin{itemize}
\item
$b$ and $t$ satisfy the conditions in {\em (BT2)}, 
\item
there is no edge in $F^*$ between $r$ and 
$H$, and
\item
there is no edge in $F^*$ between $g$ and 
$V^* \setminus H$. 
\end{itemize}
\end{lemma}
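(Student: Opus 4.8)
The plan is to verify Lemma~\ref{lem:72} by unwinding the explicit formulas for the new rows/columns $b$ and $t$ given in Step~2 of $\Blossom(v,u)$ (equivalently Step~1 of $\DBlossom$), applying Lemma~\ref{lem:pivot} to compute the effect of pivoting around $\{b,t\}$, and then checking each of the three bullet points directly. I will treat the case $r\in B^*$, $g\in V^*\setminus B^*$ (so before the final pivot $b\in B^*$, $t\in V^*\setminus B^*$); the other case is symmetric by transposition. Write $C$ for the matrix just before the pivot around $\{b,t\}$ and $C'$ for the matrix afterward.

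\medskip

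\noindent\emph{Step 1: Identify the relevant $2\times 2$ block and its complement.} By construction $C_{bt}=C_{rg}\neq 0$ (note $(r,g)\in F^\circ\subseteq F^*$ since $g$ is the successor of $r$ on the tight path $P(v)$, so $C_{rg}\neq0$), while $C_{bv}=C_{rv}$ for $v\in H\setminus B^*$, $C_{bv}=0$ for $v\notin H$, $C_{ut}=C_{ug}$ for $u\in B^*\setminus H$, and $C_{ut}=0$ for $u\in H\cap B^*$. Pivoting around the single pair $\{b,t\}$ is the Schur-complement update of Lemma~\ref{lem:pivot} with $\alpha=(C_{bt})$: it leaves $C'_{bt}=C_{bt}^{-1}\neq0$, sets $C'_{bv}=C_{bt}^{-1}C_{bv}$ for all columns $v$, sets $C'_{ut}=-C_{ut}C_{bt}^{-1}$ for all rows $u$, and updates the remaining entries by $C'_{uv}=C_{uv}-C_{ut}C_{bt}^{-1}C_{bv}$.

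\medskip

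\noindent\emph{Step 2: Check (BT2) for the new blossom.} After the pivot we have $b\in V^*\setminus B^*$ and $t\in B^*$, so the relevant clause of (BT2) is the second one. We need $C'_{tb}\neq0$, which holds since $C'_{tb}=C'_{bt}=C_{bt}^{-1}\neq0$; $C'_{ub}=0$ for $u\in B^*\cap H$ with $u\neq t$, which follows from $C'_{ub}=C_{bt}^{-1}C_{bu}$ and $C_{bu}=C_{ru}$ or $0$ — here I need that $C_{ru}=0$ for $u\in(H\cap B^*)\setminus\{t\}$, which I expect follows because $r\notin H$ and $H\cap B^*$ consists of columns... (this is the point requiring care, see below); and $C'_{tv}=0$ for $v\in(V^*\setminus B^*)\setminus H$ with $v\neq b$, which follows from $C'_{tv}=-C_{ut}|_{u\leftarrow\text{row }t}$... more precisely $C'_{tv}$ is computed from the row $t$, which before the pivot had $C_{tv}$; I claim $C_{tv}=0$ for $v\notin H$, $v\neq b$, by the assignment $C_{ty}:=0$ pattern. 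So (BT2) reduces to the same bookkeeping as the other two bullets.

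\medskip

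\noindent\emph{Step 3: The two ``no edge'' claims.} For ``no edge in $F^*$ between $r$ and $H$'': for a column $v\in H\setminus B^*$, $C'_{rv}=C_{rv}-C_{rt}C_{bt}^{-1}C_{bv}=C_{rv}-C_{rg}C_{rg}^{-1}C_{rv}=0$ since $C_{rt}=C_{rg}$ (the column $t$ agrees with column $g$ on rows of $B^*\setminus H$, and $r\in B^*\setminus H$) and $C_{bv}=C_{rv}$. For $v\in H\cap B^*$ there is simply no entry $C'_{rv}$ to worry about unless we also mean $v=t$, but $C'_{rt}=-C_{rt}C_{bt}^{-1}=-1\cdot\ldots$; wait — $C'_{rt}=-C_{rg}C_{bt}^{-1}=-1$ is nonzero, so I must be careful that $t\notin H$ in the statement's sense, or rather the claim ``between $r$ and $H$'' refers to $H$ as it stands which does contain $t$; so I need instead to observe that the matching edge $(r,\dots)$... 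Actually the intended reading, matching Fig.~\ref{fig:revisionfig13}'s caption, is $C^*_{ry}=0$ for $y\in H\setminus B^*$ and $C^*_{xg}=0$ for $x\in B^*\setminus H$ after pivoting; I will prove exactly those two identities, which are the displayed computations above, and note that $r\in B^*\setminus H$, $g\in(V^*\setminus B^*)\setminus H$. The symmetric ``no edge between $g$ and $V^*\setminus H$'' claim: for a row $u\in B^*\setminus H$, $C'_{ug}=C_{ug}-C_{ut}C_{bt}^{-1}C_{bg}$; here $C_{bg}=C_{rg}$ if $g\in H$ — but $g\notin H$, so $C_{bg}=0$, giving $C'_{ug}=C_{ug}$?? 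That is wrong, so the correct computation must use $C'_{ug}=C_{ug}-C_{ut}C_{bt}^{-1}C_{bg}$ with $C_{ut}=C_{ug}$ and $C_{bt}^{-1}C_{bg}$... I'll instead pivot in the order that makes column $g$ pick up the contribution: after introducing $b,t$, row $b$ equals row $r$ restricted appropriately and column $t$ equals column $g$, so $C_{bg}$ should be set by the ``$C_{by}:=C_{ry}$ for $y\in H$'' clause as $0$ (since $g\notin H$), while the relevant cancellation comes from $C'_{ug}=C_{ug}-C_{ut}(C_{bt})^{-1}C_{bg}$ being the wrong Schur term; the right one is that pivoting moves the $t$-column information, so $C'_{ug}$ for $u\in B^*\setminus H$ equals $C_{ug}-C_{ub}... $. \textbf{This sign/index bookkeeping is the main obstacle}: getting the Schur-complement formula pointed in the correct direction (which of $b,t$ is the pivot row vs.\ column at the moment of the pivot, given that the last sub-bullet flips $B^*$ by $\{b,t\}$) so that the claimed zero patterns come out. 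Once the correct instance of Lemma~\ref{lem:pivot} is pinned down, each of the three bullets is a one-line substitution using $C_{bt}=C_{rg}$, $C_{bv}=C_{rv}$ on $H$, $C_{ug}=C_{ut}$ on $B^*\setminus H$, and the zero patterns $C_{bv}=0$ off $H$, $C_{ut}=0$ on $H\cap B^*$; I would organize the final write-up as: (a) state which matrix is $\alpha$ in Lemma~\ref{lem:pivot}, (b) list the four update formulas, (c) substitute to get $C'_{ry}=0$ on $H\setminus B^*$, (d) substitute to get $C'_{xg}=0$ on $B^*\setminus H$, (e) deduce (BT2) for $b,t$ from (c),(d) together with the defining equations, and (f) remark the $\DBlossom$ case is identical with $r:=v$, $g:=b_i$.
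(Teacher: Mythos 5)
Your overall plan --- introduce $b,t$ via the explicit formulas of Step~2 of $\Blossom$, apply Lemma~\ref{lem:pivot} with $\alpha=C^*_{bt}$, and substitute --- is exactly the paper's argument, but your execution stalls on a factual error: you assert $g\in(V^*\setminus B^*)\setminus H$, whereas in fact $g\in H$. By Step~1 of $\Blossom(v,u)$, $r$ is the \emph{last} vertex of $P(v)$ not contained in $H$, so its successor $g$ lies in $H$; in $\DBlossom(v,H_i)$ one has $g=b_i\in H=H_i\cup\{b_i\}$ explicitly. Consequently $C^*_{bg}$ is set by the clause ``$C^*_{by}:=C^*_{ry}$ for $y\in H\setminus B^*$'' to $C^*_{rg}=C^*_{bt}$, not to $0$, and the Schur term you found missing appears: for $x\in(B^*\setminus H)\setminus\{b\}$ one gets $\hat C^*_{xg}=C^*_{xg}-C^*_{xt}(C^*_{bt})^{-1}C^*_{bg}=C^*_{xg}-C^*_{xg}=0$, using $C^*_{xt}=C^*_{xg}$ on $B^*\setminus H$. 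This is precisely the computation you flagged as ``wrong'' and could not repair; with $g\in H$ it closes immediately.

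The remaining difficulties you describe as ``sign/index bookkeeping'' come from not tracking which of $b,t$ is a row at the moment each entry is defined versus after the pivot. Before the pivot (case $r\in B^*$), $b$ is a row and $t$ is a column, so the entries needed for (BT2) are $C^*_{ut}$ (defined to be $0$ for $u\in B^*\cap H$) and $C^*_{bv}$ (defined to be $0$ for $v\notin H$), which pivot to $\hat C^*_{ub}=-C^*_{ut}(C^*_{bt})^{-1}=0$ and $\hat C^*_{tv}=(C^*_{bt})^{-1}C^*_{bv}=0$; there is no ``row $t$'' before the pivot, and $C_{bu}$ with $u\in B^*$ is not an entry of the matrix. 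Likewise your worry that $\hat C^*_{rt}\neq 0$ contradicts ``no edge between $r$ and $H$'' is moot: after the pivot $t$ lies in $\hat B^*$, so the columns of $H$ are exactly $(H\setminus B^*)\setminus\{t\}$, and $\hat C^*_{ry}=C^*_{ry}-C^*_{rt}(C^*_{bt})^{-1}C^*_{by}=0$ for all of them, using $C^*_{rt}=C^*_{rg}=C^*_{bt}$ (valid since $r\in B^*\setminus H$) and $C^*_{by}=C^*_{ry}$. So the approach is the right one, but as written the third bullet's proof fails and the (BT2) verification has its indices inverted; both are fixable once $g\in H$ and the pre-/post-pivot roles of $b$ and $t$ are pinned down.
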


\begin{proof}
To show the properties,  
we use the notation $\hat V^*, \hat B^*$, $\hat C^*$, and $\hat F^*$ to represent the objects 
after the pivoting operation, whereas $V^*$, $B^*$, $C^*$, and $F^*$ represent those before 
the pivoting operation. We only consider the case when 
$b \in \hat V^* \setminus \hat B^*$ and $t \in \hat B^*$ 
as the other case can be dealt with in a similar way. 

In Step 2 of $\Blossom(v, u)$ (or Step 1 of $\DBlossom(v,H_i)$), 
we have $C^*_{bt}=C^*_{rg}\neq 0$, and hence 
$\hat C^*_{t b} = 1/C^*_{bt} \neq 0$.  
Since $C^*_{b y} = 0$ for any $y \in (V^* \setminus B^*) \setminus H$, we have 
$\hat C^*_{t y} = C^*_{b y} / C^*_{b t} = 0$ for any $y \in (\hat V^* \setminus \hat B^*) \setminus H$ with $y\neq b$.  
Similarly, since $C^*_{x t} = 0$ for any $x \in H \cap B^*$, we have  
$\hat C^*_{x b} = C^*_{x t} / C^*_{b t} = 0$ for any $x \in  H \cap \hat B^*$ with $x\neq t$. 
Thus, $b$ and $t$ satisfy the conditions in (BT2). 

Since $C^*_{bt}=C^*_{rt}$ and $C^*_{ry} = C^*_{by}$ for any $y \in (H \setminus B^*) \setminus \{t\}$, 
we have $\hat C^*_{ry} = C^*_{ry} - C^*_{rt} (C^*_{bt})^{-1} C^*_{by} =0$
for any $y \in (H \setminus B^*) \setminus \{t\}$ by Lemma~\ref{lem:pivot}. 
Thus, there is no edge in $\hat F^*$ between $r$ and $H$. 
Similarly, since $C^*_{bt}=C^*_{bg}$ and $C^*_{xg} = C^*_{xt}$ for any $x \in (B^* \setminus H) \setminus \{b\}$, 
we have $\hat C^*_{xg} = C^*_{xg} - C^*_{xt} (C^*_{bt})^{-1} C^*_{bg} =0$ 
for any $x \in (B^* \setminus H) \setminus \{b\}$ by Lemma~\ref{lem:pivot}. 
Thus, there is no edge in $\hat F^*$ between $g$ and $V^* \setminus H$. 
\end{proof}

The following lemma shows 
how creating a blossom affects the edges in $F^\circ$.  

\begin{lemma}\label{clm:71}
Suppose that $\Blossom(v, u)$ or {\em Steps 1--3} of $\DBlossom(v,H_i)$ have created a new blossom $H$ containing no source line, and 
let $F^\circ$ (resp.~$\hat F^\circ$) be the tight edge set before (resp.~after) 
the execution of $\Blossom(v, u)$ or {\em Steps 1--3} of $\DBlossom(v,H_i)$. 
If $(x,y) \in F^\circ \triangle \hat F^\circ$, then  
{\em (i)} $\{x, y\} \cap \{ b, t\} \not= \emptyset$, or
{\em (ii)} exactly one of $\{x, y\}$, say $x$, is contained in $H$, and
$(x, r), (g, y) \in F^\circ$.
\end{lemma}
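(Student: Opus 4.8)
The plan is to trace through the pivoting operation exactly as described in Step~2 of $\Blossom(v,u)$ (equivalently Step~1 of $\DBlossom$), compute the new entries $\hat C^*_{xy}$ in terms of the old ones via Lemma~\ref{lem:pivot}, and identify which entries can change their vanishing status. First I would fix the case $r\in B^*$, $g\in V^*\setminus B^*$ (so that after adding $b,t$ and pivoting around $\{b,t\}$ we have $b\in\hat B^*$, $t\in\hat V^*\setminus\hat B^*$); the other case is symmetric. The key point is that the only genuinely new rows and columns are those indexed by $b$ and $t$, and for all other pairs $(x,y)$ with $x,y\notin\{b,t\}$ the pivoting formula gives $\hat C^*_{xy}=C^*_{xy}-C^*_{xt}(C^*_{bt})^{-1}C^*_{bt\,y}$ — but here I must be careful: before the pivot, the entries $C^*_{xt}$ and $C^*_{ty}$ were \emph{defined} in Step~2, namely $C^*_{xt}=C^*_{xg}$ for $x\in B^*\setminus H$ and $C^*_{xt}=0$ for $x\in B^*\cap H$, and $C^*_{by}=C^*_{ry}$ for $y\in H\setminus B^*$ and $C^*_{by}=0$ for $y\in(V^*\setminus B^*)\setminus H$, with $C^*_{bt}=C^*_{rg}$.

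So the plan proceeds by substituting these definitions. For a pair $(x,y)$ with $x\in B^*$, $y\in V^*\setminus B^*$, neither equal to $b$ or $t$, the change $\hat C^*_{xy}-C^*_{xy}$ equals $-C^*_{xt}(C^*_{bt})^{-1}C^*_{by}$, which is nonzero only if $C^*_{xt}\neq 0$ and $C^*_{by}\neq 0$. By the definitions just recalled, $C^*_{xt}\neq 0$ forces $x\in B^*\setminus H$ and $C^*_{xg}=C^*_{xt}\neq 0$, i.e.\ $(x,g)\in F^*$ and hence (since $g$ was the successor of $r$ in a tight path, or $g=b_i$ in the $\DBlossom$ case) one gets $(g,y')$-type tightness after checking potentials; similarly $C^*_{by}\neq 0$ forces $y\in H\setminus B^*$ and $C^*_{ry}=C^*_{by}\neq 0$, i.e.\ $(r,y)\in F^*$. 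That is exactly alternative~(ii): exactly one of $x,y$ lies in $H$ (namely $y$, with $y\in H$ and $x\notin H$), and $(x,r),(g,y)\in F^*$ — and I would then promote $F^*$ to $F^\circ$ by observing that the potential $p(b)=p(t)=p(r)+Q_{rb}$ was chosen precisely so that tightness is inherited: if $(x,r)$ and $(g,y)$ were tight before, then $(x,y)$ becomes tight after, and conversely a pair that was tight before and ceases to be tight after must have had a tight "shortcut" through $b,t$. Pairs involving $b$ or $t$ themselves fall under alternative~(i) by definition.

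The one subtlety — and what I expect to be the main obstacle — is bookkeeping the direction of the change and the role of $Q$. Note $F^\circ\triangle\hat F^\circ$ contains pairs that \emph{gain} tightness and pairs that \emph{lose} it; for the "gain" direction one needs $(x,r)\in F^\circ$ and $(g,y)\in F^\circ$ to conclude $(x,y)\in\hat F^\circ$, using $p(b)=p(r)+Q_{rb}$ together with $Q_{xy}=Q_{xr}+Q_{ry}$-type additivity of the crossing sums relative to $H$ (since $b\in\hat B^*$ sits outside $H$ while $t\in H$, the index set $I_{xy}$ splits appropriately). For the "loss" direction, $(x,y)\in F^\circ$ but $(x,y)\notin\hat F^\circ$ means $C^*_{xy}\neq 0$ got cancelled, so again $C^*_{xt},C^*_{by}\neq 0$, giving $(x,r),(g,y)\in F^*$, and tightness of $(x,y)$ plus the potential choice forces tightness of $(x,r)$ and $(g,y)$. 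I would handle both directions uniformly by first proving the $F^*$-level statement (which is pure linear algebra via Lemma~\ref{lem:72} and Lemma~\ref{lem:pivot}, already essentially done in the proof of Lemma~\ref{lem:72}) and then upgrading to $F^\circ$ using the single identity $p(b)=p(t)=p(r)\pm Q_{rb}$ and the additivity $Q_{xy}=Q_{xr}+Q_{ry}$ valid whenever $r$ lies "between" $x$ and $y$ with respect to the laminar family — which holds here because $H$ was formed as a union of $K(\cdot)$-sets along the search paths through $r$. The rest is routine substitution, so I would not belabor it.
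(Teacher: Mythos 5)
Your overall strategy is the same as the paper's: first establish the $F^*$-level statement from the pivot formula of Lemma~\ref{lem:pivot} together with the explicit definitions of the new row/column for $b$ and $t$ (which is exactly how the paper reduces a changed entry $\hat C^*_{xy}$, $x,y\notin\{b,t\}$, to $C^*_{xt}\neq 0$ and $C^*_{by}\neq 0$, hence to $(x,g),(r,y)\in F^*$ with one endpoint in $H$ and one outside), and then upgrade from $F^*$ to $F^\circ$ by a potential/$Q$ computation. That first half of your argument is correct and matches the paper.

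The gap is in the identity you use for the tightness transfer. You invoke $Q_{xy}=Q_{xr}+Q_{ry}$, justified by saying that $r$ lies ``between'' $x$ and $y$ in the laminar family. This identity is false in general: if $r$ belongs to a positive blossom $H_j$ with $H_j\cap H=\emptyset$ and $x,y\notin H_j$ (which can happen, e.g.\ when $r=c$ lies inside a maximal blossom that is not absorbed into $H$), then $H_j$ contributes $2q(H_j)>0$ to $Q_{xr}+Q_{ry}$ but $0$ to $Q_{xy}$. A symmetric failure occurs when some $H_j\supsetneq H$ contains both $x$ and $y$ but not $r$. The identity the paper actually uses is the four-point relation $Q_{xy}+Q_{rg}=Q_{xr}+Q_{gy}$, in which $g$ plays an essential role: for each blossom $H_j$ one checks that the two pairings give the same crossing count, and the only configurations where they could differ (namely $H_j$ separating $\{x,y\}$ from $\{r,g\}$, or $\{x,r\}$ from $\{y,g\}$) are excluded by laminarity of $\Lambda$ together with the known positions $x\in H$, $g\in H$, $y\notin H$, $r\notin H$. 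With that identity, the paper sums the equality $p(y)-p(x)=Q_{xy}$ (which holds because $(x,y)$ lies in $F^\circ$ or in $\hat F^\circ$ and the tightness condition is unchanged, as $q(H)=0$), the equality $p(r)-p(g)=Q_{rg}$ from the tight edge $(g,r)$, and the two inequalities from $(x,r),(g,y)\in F^*$, and concludes all four are tight. Your write-up has all the right ingredients in view (you even state the intended conclusion ``if $(x,r)$ and $(g,y)$ were tight before, then $(x,y)$ becomes tight after''), but as written the key algebraic step rests on an incorrect additivity claim and would not go through; you need to bring $g$ and the tightness of $(g,r)$ explicitly into the $Q$-identity and prove that identity via laminarity.
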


\begin{proof}
Suppose that $\{x, y\} \cap \{ b, t \} = \emptyset$. 
By Lemma~\ref{lem:pivot}, 
we have $(x,y) \in F^\circ \triangle \hat F^\circ$ only when 
$(x, b), (t, y) \in F^*$ or $(y, b), (t, x) \in F^*$ holds before the pivoting operation
in Step 2 of $\Blossom(v,u)$ (or Step 1 of $\DBlossom(v,H_i)$). This shows that exactly one of $\{x, y\}$, say $x$, 
is contained in $H$, and that $(x,r),(g,y)\in F^*$ holds before $\Blossom(v,u)$ (or $\DBlossom(v,H_i)$). 

Suppose that $x \in B^*$. In this case, if $(x, r), (g, y) \in F^*$ holds before $\Blossom(v,u)$ (or $\DBlossom(v,H_i)$)
and $(x,y) \in F^\circ \triangle \hat F^\circ$, then we have 
\begin{align*}
p(y) - p(x) &= Q_{xy}, \\
p(r)  - p(g) &= Q_{rg}, \\
p(r) - p(x) &\ge Q_{x r}, \\
p(y) - p(g)&\ge Q_{g y}.  
\end{align*}
Furthermore, we have $Q_{xy}+Q_{rg}  =  Q_{xr}+Q_{gy}$ by a simple counting argument. 
Combining these inequalities, we see that all the inequalities above must be tight. 
Thus, we have 
$(x, r), (g, y) \in F^\circ$. 
The same argument can be applied to the case when $x \in V^* \setminus B^*$. 
\end{proof}

The proof of this lemma implies the following result. 
\begin{corollary}\label{cor:78}
Suppose that $\Blossom(v, u)$ or {\em Steps 1--3} of $\DBlossom(v,H_i)$ have created a new blossom $H$ containing no source line, and 
let $F^*$ (resp.~$\hat F^*$) be the edge set before (resp.~after) 
the execution of $\Blossom(v, u)$ or {\em Steps 1--3} of $\DBlossom(v,H_i)$. 
If $(x,y) \in F^* \triangle \hat F^*$, then 
{\em (i)} $\{x, y\} \cap \{ b, t\} \not= \emptyset$, or
{\em (ii)} exactly one of $\{x, y\}$, say $x$, is contained in $H$, and
$(x, r), (g, y) \in F^*$.
\end{corollary}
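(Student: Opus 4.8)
The plan is to observe that Corollary~\ref{cor:78} is essentially the same statement as Lemma~\ref{clm:71}, except that $F^\circ$ (tight edges) is replaced throughout by $F^*$ (all edges of nonzero $C^*$-entry), and that the claim is precisely what the proof of Lemma~\ref{clm:71} establishes \emph{before} the tightness bookkeeping is invoked. So the proof is a one-paragraph remark: rerun the first paragraph of the proof of Lemma~\ref{clm:71} verbatim, which uses only Lemma~\ref{lem:pivot} (the pivoting formula), and stop there.

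Concretely, I would argue as follows. Suppose $(x,y)\in F^*\triangle\hat F^*$. If $\{x,y\}\cap\{b,t\}\neq\emptyset$ we are in case (i) and there is nothing to prove, so assume $\{x,y\}\cap\{b,t\}=\emptyset$. The entries of $\hat C^*$ outside the rows/columns indexed by $\{b,t\}$ are obtained from those of $C^*$ by the Schur-complement-type update of Lemma~\ref{lem:pivot} (pivoting around $\{b,t\}$): $\hat C^*_{xy}=C^*_{xy}\mp C^*_{x\,\ast}(C^*_{bt})^{-1}C^*_{\ast\,y}$, where $\{\ast\}$ ranges over $\{b,t\}$ appropriately depending on which of $x,y$ lies in $B^*$. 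Hence $C^*_{xy}\neq\hat C^*_{xy}$ forces the correction term to be nonzero, i.e. (depending on the side) $(x,b),(t,y)\in F^*$ or $(y,b),(t,x)\in F^*$ holds before the pivot. By the placement of the new rows/columns in Step 2 of $\Blossom$ (or Step 1 of $\DBlossom$) — namely $C^*_{bz}=0$ for $z\in(V^*\setminus B^*)\setminus H$ and $C^*_{zt}=0$ for $z\in B^*\setminus H$, and symmetrically in the other orientation — a nonzero entry in the $b$-column or $t$-row can only reach a vertex of $H$. Therefore exactly one of $x,y$, say $x$, lies in $H$; and since $C^*_{bt}=C^*_{rg}$ and $C^*_{ry}=C^*_{by}$, $C^*_{xt}=C^*_{xg}$ (the copy relations used to define the new row/column), the conditions $(x,b),(t,y)\in F^*$ translate into $(x,r),(g,y)\in F^*$. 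This is case (ii), completing the proof.

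The main (and only) subtlety is to be careful about orientation: whether $r\in B^*,g\in V^*\setminus B^*$ or vice versa, which determines whether it is the $b$-column or the $b$-row (and $t$-row or $t$-column) that carries the copied entries, hence whether the relevant membership is $(x,b),(t,y)$ or $(y,b),(t,x)$. But this is exactly the same case split already handled in Lemma~\ref{lem:72} and Lemma~\ref{clm:71}, and both orientations are symmetric, so it suffices to say ``we only treat the case $b\in\hat V^*\setminus\hat B^*$, $t\in\hat B^*$; the other is analogous'' and let the copy relations from Step 2 do the rest. Since the tightness inequalities in the second paragraph of the proof of Lemma~\ref{clm:71} are the only place $F^\circ$ (as opposed to $F^*$) enters, dropping them yields precisely the $F^*$-version asserted here; thus I would phrase the whole thing as ``The first paragraph of the proof of Lemma~\ref{clm:71} uses only Lemma~\ref{lem:pivot} and the construction in Step~2, and already proves the assertion with $F^\circ$ replaced by $F^*$.''
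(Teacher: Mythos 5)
Your proposal is correct and is exactly the paper's intent: the paper gives no separate proof of Corollary~\ref{cor:78}, stating only that ``the proof of this lemma implies the following result,'' and the relevant part is precisely the first paragraph of the proof of Lemma~\ref{clm:71}, which uses only Lemma~\ref{lem:pivot} and the copy relations of Step~2 of $\Blossom$ (Step~1 of $\DBlossom$) and never invokes tightness. Your handling of the orientation case split and of the translation from $(x,b),(t,y)\in F^*$ to $(x,r),(g,y)\in F^*$ matches the paper's argument.
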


The following lemma shows that Step 4 of $\DBlossom(v,H_i)$ roughly replaces edges incident to $t_i$ with ones incident to $t$.

\begin{lemma}
\label{lem:dblossomexpand}
Suppose that $\Expand(H_i)$ is executed for some positive blossom $H_i \in \Lambda_{\rm n}$ in $\DBlossom(v,H_i)$. 
Then, we have the following. 
\begin{itemize}
\item
$\Expand(H_i)$ does not affect the edges in $F^*$ that are not incident to $\{t, b_i, t_i\}$. 
\item
If $(t, x) \in F^*$ after $\Expand(H_i)$, then
$(t, x) \in F^*$ or $(t_i, x) \in F^*$ before $\Expand(H_i)$.
\item
If $(t, x) \in F^\circ$ after $\Expand(H_i)$, then
$(t, x) \in F^\circ$ or $(t_i, x) \in F^\circ$ before $\Expand(H_i)$. 
\item
If $(t_i, x) \in F^\circ$ before $\Expand(H_i)$ with $x \not= b_i$, then 
$(t, x) \in F^\circ$ after $\Expand(H_i)$. 
\end{itemize}
\end{lemma}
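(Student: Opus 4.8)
The four claims all concern how the pivoting around $\{b,t\}$ performed in Step~1 of $\DBlossom(v,H_i)$ (recorded in Lemma~\ref{lem:72} and Corollary~\ref{cor:78}) interacts with the subsequent $\Expand(H_i)$ called in Step~4, after the dual update. The governing identity is the one from Lemma~\ref{lem:pivot}: after $\Expand(H_i)$, the new entries are obtained from the old ones by the rank-one correction through the pivot entry $C^*_{b_i t_i}$ (or $C^*_{t_i b_i}$, depending on the side of $B^*$ that $b_i$ lies on). So the plan is to combine this algebraic formula with the structural facts about $t$, $b_i$, $t_i$ that hold just before $\Expand(H_i)$, namely: $t$ and $b_i$ satisfy (BT2) as the tip and bud of the newly grafted blossom $H = H_i \cup \{b\} \setminus \{b_i\}$-type configuration (by Lemma~\ref{lem:72}, since $g = b_i$ and $r = v$), and $H_i$ still satisfies (BT2) as a normal blossom about $b_i$ and $t_i$.

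\textbf{First item.} I would argue that the only entries of $C^*$ changed by $\Expand(H_i)$ are those in the row/column of $t_i$ and those reached by the correction term, which requires a nonzero entry in both the $b_i$-line and the $t_i$-line. By (BT2) applied to $H_i$, all entries in the row (or column) of $t_i$ other than the pivot vanish \emph{except} for $t_i$'s interactions inside $H_i$, and after the pivoting in Step~1 of $\DBlossom$ one checks (using Lemma~\ref{lem:72}, that there is no edge between $g = b_i$ and $V^*\setminus H$, and no edge between $r = v$ and $H$) that the only nonzero entry of $C^*$ in the relevant row/column of $b_i$ is the one to $t$. Hence the correction term is nonzero only for pairs of the form $(t_i, x)$ turning into $(t, x)$; every pair avoiding $\{t, b_i, t_i\}$ is untouched. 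This also has to account for the fact that the dual change to $p(b_i)$ and $p(t)$ in Step~4 shifts which edges are \emph{tight}, but since the change is exactly $\pm\epsilon = \pm q(H_i)$ and the index set $I_{uv}$ for a pair no longer crossing $H_i$ drops exactly $q(H_i)$, the tightness of edges not incident to $\{t,b_i,t_i\}$ is unaffected — this is the same bookkeeping used in the proof of Lemma~\ref{lem:expand}.

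\textbf{Second and third items.} For the second item, the pivot formula gives $\hat C^*_{tx} = C^*_{tx} - C^*_{t t_i}(C^*_{b_i t_i})^{-1} C^*_{b_i x}$ (up to transposition), and since $C^*_{t t_i} \neq 0$ forces (by (BT2) for $H_i$, which says $t_i$'s only out-of-$H_i$ interaction is with $b_i$, together with $t \notin H_i$) that actually $C^*_{t t_i} = 0$ unless... — here I would need to be careful: in fact $t \notin H_i$ and $t \ne b_i$, so by (BT2) $C^*_{t t_i}$ must be $0$, meaning the correction is itself zero for row $t$; the only way $(t,x)$ becomes an edge is if it already was one, OR if $x = t_i$ contributed — so I would instead track that $\Expand$ replaces the \emph{name} $t_i$ by nothing and reroutes edges formerly at $t_i$. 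The honest statement is: $(t,x)\in\hat F^*$ iff $C^*_{tx}\ne 0$ already, possibly after re-indexing an edge that was at $t_i$. I would phrase this as: $\hat C^*_{tx}$ is a $\K$-linear combination of $C^*_{tx}$ and $C^*_{t_i x}$ with the first coefficient nonzero, which immediately gives the second item. The third item (tightness version) then follows by the now-standard argument: if $\hat C^*_{tx}\ne 0$ and $p(x) - \hat p(t) = Q_{tx}$ with the updated $p$, then unwinding the $\pm\epsilon$ shift and using (DF3) for $H_i$ (which gave $p(b_i) = p(t_i)$ before, hence controls $Q$-values), one checks the contributing edge — either $(t,x)$ or $(t_i,x)$ — was tight before, exactly as in Lemma~\ref{lem:expand}'s combination of inequalities $p(x)-p(b_i)\ge Q_{b_i x}$ and $p(t_i) - p(u) \ge Q_{u t_i}$ forcing equality.

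\textbf{Fourth item and the main obstacle.} For the last item, given $(t_i,x)\in F^\circ$ with $x\ne b_i$ before $\Expand(H_i)$, I want $(t,x)\in F^\circ$ afterward. Algebraically $\hat C^*_{tx}$ picks up the term $-C^*_{tt_i'}\cdots$ — again I would use that the pivot routes $t_i$-incident edges to $t$ via the $b_i$–$t$ link created in Step~1, so $C^*_{t_i x}\ne 0$ contributes a nonzero term to $\hat C^*_{tx}$, and there is nothing to cancel it because, by the first item's analysis, no other term reaches $(t,x)$. Then the tightness is preserved by the dual update: $p(t)$ is shifted by $\pm\epsilon$ precisely to compensate for the fact that the pair $(t,x)$ crosses $H$ (the new blossom with $q(H)=\epsilon$) exactly when $(t_i,x)$ crossed $H_i$, so $Q_{tx}$ after equals $Q_{t_i x}$ before. \emph{The main obstacle} is the careful sign/side bookkeeping: one must split into the cases $b_i\in B^*$ versus $b_i\in V^*\setminus B^*$ (equivalently $t\in B^*$ or not), match each with the correct branch of the $p(b_i)$, $p(t)$ updates in Step~4, and verify in each case that the crossing-count change of $H$ versus the lost crossing-count of $H_i$ exactly cancels the potential shift — a routine but error-prone computation, entirely parallel to the case analysis in the proofs of Lemma~\ref{lem:pivot}, Lemma~\ref{lem:expand}, and Lemma~\ref{clm:71}. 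I would organize the write-up by fixing one side (say $b_i\in V^*\setminus B^*$, $t_i\in B^*$, so $t\in V^*\setminus B^*$ after the grafting pivot) in full detail and remarking that the other is symmetric.
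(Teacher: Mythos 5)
Your treatment of the first three items follows the paper's own route and is essentially correct: just before $\Expand(H_i)$ the only $F^*$-edges at $b_i$ are $(b_i,t)$ and $(b_i,t_i)$ (by (BT2) for $H_i$ together with Lemma~\ref{lem:72} applied with $g=b_i$), so the rank-one correction of the pivot around $\{b_i,t_i\}$ can only touch the row/column of $t$; the new entry $\hat C^*_{tx}$ is a linear combination of $C^*_{tx}$ and $C^*_{t_ix}$ with nonzero first coefficient, which gives the second item; and the third item is the standard chaining of (DF2) inequalities through $p(b_i)=p(t_i)$, as in Lemma~\ref{lem:expand}.

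The fourth item, however, has a genuine gap. You claim that the nonzero contribution of $C^*_{t_ix}$ to $\hat C^*_{tx}=C^*_{tx}-C^*_{tb_i}(C^*_{t_ib_i})^{-1}C^*_{t_ix}$ cannot be cancelled ``because, by the first item's analysis, no other term reaches $(t,x)$.'' The first item's analysis only shows that the correction term does not reach pairs outside the row of $t$; it says nothing about the pre-existing entry $C^*_{tx}$, which is the other summand and can perfectly well be nonzero (after Step 1 of $\DBlossom$ it is proportional to the original entry $C^*_{vx}$, and $v$ may be adjacent to vertices of $H_i$). So cancellation is a live possibility that you have not excluded. The paper excludes it by proving that $(t_i,x)\in F^\circ$ with $x\ne b_i$ forces $(t,x)\notin F^*$, and this is exactly where the hypothesis that $H_i$ is a \emph{positive} blossom --- which your argument never uses --- becomes essential: before the dual update in Step 4 of $\DBlossom$, the pair $(t,x)$ crosses $H_i$ while $(t_i,x)$ does not, so $Q_{tx}=Q_{t_ix}+q(H_i)$; chaining this with the tight edge $(t,b_i)$ and with (DF3) for $H_i$ shows that $(t,x)\in F^*$ would force $p(x)-p(t_i)\ge Q_{t_ix}+2q(H_i)>Q_{t_ix}$, contradicting the tightness of $(t_i,x)$. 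Only once $C^*_{tx}=0$ is established does the correction term survive, after which the potential and $Q$-value bookkeeping you sketch does yield tightness of $(t,x)$.
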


\begin{proof}
Since $(b_i, t_i)$ is the only edge in $F^*$ connecting $b_i$ and $H_i$, 
$(b_i, t)$ and $(b_i, t_i)$ are the only edges in $F^*$ incident to $b_i$ just before $\Expand(H_i)$. 
Thus, the first property holds. 
By Lemma~\ref{lem:pivotsing}, 
$(t, x) \in F^*$ after $\Expand(H_i)$ if and only if $C^*[\{t, x, b_i, t_i\}]$ is nonsingular before $\Expand(H_i)$, 
which shows the second property. 
Then, by the dual feasibility, we obtain the third property. 
If $(t_i, x) \in F^\circ$ before $\Expand(H_i)$, then 
$(t, x) \not\in F^*$ before $\Expand(H_i)$ by the dual feasibility, 
and hence $C^*[\{t, x, b_i, t_i\}]$ is nonsingular. 
Thus, $(t, x) \in F^\circ$ after $\Expand(H_i)$. 
\end{proof}

We can also see that 
creating a new blossom does not violate the dual feasibility as follows. 

\begin{lemma}
\label{lem:createblossomdual}
Suppose that the dual variables are feasible before $\Blossom(v,u)$ or 
{\em Steps 1--3} of $\DBlossom(v,H_i)$, which 
create a new blossom $H$. Then, the dual variables remain feasible 
after $\Blossom(v, u)$ or {\em Steps 1--3} of $\DBlossom(v,H_i)$. 
\end{lemma}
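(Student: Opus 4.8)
The goal is to show that feasibility of the dual variables $(p,q)$ is preserved by the blossom-creation step. Since (DF1) involves only the potentials on lines of $L$ and these are untouched (the new vertices $b,t$ form a dummy line, not a line of $L$, and $p(b)=p(t)$ is set consistently), (DF1) is immediate. The substantive work is (DF2), the inequality $p(v')-p(u')\ge Q_{u'v'}$ for every $(u',v')\in \hat F^*$, and (DF3), the equality $p(v')-p(u')=q(H_j)$ for the dummy-line pair of every normal blossom $H_j$. I would treat the case $r\in B^*$, $g\in V^*\setminus B^*$ (so after the final pivot $b\in V^*\setminus B^*$, $t\in B^*$); the other case is symmetric.

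\textbf{Step 1: edges not touched by the operation.} By Corollary~\ref{cor:78}, an edge $(x,y)\in \hat F^*$ is either already in $F^*$, or meets $\{b,t\}$, or is of the form where exactly one endpoint (say $x$) lies in $H$ and $(x,r),(g,y)\in F^*$. For edges already in $F^*$ one must check that the crossing-number sum $Q_{xy}$ did not change: the only new blossom is $H$, and by Lemma~\ref{lem:72} there is no edge of $\hat F^*$ between $r$ and $H$ nor between $g$ and $V^*\setminus H$; combined with the structure of $P(v),P(u)$ this forces any surviving edge $(x,y)$ to cross $H$ only if it already crossed the same set of old maximal blossoms, so $Q_{xy}$ is unchanged and (DF2) carries over verbatim.

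\textbf{Step 2: the new edges.} For an edge $(x,y)\in\hat F^*$ with $x\in H$, $x\notin\{b,t\}$, and $(x,r),(g,y)\in F^*$, I would argue exactly as in the proof of Lemma~\ref{clm:71}/Corollary~\ref{cor:78}: from dual feasibility before the operation, $p(r)-p(x)\ge Q_{xr}$, $p(y)-p(g)\ge Q_{gy}$, and from the choice of $r,g$ as adjacent on the tight search path, $p(g)-p(r)=Q_{rg}$ (tightness); the counting identity $Q_{xy}+Q_{rg}=Q_{xr}+Q_{gy}$ in the \emph{new} blossom structure (the only change being that $x$ and $y$ now also potentially cross $H$, and $H$ contributes to the left side iff exactly one of $x,y$ is in $H$, which holds by assumption, and to the right side iff... one must check this bookkeeping carefully with $r,g\in H$) then yields $p(y)-p(x)\ge Q_{xy}$. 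For edges incident to $b$ or $t$: by the construction in Step 2 of $\Blossom$, the nonzero entries $\hat C^*_{\cdot,b},\hat C^*_{t,\cdot}$ are copies of entries $C^*_{\cdot,g}$ or $C^*_{\cdot,t'}$ (for the relevant old vertices), and $p(b)=p(t)=p(r)+Q_{rb}$, where $Q_{rb}$ is the crossing sum for the pair $(r,b)$ in the new structure; since $b$ is outside $H$ and $t$ is inside $H$, a pair $(u',b)$ crosses $H$ iff $u'\in H$, matching the pattern of the edges from $g$, and one checks $Q_{u'b}=Q_{u'g}+[\text{$H$ contributes}]$ against the inequality $p(g)-p(u')\ge Q_{u'g}$ (valid since $(u',g)\in F^*$). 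The choice $p(b)=p(r)+Q_{rb}$ is precisely what makes the dummy-line equality (DF3) hold for $H$ with $q(H)=0$: $p(t)-p(b)=0=q(H)$, and for the other normal blossoms $H_j$ the pair $\{b_j,t_j\}$ is unaffected.

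\textbf{Step 3: the $\DBlossom$ case and the main obstacle.} For Steps 1--3 of $\DBlossom(v,H_i)$ the same analysis applies with $r:=v$, $g:=b_i$; note Step 4 of $\DBlossom$ (the dual modification and $\Expand(H_i)$) is explicitly \emph{not} part of this lemma, so it need not be analyzed here, though one uses Lemma~\ref{lem:expand} when it is invoked elsewhere. The main obstacle I anticipate is the careful verification of the counting identity $Q_{xy}+Q_{rg}=Q_{xr}+Q_{gy}$ and its analogues once $H$ is added to $\Lambda$: one must track, for each relevant pair, whether it crosses $H$ \emph{and} whether the set of \emph{old} maximal blossoms it crosses is the same before and after (blossoms strictly inside $H$ get absorbed, so a pair with both endpoints inside such a blossom no longer "crosses" it in the new laminar family — but Lemma~\ref{lem:72} and the path structure are designed to rule out the problematic configurations). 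Getting this bookkeeping exactly right, uniformly over the six sub-cases of Step 4 of $\Blossom$ and over whether endpoints lie in $B^*$ or not, is the delicate part; everything else is a direct consequence of Lemma~\ref{lem:pivot}, Lemma~\ref{lem:72}, Corollary~\ref{cor:78}, and the tightness of the search-path edges.
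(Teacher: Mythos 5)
Your overall strategy is the same as the paper's: (DF1) and (DF3) are immediate from the assignments $p(b)=p(t)$ and $q(H)=0$; the tightness $p(g)-p(r)=Q_{rg}$ of the edge $(r,g)$ is the anchor; Corollary~\ref{cor:78} classifies the members of $\hat F^*$; and each class of edge is handled by combining the old (DF2) inequalities with a per-blossom counting identity. The paper's proof is essentially your Step~2 carried out in three bullets (edges into $b$, edges out of $t$, and edges in $\hat F^*\setminus F^*$ avoiding $\{b,t\}$), so in structure your plan matches.

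The one point where you go astray is precisely the issue you single out as the ``main obstacle,'' and it is worth clearing up because your proposed resolution rests on a false premise. When $\Blossom$ or $\DBlossom$ creates $H$, the blossoms strictly contained in $H$ are \emph{not} absorbed or removed from $\Lambda$; the laminar family only gains the new member $H$, with $q(H)=0$. Since $Q_{uv}$ is by definition the sum of $q(H_i)$ over \emph{all} blossoms of $\Lambda$ crossed by $(u,v)$ --- not only the maximal ones --- the value $Q_{uv}$ for any pair of old vertices is numerically unchanged: every old blossom contributes exactly as before, and the new blossom contributes $q(H)=0$ whether or not $(u,v)$ crosses it. So there is no delicate before/after comparison at all: (DF2) for edges in $F^*\cap\hat F^*$ is verbatim the old inequality (no appeal to Lemma~\ref{lem:72} or to the path structure is needed there), and identities such as $Q_{ry}-Q_{rg}+Q_{xg}=Q_{xy}$ (for $x\notin H$, $y\in H$) are checked blossom by blossom in the single, post-creation family, using only that $r\notin H$, $g\in H$, that $b$ lies in no blossom and $t$ only in $H$, and laminarity (any old blossom containing $g$ lies strictly inside $H$ and hence cannot contain $x$ or $r$, and no old blossom contains both an inside and an outside vertex of $H$). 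With that observation in hand, your plan goes through and coincides with the paper's proof.
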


\begin{proof}
We use $\hat V^*$, $\hat B^*$, $\hat C^*$, $\hat F^*$, $\hat p$, and $\hat \Lambda$ to represent the objects 
after $\Blossom(v,u)$ (or Steps 1--3 of $\DBlossom(v,H_i)$), whereas $V^*$, $B^*$, $C^*$, $F^*$, $p$, and $\Lambda$ represent 
the objects before $\Blossom(v,u)$ (or Steps 1--3 of $\DBlossom(v,H_i)$). 
We only consider the case when 
$b \in \hat V^* \setminus \hat B^*$ and $t \in \hat B^*$, 
as the other case
can be dealt with in a similar way. 

Since there is an edge in $F^\circ$ between $r$ and $g$, 
we have $p(g) - p(r) = Q_{r g}$, and hence
\begin{equation}
\hat p(b) = \hat p(t) = p(r) + Q_{r b} = p(g) + Q_{r b} - Q_{r g} = p(g) - Q_{g b}.
\label{eq:70}
\end{equation}
By the definition of $\hat C^*$, we have the following. 
\begin{itemize}
\item
If $(x, b) \in \hat F^*$ for $x \in B^*$, then 
$x \in V^* \setminus H$ and $(x, g) \in F^*$. 
Thus, we have 
$$
\hat p (b) - \hat p (x) = p(g) - p(x) - Q_{g b} \ge Q_{x g} - Q_{g b} = Q_{x b} 
$$
by (\ref{eq:70}) and the dual feasibility before $\Blossom(v, u)$ (or Steps 1--3 of $\DBlossom(v,H_i)$). 
\item
If $(t, y) \in \hat F^*$ for $y \in V^* \setminus B^*$, then 
$y \in H$ and $(r, y) \in F^*$. 
Thus, we have 
$$
\hat p (y) - \hat p (t) = p(y) - p(r) - Q_{r b} \ge Q_{ry} - Q_{rb} = Q_{b y} = Q_{t y}
$$
by (\ref{eq:70}), the dual feasibility before $\Blossom(v, u)$ (or Steps 1--3 of $\DBlossom(v,H_i)$), and $q(H)=0$. 
\item 
If $(x,y)\in \hat F^*\setminus F^*$ for $x\in B^*$ and $y\in V^*\setminus B^*$,
then $x\in V^*\setminus H$, $y\in H$, and $(x,g), (r,y)\in F^*$ by Corollary~\ref{cor:78}. Thus, we have  
\begin{align*}
\hat p(y) - \hat p(x) = p(y) - p(x)  
&= (p(y)-p(r)) - (p(g)-p(r)) + (p(g)-p(x)) \\
&\geq Q_{r y} - Q_{r g} + Q_{x g}  = Q_{x y}
\end{align*}
by the dual feasibility before $\Blossom(v,u)$ (or Steps 1--3 of $\DBlossom(v,H_i)$). 
\end{itemize}
These facts show that 
$\hat p$ and $\hat q$ are feasible with respect to $\hat\Lambda$. 
\end{proof}

It is obvious that creating a new blossom does not violate (BT1).  
Thus, by Lemmas~\ref{lem:expand}, \ref{lem:72}, and \ref{lem:createblossomdual}, 
we see that the procedure \Search\ keeps the conditions (BT1), (BT2), and (DF1)--(DF3).

\section{Validity}
\label{sec:validity}
This section is devoted to the validity proof of the procedures described in Section~\ref{sec:search}. 
In Section~\ref{sec:propertyR}, 
we introduce properties (BR4) and (BR5) of the routing in blossoms.  
The procedures 
are designed so that they keep the  
conditions (BR1)--(BR5). 
Assuming these conditions, we show in Section~\ref{sec:validpath} that a nonempty output of $\Search$ is 
indeed an augmenting path. 
In Section~\ref{sec:validrouting},  
we show that these conditions hold during the procedure.

\subsection{Properties of Routings in Blossoms}
\label{sec:propertyR}

In this subsection, we introduce properties (BR4) and (BR5) of $R_{H_i}(x)$ kept in the procedure. 
Recall that, for $H_i \in \Lambda$,
\begin{align*}
H^-_i  &= \{ v \in H_i \setminus \{t_i\} \mid \mbox{there is an edge in $F^*$ between $v$ and $V^* \setminus H_i$} \},  \\
H^\bullet_i &= H^-_i \cup (H_i \cap V).
\end{align*}
In addition to (BR1)--(BR3), we assume that $R_{H_i}(x)$ satisfies the following 
(BR4) and (BR5) for any $H_i \in \Lambda$ and $x \in H^\bullet_i$.

\begin{description}
\item[(BR4)]
$G^\circ [R_{H_i}(x) \setminus \{x\}]$ 
 has a unique tight perfect matching.  
\item[(BR5)] 
If $x \in H^-_i$, then we have the following. 
Suppose that $Z \subseteq R_{H_i}(x)  \cap H^-_i$ satisfies that 
$z \ge_{H_i} x$ for any $z \in Z$, $Z \not= \{x\}$, and 
$|H_j \cap Z| \le 1$ for any positive blossom $H_j \in \Lambda$. 
Then, $G^\circ [R_{H_i}(x) \setminus Z]$ has no tight perfect matching. 
\end{description}
Here, we suppose that $G^\circ [\emptyset]$ 
 has a unique tight perfect matching $M= \emptyset$ to simplify the description.  

We now explain roles of (BR4) and (BR5). 
These conditions are used to show that 
the output $P$ in Step (3-1) of $\Search$ satisfies (AP3), i.e., 
$G^\circ [P]$ has a unique tight perfect matching. 
We will show that the obtained path $P$ can be decomposed into subsequences, 
and each subsequence consists of a singleton or a set $R_{H_i}(x)$ for some $x \in H^\bullet_i$
(see Lemma~\ref{lem:dec}). 
Our aim is to show that if $G^\circ [P]$ has a tight perfect matching, then 
$x$ is the only vertex in $R_{H_i}(x)$ 
that is matched with a vertex outside $R_{H_i}(x)$. 
This is guaranteed by (BR5), where $Z$ means the set of vertices that are matched with vertices outside $R_{H_i}(x)$.
Then, (BR4) assures that there exists a unique perfect matching covering $R_{H_i}(x)$ except $x$.

\subsection{Finding an Augmenting Path}
\label{sec:validpath}

This subsection is devoted to the validity of Step (3-1) of $\Search$. 
We first show the following lemma.

\begin{lemma}
\label{lem:dec}
In each step of $\Search$,  
for any labeled vertex $x$, $P(x)$ is decomposed as 
$$P(x)=J(x_k) I(y_k) \cdots J(x_1)I(y_1)J(x_0)$$
with $x_k \prec \cdots \prec x_1 \prec x_0 = x$ such that, 
for each $i$, 
\begin{description}
\item[(PD0)]
$J(x_i)$ is equal to either 
$\{x_i\}$ or $R_{H_j}(x_i)$  
for some $H_j \in \Lambda$, and
$I(y_i)$ is equal to either 
$\{y_i\}$ or $\overline{R_{H_j}(y_i)}$ 
for some positive blossom $H_j \in \Lambda$,

\item[(PD1)]
$x_i$ is adjacent to $y_i$ in $G^\circ$, 
\item[(PD2)]
the first element of $J(x_{i-1})$ and the last element of $I(y_i)$ form a line or a dummy line, 
\item[(PD3)]
any labeled vertex $z$ with $z \prec x_i$ is not adjacent to $I(y_i) \cup J(x_{i-1})$ in $G^\circ$, and 
\item[(PD4)]
$x_i$ is not adjacent to $J(x_{i-1})$ in $G^\circ$. 
Furthermore, if $I(y_i) = \overline{R_{H_j}(y_i)}$, then
$x_i$ is not adjacent to $\{z \in I(y_i) \mid z <_{H_j} y_i \}$ in $G^\circ$. 
\end{description}
See Fig.~\ref{fig:revisionfig16} for an example of the decomposition. 
\end{lemma}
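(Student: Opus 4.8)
The statement asserts a structural decomposition of each search path $P(x)$ that is maintained as an invariant throughout $\Search$, so the natural approach is induction on the steps of the procedure. The base case is Step 1, where $P(x)$ is set to $J(x) = R_{H_i}(x)$ (or $\{x\}$) for a source vertex, giving the trivial decomposition with $k=0$; properties (PD0)--(PD4) hold vacuously or follow from (BR1)--(BR5). For the inductive step, I would examine each place where a path $P(x)$ is created or modified: Step (4-1) of $\Search$ (labeling $\bar u$), Step (4-3) (labeling $b_i$ and setting $\rho$, $I$ on $H_i^\bullet$), and Steps 3 and 4 of both $\Blossom(v,u)$ and $\DBlossom(v,H_i)$. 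In each case the new path is built by concatenating an already-decomposed path $P(v)$ (or $P(u)$) with a new ``fraction'' of the form $I(u)J(x_0)$ or $\overline{R_{H_i}(y)}$, and one verifies that appending this fraction preserves the six properties.

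The key observations driving each case are: (PD0) is immediate from the remark at the end of Section~\ref{sec:createblossom} that $J$ and $I$ values are always singletons or (reversed) routings; (PD1) holds because the new vertex pair $(x_i, y_i)$ is exactly the adjacency in $G^\circ$ that triggered the labeling step (e.g., the edge $(v,u)$ in Step 3 or the edge $(v,y)$ in (4-3)); (PD2) records that consecutive fractions are glued along a line or dummy line, which is visible from the explicit construction (in (4-1), $u\bar u$ is a line; in (4-3), the routing $R_{H_i}(y)$ starts at $t_i$ and $b_i$ is its mate, etc.); (PD3) and (PD4) are the substantive ``minimality'' properties, and they follow from the order in which the breadth-first search examines vertices together with the choice rules (``choose $u$ minimum with respect to $\prec$'', ``choose $y$ minimum with respect to $<_{H_i}$''). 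Specifically, a labeled vertex $z$ with $z \prec x_i$ was examined before $x_i$ was labeled, so if $z$ were adjacent to the fraction $I(y_i)\cup J(x_{i-1})$, the search would have extended from $z$ first, contradicting the assignment of $\rho$. For the second clause of (PD4), the minimality of $y_i$ in $<_{H_j}$ among neighbors of $x_i$ inside $H_j^\bullet$ is exactly what rules out adjacency to the earlier vertices $\{z \in I(y_i) \mid z <_{H_j} y_i\}$.

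The most delicate part will be the $\Blossom(v,u)$ case, Step 4: when a blossom $H$ is created, the paths $P(x)$ for $x \in H^\bullet$ are rewritten wholesale (cases (i)--(vi)), mixing pieces of $P(v)$ and $P(u)$ around the new bud/tip $b,t$, and the ordering $\prec$ is extended in a prescribed way. Here one must check that the rewritten paths still decompose correctly --- in particular that the newly inserted segment $bt$ and the reversal $\overline{P(u|x)}$ (or $\overline{P(v|x)}$) splice into valid $J$/$I$ fractions, and that the carefully stated tie-breaking rules on the insertion order (the three bullet conditions at the end of Step 4) are precisely what is needed to maintain (PD3) and (PD4) for the reversed portions. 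I would also need Lemma~\ref{clm:71} and Corollary~\ref{cor:78} to control how $F^\circ$ changes under the pivoting in Step 2, since (PD3)--(PD4) refer to adjacency in the \emph{updated} $G^\circ$: an edge that newly appears or disappears at the pivot must be shown not to spoil the non-adjacency claims, and this is exactly where the ``$(x,r),(g,y)\in F^\circ$'' alternative in Lemma~\ref{clm:71} gets used to trace the affected edges back to ones already understood. The $\DBlossom$ case is analogous but additionally involves the dual update and $\Expand(H_i)$ in Step 4; Lemma~\ref{lem:dblossomexpand} is the tool that transfers adjacency statements across the expansion.
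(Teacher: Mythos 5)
Your plan is correct and follows essentially the same route as the paper: an invariant check over the operations of $\Search$, with the blossom-creation steps as the only substantive cases, handled via Lemma~\ref{clm:71} (and Corollary~\ref{cor:78}) to trace how pivoting changes $F^\circ$ for (PD3)--(PD4), and Lemma~\ref{lem:dblossomexpand} for the $\Expand$ step inside $\DBlossom$. The paper likewise dismisses the non-blossom updates as immediate and verifies the two path-rewriting forms $P(r)bR_H(x)$ and $P(r)btP(x|r)$ case by case, exactly as you outline.
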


\begin{figure}[htbp]
  \centering
    \includegraphics[width=14cm]{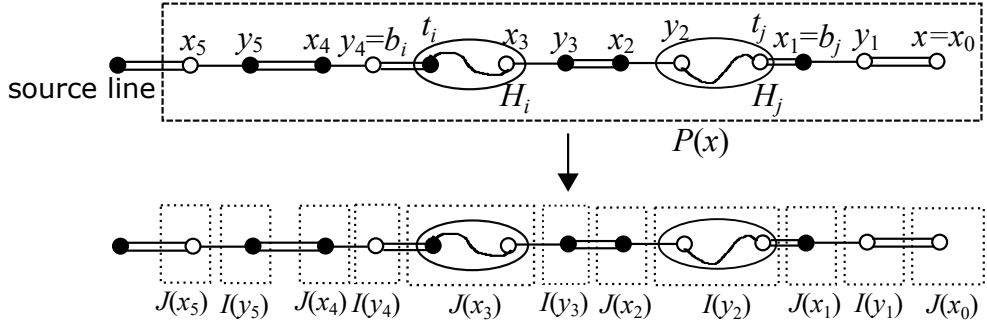}
    \caption{An example of the decomposition.} 
    \label{fig:revisionfig16} 
\end{figure}

\begin{proof}
The procedure $\Search$ naturally defines the decomposition
$$P(x)=J(x_k) I(y_k) \cdots J(x_1)I(y_1)J(x_0).$$  
It suffices to show that $\Blossom(v, u)$ and $\DBlossom(v, H_i)$ do not violate the conditions (PD0)--(PD4), 
since we can easily see that the other operations do not violate them. 

We first consider the case when $\Blossom(v, u)$ is applied to obtain a new blossom $H$. 
In $\Blossom(v, u)$, $P(x)$ is updated or defined as 
$P(x) := P(x)$, $P(x) := P(r) bt P(x|r)$, or $P(x) := P(r) b R_{H}(x)$. 
Let $F^\circ$ (resp.~$\hat F^\circ$) be the tight edge sets before (resp.~after) 
the execution of $\Blossom(v, u)$ that adds $H$ to $\Lambda$.

Suppose that $P(x)$ is defined by $P(x) := P(r) I(b) J(x)$, 
where $I(b) = \{b\}$ and $J(x) = R_{H}(x)$. 
In this case, (PD0), (PD1), and (PD2) are trivial. 
We now consider (PD3). 
Since $P(r)$ satisfies (PD3), 
in order to show that 
any labeled vertex $z$ with $z \prec x_i$ is not adjacent to $I(y_i) \cup J(x_{i-1})$ in $\hat G^\circ=(V^*,\hat{F}^\circ)$, 
it suffices to consider the case when $x_i = r$, $y_i = b$, and $x_{i-1} = x$ (see Fig.~\ref{fig:revisionfig18}). 
Assume to the contrary that $z \prec r$ is adjacent to $I(b) \cup J(x)$ in $\hat G^\circ$. 
Since $z$ is not adjacent to $I(b) \cup J(x)$ in $G^\circ$ by the procedure, 
Lemma~\ref{clm:71} shows that $(z, g) \in F^\circ$. 
This contradicts that $z \prec x_i =r$ and the definition of $H$. 
To show (PD4), 
it suffices to consider the case when $x_i = r$. 
In this case, since $r$ is not adjacent to $H$ in $\hat G^\circ$ by 
Lemma~\ref{lem:72}, 
$P(x)$ satisfies (PD4).

\begin{figure}[htbp]
  \centering
    \includegraphics[width=8cm]{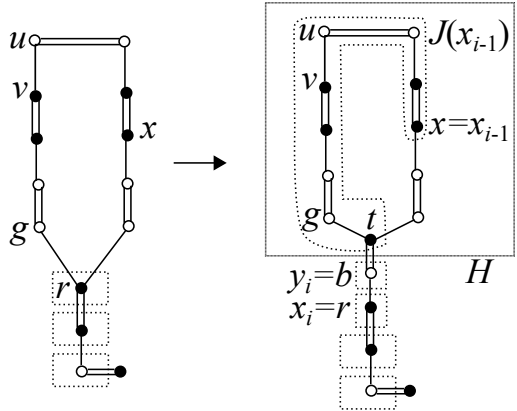}
    \caption{The case of $P(x) := P(r) I(b) J(x)$.} 
    \label{fig:revisionfig18} 
\end{figure}

Suppose that 
$P(x)$ is updated as
$P(x) := P(x)$ or
$P(x) := P(r) I(b) J(t) P(x|r)$, 
where $I(b) = \{b\}$ and $J(t) = \{t\}$ (see Fig.~\ref{fig:revisionfig19} for an example). 
In this case, (PD0), (PD1), and (PD2) are trivial. 
We now consider (PD3). 
Since (PD3) holds before creating $H$, 
in order to show that 
any labeled vertex $z$ with $z \prec x_i$ is not adjacent to $w \in I(y_i) \cup J(x_{i-1})$ in $\hat G^\circ$, 
it suffices to consider the case when 
(i) $z = t$, or
(ii) $w \in I(b) \cup J(t)$, or
(iii) $(z, g) \in F^\circ$ and $(w, r) \in F^\circ$, or 
(iv) $(w, g) \in F^\circ$ and $(z, g) \in F^\circ$ by Lemma~\ref{clm:71}. 
In the first case, if $(t, w) \in \hat F^\circ$, then $(r, w) \in F^\circ$, 
which contradicts that (PD3) holds before creating $H$. 
In the second case, if $w=b$, then $(z, w) \in \hat F^\circ$ implies that $(z, g) \in F^\circ$,
which contradicts that $z \prec x_i = r$ and the definition of $H$. 
If $w=t$, then $(w, z) \in \hat F^\circ$ implies that $(r, z) \in F^\circ$, 
which contradicts that $r$ and $z$ are labeled.
In the third case, $(w, r) \in F^\circ$ implies $x_i \preceq r$ as (PD3) holds before creating $H$. 
By the definition of $H$, however, $z \prec x_i \preceq r$ contradicts $(z, g) \in F^\circ$. 
In the fourth case, $(z, r) \in F^\circ$ contradicts that $r$ and $z$ are labeled. 
By these four cases, we obtain (PD3).

\begin{figure}[htbp]
  \centering
    \includegraphics[width=8cm]{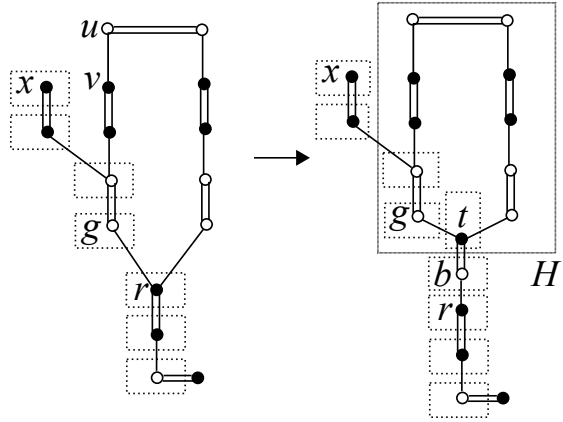}
    \caption{The case of $P(x) := P(r) I(b) J(t) P(x|r)$.} 
    \label{fig:revisionfig19} 
\end{figure}

We next consider (PD4). 
Since (PD4) holds before creating $H$, 
in order to show that $x_i$ is not adjacent to $w \in J(x_{i-1})$
or $w \in \{z \in I(y_i) \mid  z <_{H_j} y_i \}$ in $\hat F^\circ$
it suffices to consider the case when 
(i) $x_i = r$, or 
(ii) $x_i = t$, or
(iii) $(x_i, w)$ crosses $H$.  
In the first case, the claim is obvious. 
In the second case, if $(t, w) \in \hat F^\circ$, then $(r, w) \in F^\circ$, 
which contradicts that (PD4) holds before creating $H$. 
In the third case, since $x_i \in H$ and $w \not\in H$, Lemma~\ref{clm:71} implies that 
it suffices to consider the case when $(w, g) \in F^\circ$ and $(x_i, r) \in F^\circ$,  
which contradicts that $x_i$ and $r$ are labeled.
By these three cases, we obtain (PD4).

We can show that $\DBlossom(v, H_i)$ does not violate (PD0)--(PD4) in a similar manner
by observing that $P(x)$ is updated or defined as 
$P(x) := P(x)$ or $P(x) := P(v) R_H(x)$ in $\DBlossom(v, H_i)$. 
We note that 
$\Expand(H_i)$ in $\DBlossom$ does not affect (PD0)--(PD4)
by Lemma~\ref{lem:dblossomexpand}. 
\end{proof}

Recall that we assume the conditions (BT1), (BT2), (DF1)--(DF3), and (BR1)--(BR5).  
We are now ready to show the validity of Step (3-1) of $\Search$. 

\begin{lemma}
\label{lem:aug}
If $\Search$ returns $P:= P(v) \overline{P(u)}$ in {\em Step (3-1)}, 
then $P$ is an augmenting path.  
\end{lemma}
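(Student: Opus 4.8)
The plan is to push everything through the block decomposition of Lemma~\ref{lem:dec}. Applying that lemma to the two labeled vertices $v$ and $u$ gives
$$P(v) = J(x_k)I(y_k)\cdots J(x_1)I(y_1)J(x_0), \qquad P(u) = J(x'_{k'})I(y'_{k'})\cdots J(x'_1)I(y'_1)J(x'_0),$$
with $x_0=v$ and $x'_0=u$; concatenating, and using that Step~(3-1) is entered only when $(v,u)\in F^\circ$ with $K(v)\neq K(u)$, we obtain the explicit block decomposition
$$P = J(x_k)I(y_k)\cdots J(x_0)\,\overline{J(x'_0)}\,\overline{I(y'_1)}\cdots\overline{J(x'_{k'})},$$
in which consecutive blocks are glued by the bridge edges $(x_i,y_i)$ and $(x'_i,y'_i)$ (by (PD1)) and by the central edge $(v,u)$, while the first element of $J(x_{i-1})$ is joined to the last element of $I(y_i)$ by a line or a dummy line (by (PD2)), and likewise on the $u$-side. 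By (PD0) each block is either a singleton or a set $R_{H}(z)$ for some $H\in\Lambda$ and $z\in H^\bullet$.

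For (AP1) I would observe that, by (BR1)--(BR3), every block $R_{H}(z)$ is built from normal lines, dummy lines, and exactly one distinguished vertex --- the tip $t_H$ if $H\in\Lambda_{\rm n}$, a source vertex if $H\in\Lambda_{\rm s}$ --- and that in the construction of $R$ inside $\Blossom$ and $\DBlossom$ the tip $t_H$ of any blossom occurring in a block is immediately preceded along $P$ by its bud; combined with (PD2), this shows every tip in $P$ carries its bud, so $P$ is a disjoint union of normal lines, dummy lines, and the two extreme vertices. Those extreme vertices are the first vertices of $P(v)$ and $P(u)$, which lie in distinct source lines by the hypothesis of Step~(3-1), giving (AP1). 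For (AP2), if $H_i\in\Lambda$ meets $P$, then laminarity of $\Lambda$ together with (BR1) and (BR3) forces the trace $P\cap H_i$ to coincide with a single block $R_{H_i}(x_i)$ for an appropriate $x_i\in H^\bullet_i$: any block $R_{H}(z)$ with $H_i\subsetneq H$ meeting $H_i$ already contains $P\cap H_i$ in the prescribed form, and the case $H_i\subseteq H$ with $H$ the block blossom is immediate.

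The core of the proof is (AP3). The candidate tight perfect matching $M$ of $G^\circ[P]$ consists of the central edge $(v,u)$, all bridge edges $(x_i,y_i)$ and $(x'_i,y'_i)$, all dummy lines contained in $P$, and, inside each nontrivial block $R_{H}(z)$, the unique tight perfect matching of $R_{H}(z)\setminus\{z\}$ granted by (BR4); a short count shows that $M$ covers $P$, every edge of $M$ lies in $F^\circ$ hence is tight, and each positive blossom is crossed by a single edge of $M$ in view of (AP2). For uniqueness, let $M'$ be an arbitrary tight perfect matching of $G^\circ[P]$. Processing the blocks in the order $\prec$ in which their routing vertices were labeled, I would prove inductively that in each nontrivial block $R_{H}(z)$ the set $Z$ of vertices matched by $M'$ to a vertex outside the block equals $\{z\}$: by (PD3) and (PD4), no vertex of an already-processed block --- in particular no already-fixed bridge endpoint --- is adjacent in $G^\circ$ to this block except through the designated edge, so $Z\subseteq R_{H}(z)\cap H^\bullet$ with every vertex of $Z$ being $\ge_{H}z$; and since $M'$ is consistent with every positive blossom, $|H'\cap Z|\le 1$ for each positive blossom $H'$. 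If $Z\ne\{z\}$, then (BR5) asserts that $G^\circ[R_{H}(z)\setminus Z]$ has no tight perfect matching, contradicting that $M'$ restricted to $R_{H}(z)\setminus Z$ is one; hence $Z=\{z\}$, and (BR4) then forces $M'$ to agree with $M$ on the block. The bridge and dummy-line edges are thereby determined, closing the induction, so $G^\circ[P]$ has a unique tight perfect matching and $P$ is an augmenting path.

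I expect the delicate step to be the uniqueness half of (AP3): correctly identifying, for an arbitrary tight perfect matching $M'$ and an arbitrary block $R_{H}(z)$, the precise set of block vertices that $M'$ is allowed to match outward, so that the three hypotheses of (BR5) are met. This is exactly where the bookkeeping of Lemma~\ref{lem:dec} --- above all (PD3) and (PD4), which record that the search created no tight edge linking a block to an earlier part of the path beyond the intended ones --- must be combined with the positive-blossom consistency condition, and it is the step for which (BR4) and (BR5) were designed in Section~\ref{sec:propertyR}.
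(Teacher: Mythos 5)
Your proposal is correct and follows essentially the same route as the paper: decompose $P(v)$ and $P(u)$ via Lemma~\ref{lem:dec}, then establish the unique tight perfect matching block by block, using (PD3)/(PD4) to confine the outward-matched set $Z$ of each block to vertices $\ge_H$ its index vertex, consistency with positive blossoms to get $|H'\cap Z|\le 1$, (BR5) to force $Z=\{z\}$, and (BR4) for uniqueness inside the block. The paper organizes this as a double induction on the truncations $X_{ij}$ and dismisses (AP1)--(AP2) as immediate, whereas you spell those out, but the substance is identical.
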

\begin{proof}
It suffices to show that
$G^\circ[P]$ has a unique tight perfect matching. 
By Lemma~\ref{lem:dec}, $P(v)$ and $P(u)$ 
are decomposed as $P(v)=J(v_k)I(s_k)\cdots J(v_1)I(s_1)J(v_0)$ and 
$P(u)=J(u_l)I(r_l)\cdots J(u_1)I(r_1)J(u_0)$. For each pair of $i\leq k$ and $j\leq l$, 
let $X_{ij}$ denote the set of vertices in the subsequence 
$$
J(v_i)I(s_i)\cdots J(v_1)I(s_1)J(v_0)\overline{J(u_0)} \, \overline{I(r_1)}\,
\overline{J(u_1)}\cdots \overline{I(r_j)}\,\overline{J(u_j)}
$$ of $P$. 
We intend to show inductively that $G^\circ[X_{ij}]$ has 
a unique tight perfect matching. 

We first show that $G^\circ[X_{00}] = G^\circ[J(u)\cup J(v)]$ has a unique tight perfect matching. 
Since $u$ and $v$ are adjacent in $G^\circ$, (PD0) and (BR4) guarantee that $G^\circ[J(u)\cup J(v)]$ has a tight perfect matching.
Let $M$ be an arbitrary tight perfect matching in $G^\circ[J(u)\cup J(v)]$,
and let $Z$ be the set of vertices in $J(v)$ adjacent to $J(u)$ in $M$. 
If $J(v) = \{v\}$, then it is obvious that $Z = \{v\}$. 
Otherwise, $J(v) = R_{H_i}(v)$ for some $H_i \in \Lambda$.   
For any positive blossom $H_j \in \Lambda$, since $M$ is consistent with $H_j$ by the definition of a tight matching, 
we have that $|H_j \cap Z| \le 1$. 
Since there are no edges of $G^\circ$ between $J(u)$ and $\{y \in J(v) \mid y \prec v \}$, 
we have that $z \ge_{H_i} v$ for any $z \in Z$. 
Furthermore, since there is an edge in $M$ connecting each $z \in Z$ and $J(u)$, we have $Z \subseteq J(v) \cap H^-_i$. 
Then it follows from (BR5) that 
$G^\circ[J(v)\setminus Z]$ has no tight perfect matching unless $Z=\{v\}$. 
This means $v$ is the only vertex in $J(v)$ adjacent to $J(u)$ in $M$.
Note that $G^\circ[J(v)\setminus\{v\}]$ has a unique tight perfect matching by (BR4), 
which must form a part of $M$. Let $z$ be the vertex adjacent to $v$ in $M$. 
Since the vertices in $\{y \in J(u) \mid y \prec u \}$ are not adjacent to $v$ in $G^\circ$, 
we have $z \ge_{H_j} u$ if $J(u) = R_{H_j}(u)$ for some $H_j \in \Lambda$ (see Fig.~\ref{fig:revisionfig20}). 
By (BR5) again,  $G^\circ[J(u)\setminus\{z\}]$ has 
no tight perfect matching unless $z=u$. This means $M$ must contain the edge $(u, v)$. 
Note that $G^\circ[J(u)\setminus\{u\}]$ has a unique tight perfect matching by (BR4), 
which must form a part of $M$. Thus $M$ must be the unique tight perfect matching in 
$G^\circ[J(u)\cup J(v)]$. 

\begin{figure}[htbp]
  \centering
    \includegraphics[width=11cm]{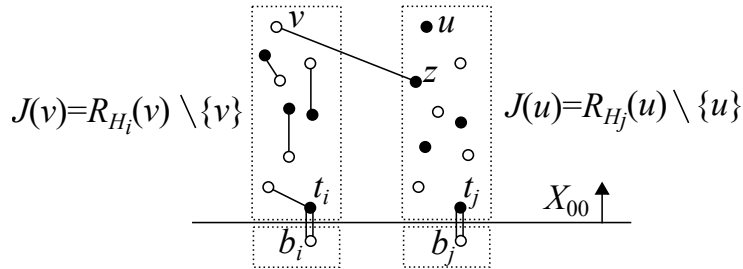}
    \caption{An example of $G^\circ[X_{00}]$. Real lines represent the edges in $M$.} 
    \label{fig:revisionfig20} 
\end{figure}

We now show the statement for general $i$ and $j$ assuming that the same 
statement holds if either $i$ or $j$ is smaller. 
Suppose that $v_i \prec u_j$. Then there are no edges of $G^\circ$ 
between $X_{ij}\setminus J(v_i)$ and $\{y \in J(v_i) \mid y \prec v_i \}$ by (PD3) of Lemma~\ref{lem:dec}.  
Let $M$ be an arbitrary tight perfect matching in $G^\circ[X_{ij}]$,  
and let $Z$ 
be the set of vertices in $J(v_i)$ adjacent to $X_{ij}\setminus J(v_i)$ in $M$. 
Then, by the same argument as above, 
$G^\circ[J(v_i)\setminus Z]$ has no tight perfect matching unless $Z=\{v_i\}$. 
Thus $v_i$ is the only vertex in $J(v_i)$ matched to $X_{ij}\setminus J(v_i)$ in $M$.  
Since $v_i$ is not adjacent to $X_{i-1,j}$ in $G^\circ$ by (PD3) and (PD4) of Lemma~\ref{lem:dec}, 
an edge connecting $v_i$ and $I(s_i)$ must belong to $M$. 
We note that it is the only edge in $M$ between $I(s_i)$ and $X_{ij} \setminus I(s_i)$ since $M$ is tight 
and $I(s_i)$ is equal to either 
$\{s_i\}$ or $\overline{R_{H}(s_i)}$ 
for some positive blossom $H \in \Lambda$. 
Let $z$ be the vertex adjacent to $v_i$ in $M$. 
By (BR5), $G^\circ[I(s_i)\setminus \{z\}]$ has no tight perfect matching unless $z = s_i$ (see Fig.~\ref{fig:revisionfig21}). 
This means that $M$ contains the edge $(v_i, s_i)$. 
Note that each of $G^\circ[J(v_i)\setminus\{v_i\}]$ and $G^\circ[I(s_i)\setminus \{s_i\}]$ has 
a unique tight perfect matching by (BR4), and so does $G^\circ[X_{i-1,j}]$ by induction hypothesis. 
Therefore, $M$ is the unique tight perfect matching in $G^\circ[X_{ij}]$. 
The case of $v_i \succ u_j$ can be dealt with similarly. 
Thus, we have seen that $G^\circ[X_{kl}] = G^\circ[P]$ has a unique tight perfect matching.
\end{proof}

\begin{figure}[htbp]
  \centering
    \includegraphics[width=14cm]{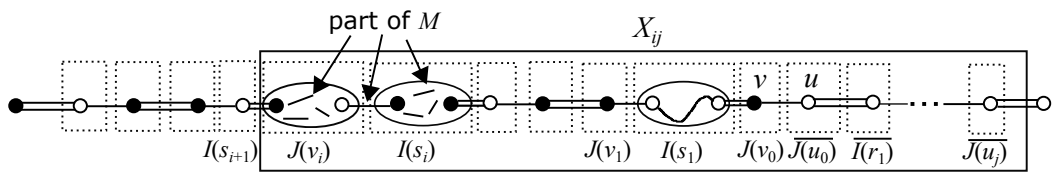}
    \caption{An example of $G^\circ[X_{ij}]$.} 
    \label{fig:revisionfig21} 
\end{figure}

This proof implies the following corollaries. 

\begin{corollary}
\label{cor:reachable}
For any labeled vertex $v \in V^*$, 
$G^\circ[P(v) \setminus \{v\}]$ has a unique tight perfect matching. 
\end{corollary}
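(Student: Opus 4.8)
The plan is to re-run, on a single path, the inductive argument that underlies the proof of Lemma~\ref{lem:aug}. First I would apply Lemma~\ref{lem:dec} to decompose $P(v) = J(x_k)I(y_k)\cdots J(x_1)I(y_1)J(x_0)$ with $x_k \prec \cdots \prec x_1 \prec x_0 = v$. For $0 \le i \le k$, let $Y_i$ be the vertex set of the suffix $J(x_i)I(y_i)\cdots J(x_1)I(y_1)J(x_0)$ with $v$ removed, so that $Y_0 = J(v)\setminus\{v\}$ and $Y_k = P(v)\setminus\{v\}$. I would then prove by induction on $i$ that $G^\circ[Y_i]$ has a unique tight perfect matching.

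For the base case, (PD0) of Lemma~\ref{lem:dec} says $J(v)$ is either $\{v\}$ or $R_{H_j}(v)$ for some $H_j \in \Lambda$. In the first case $Y_0 = \emptyset$ and the claim holds by the stated convention; in the second case it is precisely (BR4). For the inductive step, passing from $Y_{i-1}$ to $Y_i = Y_{i-1} \cup I(y_i) \cup J(x_i)$, I would reproduce the ``$v_i \prec u_j$'' case of the inductive step in the proof of Lemma~\ref{lem:aug}. Since $x_i$ precedes, with respect to $\prec$, every anchor already processed, (PD3) and (PD4) guarantee that no vertex of $Y_i \setminus J(x_i)$ is adjacent in $G^\circ$ to a vertex of $J(x_i)$ that precedes $x_i$; then (BR5) forces $x_i$ to be the unique vertex of $J(x_i)$ matched outside $J(x_i)$ in any tight perfect matching $M$, (PD1) and (BR5) applied to $I(y_i)$ force the edge $(x_i, y_i)$ into $M$, and consistency of $M$ with the positive blossoms confines the remaining edges of $M$ to $J(x_i)\setminus\{x_i\}$, $I(y_i)\setminus\{y_i\}$, and $Y_{i-1}$. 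Uniqueness on these three pieces follows from (BR4), (BR4), and the induction hypothesis, respectively, so $M$ is unique.

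The only genuine difference from Lemma~\ref{lem:aug} is the base of the induction: there it was the two-sided set $J(u)\cup J(v)$, here it is the one-sided, $v$-deleted set $J(v)\setminus\{v\}$. I do not expect this to cause real difficulty, since the inductive step of Lemma~\ref{lem:aug} uses the previously constructed set only as a black box carrying a unique tight perfect matching, and $Y_0$ has that property by (BR4) or trivially. Accordingly, the main obstacle is merely bookkeeping: confirming that the decomposition indices line up and that, when $i=1$, the gluing condition (PD2) between $I(y_1)$ and $J(x_0)$ is unaffected by the deletion of $v$ — which it is, since in the full path $P$ the vertex $v$ is matched across the seam to the neighboring labeled vertex, so removing it only deletes that single cross edge.
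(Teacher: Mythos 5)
Your proposal is correct and matches the paper's intent exactly: the paper states this corollary as an immediate consequence of the proof of Lemma~\ref{lem:aug}, and your argument is precisely the one-sided version of that induction, with the base case $J(v)\setminus\{v\}$ handled by (BR4) (or trivially when $J(v)=\{v\}$).
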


\begin{corollary}
\label{cor:uniqueedgeaug}
If $\Search$ returns $P$, 
then the unique tight matching in $G^\circ[P]$ 
contains exactly one edge connecting $H_i$ and $V^* \setminus H_i$
for each $H_i \in \Lambda$ with $P \cap H_i \not= \emptyset$. 
\end{corollary}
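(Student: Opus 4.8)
The plan is to combine Lemma~\ref{lem:aug}, which says that the output $P$ is an augmenting path and hence $G^\circ[P]$ has a unique tight perfect matching $M$, with the structural property (AP2) of augmenting paths together with property (BR4) of the routings. First I would recall from (AP2) that for each $H_i \in \Lambda$ with $P \cap H_i \neq \emptyset$ we have $P \cap H_i = R_{H_i}(x_i)$ for some $x_i \in H^\bullet_i$. By (BR4), the induced subgraph $G^\circ[R_{H_i}(x_i) \setminus \{x_i\}]$ has a (unique) tight perfect matching $N_i$. The key observation is that $N_i$ must coincide with the restriction of $M$ to $R_{H_i}(x_i) \setminus \{x_i\}$: indeed, since $M$ is the unique tight perfect matching of $G^\circ[P]$, its restriction to any induced subgraph on an even vertex set that it saturates is a tight perfect matching there, and by uniqueness in (BR4) this restriction equals $N_i$ provided $M$ saturates $R_{H_i}(x_i) \setminus \{x_i\}$ internally.

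So the crux is to show that $x_i$ is the unique vertex of $R_{H_i}(x_i)$ matched by $M$ to a vertex outside $R_{H_i}(x_i)$; equivalently, that $M$ matches every vertex of $R_{H_i}(x_i) \setminus \{x_i\}$ within $R_{H_i}(x_i)$. This is exactly the kind of statement established inside the proof of Lemma~\ref{lem:aug}: there, letting $Z$ be the set of vertices of a block $J(v_i) = R_{H_i}(v_i)$ that are matched by the tight matching to vertices outside the block, one uses consistency of $M$ with every positive blossom (so $|H_j \cap Z| \le 1$), the ordering property $z \ge_{H_i} v_i$ for $z \in Z$ coming from (PD3)/(PD4), and $Z \subseteq R_{H_i}(v_i) \cap H^-_i$, and then invokes (BR5) to conclude $Z = \{v_i\}$. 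I would reuse precisely this argument, applied to the block of the decomposition of $P$ corresponding to $H_i$: the only subtlety is that $H_i$ need not be a maximal blossom appearing directly as a $J(\cdot)$-block, but any $H_i$ with $P \cap H_i \neq \emptyset$ satisfies $P \cap H_i = R_{H_i}(x_i)$ by (AP2), and $R_{H_i}(x_i)$ is contained in a single $J$- or $I$-block (by (BR3), nested blossoms are either entirely inside $R_{H_i}(x_i)$ or disjoint from it), so the same consistency-plus-(BR5) reasoning localizes to $R_{H_i}(x_i)$.

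Once we know $M$ matches $R_{H_i}(x_i) \setminus \{x_i\}$ internally, the vertex $x_i$ is matched by $M$ to some vertex $y \notin R_{H_i}(x_i)$ (it cannot be matched inside, since $R_{H_i}(x_i) \setminus \{x_i\}$ is already fully matched and $|R_{H_i}(x_i)|$ is odd by (BR2) — it consists of lines, dummy lines, and one extra vertex $t_i$ or a source vertex). The edge $(x_i, y)$ crosses $H_i$, and it is the unique such edge of $M$: any edge of $M$ incident to $H_i$ is either internal to $R_{H_i}(x_i)$ (hence does not cross $H_i$, since $R_{H_i}(x_i) \subseteq H_i$) or is the edge $(x_i, y)$. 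Therefore $M$ contains exactly one edge connecting $H_i$ and $V^* \setminus H_i$, which is the claim. I expect the main obstacle to be the bookkeeping in the previous paragraph — namely verifying that the set $Z$ of ``externally matched'' vertices of $R_{H_i}(x_i)$ genuinely satisfies the three hypotheses of (BR5) ($Z \subseteq R_{H_i}(x_i) \cap H^-_i$, the ordering condition $z \ge_{H_i} x_i$, and $|H_j \cap Z| \le 1$ for positive $H_j$) even when $H_i$ is a proper sub-blossom of a block rather than the block itself — but this follows from the same use of (PD3), (PD4), and tightness/consistency of $M$ as in Lemma~\ref{lem:aug}, now applied to the induced subgraph $G^\circ[P \setminus (R_{H_i}(x_i) \setminus \{x_i\})]$ whose complement in $P$ is exactly $R_{H_i}(x_i) \setminus \{x_i\}$.
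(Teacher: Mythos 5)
Your proposal is correct and follows essentially the same route as the paper: the paper derives this corollary directly from the proof of Lemma~\ref{lem:aug}, whose inductive argument (via (AP2), (BR4), (BR5), and consistency of a tight matching with positive blossoms) shows that for each block $R_{H_i}(x_i)$ the vertex $x_i$ is the unique vertex matched outside, which is exactly the reasoning you reuse. The only caveat is the point you yourself flag — localizing the (BR5) hypotheses to blossoms nested strictly inside a $J$- or $I$-block — which requires recursing into the decomposition of $R_H(x)$ from Lemma~\ref{lem:BR45}; the paper leaves this at the same level of detail as you do.
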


\subsection{Routing in Blossoms}
\label{sec:validrouting}

First, to see that $R_H(x)$ is well-defined for each $x \in H^\bullet$ when we create a new blossom $H$, 
we observe that every vertex $x \in H^\bullet$ satisfies one of the six cases in Step 4 of $\Blossom(v, u)$. 
This is because, if $x \in H_i \setminus H^\bullet_i$ for some $H_i \in \Lambda$ with $H_i \subsetneq H$, then $x \not\in H^\bullet$, and   
if $c \not=d$, $K(c) = H_i \cup \{b_i\}$, and $x = b_i = g$ for some $H_i \in \Lambda_{\rm n}$, then $x \not\in H^-$ by Lemma~\ref{lem:72}.

When we create a new blossom $H$ in $\DBlossom(v, H_i)$, 
for each $x \in H^\bullet$, 
$R_{H}(x)$ clearly satisfies (BR1)--(BR5) by Lemma~\ref{lem:dblossomexpand}.
Suppose that a new blossom $H$ is created in $\Blossom(v, u)$.
For each $x \in H^\bullet$, 
$R_{H}(x)$ defined in $\Blossom(v, u)$ also satisfies 
(BR1)--(BR3). 
We will show (BR4) and (BR5) in this subsection.

\begin{lemma}
\label{lem:BR45}
Suppose that $\Blossom(v, u)$ creates a new blossom $H$. 
Then, for each $x \in H^\bullet$, 
$R_{H}(x)$ satisfies {\em (BR4)} and {\em (BR5)}. 
\end{lemma}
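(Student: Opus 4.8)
The plan is to prove (BR4) and (BR5) for $R_H(x)$ by reducing them to Corollary~\ref{cor:reachable} (and the inductive structure behind Lemma~\ref{lem:aug}) applied to the path $P(x)$ constructed inside $\Blossom(v,u)$. Recall that, by Step~5 of $\Blossom(v,u)$, $R_H(x)$ equals $P(x|b)$ when $H$ contains a newly created tip $t$ (and equals $P(x)$ when $H$ contains a source line), so that $R_H(x)\setminus\{x\}$ and $R_H(x)$ are, up to removal of the prefix $P(r)b$ (resp.\ of the source prefix), exactly the "tail" pieces of the decomposition of $P(x)$ furnished by Lemma~\ref{lem:dec}. The first step is therefore to record precisely which prefix of $P(x)$ is cut off and to observe that $P(x|b)$ (resp.\ $P(x)$ in the source case) is itself of the form $J(x_k')I(y_k')\cdots J(x_0')$ with $x_0'=x$, satisfying (PD0)--(PD4); this is immediate from Lemma~\ref{lem:dec} together with the fact that $b$ appears as some $y_i$ with $I(b)=\{b\}$, so cutting before $b$ cuts at a clean boundary of the decomposition.

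Next I would prove (BR4), i.e.\ $G^\circ[R_H(x)\setminus\{x\}]$ has a unique tight perfect matching. The key observation is that $R_H(x)\setminus\{x\} = P(x)$ with the prefix $P(r)b$ removed and the final vertex $x$ removed; equivalently it is the set of vertices in the subsequence $J(x_{k}')I(y_{k}')\cdots I(y_1')J(x_0')$ with the last vertex $x$ of $J(x_0')$ deleted. One then runs the exact inductive argument of the proof of Lemma~\ref{lem:aug} (the $X_{ij}$ induction), restricted to a single "side" rather than two glued sides: using (PD3) to kill edges to earlier-labeled vertices, (BR5) inductively on the sub-blossoms to force $v_i'$ to be matched along the $(v_i',s_i')$ edge, and (BR4) on the sub-blossoms to pin down the rest. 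The base case is $G^\circ[J(x_0')\setminus\{x\}]$, which has a unique tight perfect matching by (BR4) applied to the maximal sub-blossom containing $x$ (or is empty if $J(x_0')=\{x\}$). This yields uniqueness and existence of a tight perfect matching on $R_H(x)\setminus\{x\}$; alternatively, in the newly-created-tip case one can invoke Corollary~\ref{cor:reachable} directly, since $P(x|b)\setminus\{x\} = (P(x)\setminus\{x\})$ minus the prefix $P(r)b$, and the prefix is separated from the tail by the fact (Lemma~\ref{lem:72}) that there is no edge of $F^\circ$ between $r$ and $H$, hence no edge between $P(r)b$ and $R_H(x)$ in any tight perfect matching.

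Finally I would prove (BR5). Fix $x\in H^-$ and a set $Z\subseteq R_H(x)\cap H^-$ with $z\ge_H x$ for all $z\in Z$, $Z\neq\{x\}$, and $|H_j\cap Z|\le 1$ for every positive $H_j\in\Lambda$; the goal is that $G^\circ[R_H(x)\setminus Z]$ has no tight perfect matching. Suppose for contradiction it has one, say $M$. Write $R_H(x)$ in its decomposition $J(x_k')I(y_k')\cdots J(x_0')$; the condition $z\ge_H x$ together with the ordering convention of Step~5 ($<_H$ is the order $\prec$ on the labeled vertices of $H^\bullet$) means every $z\in Z$ lies in a block $J(x_i')$ or $I(y_i')$ whose index vertex was labeled no earlier than $x$, i.e.\ in the "suffix" portion; combined with (PD3)/(PD4), the only way a vertex of $R_H(x)\setminus Z$ can be matched by $M$ across a block boundary is along an $(v_i',s_i')$ type edge, and the only vertex of $J(x_0')$ that can be matched outside $J(x_0')$ is $x$ itself. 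Since $x\notin R_H(x)\setminus Z$ only if $x\in Z$, but $Z\neq\{x\}$ forces at least one other element, the parity/connectivity of the blocks is broken: removing $Z$ from the blocks leaves at least one block $J(x_i')$ (or $I(y_i')$) that must be perfectly matched within itself after deleting a nonempty $Z\cap$(block) containing only vertices $\ge_H$ the block's index vertex — and the inner (BR5) for that sub-blossom says no such tight perfect matching exists, unless that $Z\cap$(block) is the singleton of the index vertex, which propagates the contradiction outward block by block until it reaches $x$. Formalizing this propagation — matching the bookkeeping of which block absorbs which part of $Z$, and checking the $|H_j\cap Z|\le1$ hypothesis is exactly what the inner (BR5) needs — is the main obstacle; it mirrors, but is more delicate than, the single-block argument used for $X_{00}$ in Lemma~\ref{lem:aug}, because here several blocks and their sub-blossoms interact simultaneously. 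The cases of Step~4 of $\Blossom(v,u)$ (i)--(vi) should all be handled uniformly by this decomposition argument, with the $\oplus$/$\ominus$ labels only affecting whether a block is a singleton $\{b_i\}$ or a routed set $R_{H_i}(x)$.
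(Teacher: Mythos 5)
Your overall strategy---decompose $R_H(x)$ into $J/I$ blocks via Lemma~\ref{lem:dec} and run the inductive argument of Lemma~\ref{lem:aug} using the inner (BR4)/(BR5) together with (PD3)/(PD4)---is the paper's strategy, and your treatment of (BR4) for vertices that were already labeled before $H$ was created (via Corollary~\ref{cor:reachable} and the separation of the prefix $P(r)b$) matches the paper's Case~2. But there are two genuine gaps. First, for a vertex $x$ that is \emph{newly} labeled in Step~4 of $\Blossom(v,u)$, the decomposition of $P(x)$ furnished by Lemma~\ref{lem:dec} is $P(r)\,I(b)\,J(x)$ with $J(x)=R_H(x)$ appearing as a \emph{single} block, so $R_H(x)$ is not a ``tail of the decomposition of $P(x)$'' in any useful sense; the finer decomposition one actually needs is the gluing of the blocks of $P(v)$ with the reversed blocks of $P(u|x)$ at the edge $(v,u)$ (a genuinely two-sided object, contrary to your ``single side'' framing). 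For the same reason, invoking Corollary~\ref{cor:reachable} on $P(x)$ for such an $x$ is circular: the corollary's proof at that point would already require (BR4)/(BR5) for the block $R_H(x)$, which is what you are proving.

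Second, and more seriously, your (BR5) ``propagation'' has no mechanism to close the argument at the top of the chain. After forcing each block to be matched along its $(v_i,s_i)$ edge, the remaining bad configuration is that the tip $t=v_k$ is matched into the last block $I(r_{l+1})$ (the one containing $x$) or into $I(s_k)$ with a vertex strictly larger than $s_k$; your sketch simply says the contradiction ``propagates until it reaches $x$,'' but in this terminal case the propagation stops at $t$ without contradiction. Ruling it out requires two specific facts that your proposal never uses: that $g\notin H^-$ (Lemma~\ref{lem:72}, since after the pivot there is no edge of $F^*$ between $g$ and $V^*\setminus H$), and the ordering rule in Step~4 of $\Blossom(v,u)$ that no element of $H^\bullet$ is chosen between $g$ and $h$, which pins down $h\le_H y\le_H x\le_H z\le_H g$ and forces $z=g$, a contradiction. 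Without identifying this terminal case and these two ingredients, the (BR5) argument does not go through; flagging the propagation as ``the main obstacle'' correctly locates the difficulty but does not resolve it.
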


\begin{proof}
We only consider the case when $H$ contains no source line, 
since the case with a source line can be dealt with in a similar way. 
We note that a vertex $y \in H$ is adjacent to $r$ in $G^\circ$ before $\Blossom(v,u)$
if and only if $y$ is adjacent to $t$ in $G^\circ$ after $\Blossom(v,u)$. 
If $x = t$, the claim is obvious. 
We consider the other cases separately. 

\medskip
\noindent
\textbf{Case 1.}
Suppose that $x \in H^\bullet$ was not labeled before $H$ is created. 

Among six cases in Step 4 of $\Blossom(v,u)$, we consider the cases of  (i), (iii), and (v), since
the other cases can be dealt with in a similar manner. 

By Lemma~\ref{lem:dec}, 
$P(v)$ can be decomposed as 
$$P(v)=P(r) b t I(s_k) J(v_{k-1}) I(s_{k-1}) \cdots J(v_1) I(s_1) J(v_0)$$ 
with $v=v_0$. 
In the cases of (i) and (iii), 
$P(u|x)$ can be decomposed 
as $J(u_l) I(r_l) \cdots J(u_1)I(r_1)J(u_0)$ with $u_0=u$, where the first element of 
$J(u_l)$ is $\bar x$, and hence  
$$R_{H}(x)= J(v_k) I(s_k) J(v_{k-1}) \cdots I(s_1)J(v_0)
\overline{J(u_0)}\, \overline{I(r_1)} \cdots \overline{I(r_l)}\,\overline{J(u_l)} x$$ with $v_k=t$.
Similarly, in the case of (v), 
$R_{H}(x)$ can be decomposed as
$$R_{H}(x)= J(v_k) I(s_k) J(v_{k-1}) \cdots I(s_1)J(v_0)
\overline{J(u_0)}\, \overline{I(r_1)} \cdots \overline{I(r_l)}\,\overline{J(u_l)} R_{H_i}(x).$$ 
Therefore, in the cases of (i), (iii), and (v),
we have 
$$R_{H}(x)= J(v_k) I(s_k) J(v_{k-1}) \cdots I(s_1)J(v_0)
\overline{J(u_0)}\, \overline{I(r_1)} \cdots \overline{I(r_l)}\,\overline{J(u_l)}\, \overline{I(r_{l+1})}$$ with $v_k=t$ and $r_{l+1} = x$
(see Fig.~\ref{fig:72} for an example).

\begin{figure}[htbp]
  \centering
    \includegraphics[width=5.5cm]{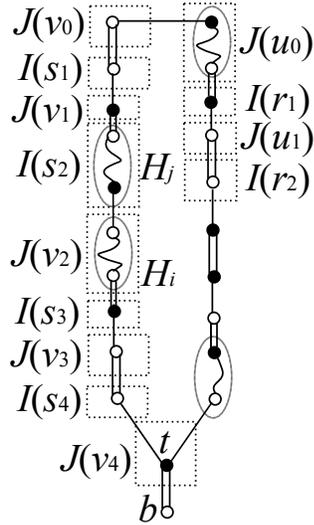}
    \caption{A decomposition of $R_{H}(x)$. In this example, $J(v_1) = \{b_j\}$, 
               $I(s_2) = R_{H_j}(s_2)$, $J(v_2) = R_{H_i}(v_2)$, $I(s_3) = \{b_i\}$, and $J(v_4)=\{ t\}$.} 
    \label{fig:72} 
\end{figure}

We now intend to show that $R_{H}(x)$ satisfies (BR5), that is, $G^\circ[R_{H}(x) \setminus Z]$ has no tight perfect matching
if $Z \subseteq R_{H}(x) \cap H^-$ satisfies that 
$z \ge_{H} x$ for any $z \in Z$, $Z \not= \{x\}$, and $|H_j \cap Z| \le 1$ for any positive blossom $H_j \in \Lambda$. 
Suppose to the contrary that $G^\circ[R_{H}(x) \setminus Z]$ has a tight perfect matching $M$. 
Note that $Z \subseteq I(r_{l+1}) \cup \bigcup_i I(s_i)$, because $z \ge_{H} x$ for any $z \in Z$. 
For each $i$, since either $I(s_i) = \{s_i\}$ or $I(s_i) = R_{H_j}(s_i)$ for some positive blossom $H_j \in \Lambda$, 
we have $|I(s_i) \cap Z| \le 1$. Similarly, $|I(r_{l+1}) \cap Z| \le 1$.
Furthermore, 
if $|I(s_i) \cap Z| = 1$ (resp.~$|I(r_{l+1}) \cap Z| = 1$), then $|I(s_i) \setminus Z|$ (resp.~$|I(r_{l+1}) \setminus Z|$) is even, 
and hence there is no edge in $M$ between $I(s_i)$ (resp.~$I(r_{l+1})$) and its outside, 
because $M$ is a tight perfect matching.
If $Z \subseteq I(r_{l+1})$, then $|I(r_{l+1}) \cap Z| = 1$ and $M$ contains no edge between $I(r_{l+1})$ and the outside of $I(r_{l+1})$, 
which contradicts that $G^\circ[I(r_{l+1}) \setminus Z]$ has no tight perfect matching by (BR5). 
Thus, we may assume that $Z \cap \bigcup_i I(s_i) \not= \emptyset$. 
Since $I(s_i) \cap Z \not= \emptyset$ implies that there exists no edge in $M$ between $I(s_i)$ and the outside of $I(s_i)$, 
we can take the largest number $j$ such that $(v_j, s_j) \notin M$. 
We consider the following two cases separately. 

\medskip

\textbf{Case 1a.} Suppose that $j = k$. 
In this case, since $J(v_k) = \{ t\}$, there exists an edge in $M$ between $t$ and $I(r_{l+1}) \cup (I(s_{k}) \setminus \{s_k\})$. 
See Fig.~\ref{fig:revisionfig22} for an example.
If this edge is incident to $z \in I(s_{k}) \setminus \{s_k\}$, then 
$I(s_k) = R_{H'}(s_k)$ for some positive blossom $H' \in \Lambda$ and $z >_{H'} s_k$ by the procedure, 
and hence $G^\circ[I(s_{k}) \setminus \{z\}]$ has no tight perfect matching by (BR5), which is a contradiction. 
Otherwise, since $v_k = t$ is matched with some vertex $y \in I(r_{l+1})$, 
we have 
$h \in I(r_{l+1})$,  where $h$ is as in Step 4 of $\Blossom(v, u)$. 
This shows that 
$Z \subseteq I(r_{l+1}) \cup I(s_{k})$ as $z \ge_{H} x  = r_{l+1}$ for any $z \in Z$. 
Since $|Z \cap I(r_{l+1})| \le 1$, $|Z \cap I(s_{k})| \le 1$, and
$M$ is a tight perfect matching, we have 
$I(r_{l+1}) \cap Z = \emptyset$, 
$Z = \{ z\}$ for some $z \in I(s_k)$, and 
each of $G^\circ[I(r_{l+1}) \setminus \{y\}]$ and $G^\circ[I(s_{k}) \setminus \{z\}]$ has a tight perfect matching.
This shows that 
$y \le_{H} r_{l+1}$ and $z \le_{H} s_k$
 by (BR5) and the definition of $\le_H$.
Then, we obtain 
$$h \le_{H} y \le_{H} r_{l+1} = x \le_{H} z \le_{H} s_k=g.$$
Since no element is chosen between $g$ and $h$ in Step 4 of $\Blossom(v, u)$,
we have $h=y = r_{l+1} = x$ and $z=s_k=g$,
which contradicts that $z \in H^-$ and $g \not \in H^-$ by Lemma~\ref{lem:72}. 

We note that 
when we apply the same argument to the cases of (ii), (iv), and (vi) 
by changing the roles of $g$ and $h$, 
we obtain $g=y = r_{l+1} = x$.  
Then, this contradicts that $x \in H^-$ and $g \not \in H^-$.

\begin{figure}[htbp]
 \begin{minipage}{0.5\hsize}
  \begin{center}
    \includegraphics[width=5cm]{revisionfig22.pdf}
    \caption{Example of Case 1a.} 
    \label{fig:revisionfig22} 
  \end{center}
 \end{minipage}
 \begin{minipage}{0.5\hsize}
  \begin{center}
    \includegraphics[width=5cm]{revisionfig23.pdf}
    \caption{Example of Case 1b.} 
    \label{fig:revisionfig23} 
  \end{center}
 \end{minipage}
\end{figure}

\medskip

\textbf{Case 1b.} Suppose that $j \le k-1$. 
In this case, since $M$ is a tight perfect matching, for $i=j+1, \dots , k$, we have  
$Z \cap I(s_i) = \emptyset$ and $(v_i, s_i)$ is the only edge in $M$ between $I(s_i)$ and the outside of $I(s_i)$. 
We can also see that $Z \cap J(v_j) = \emptyset$, since $z \ge_{H} x$ for any $z \in Z$. 
We denote by $Z_j$ the set of vertices in $J(v_j)$ matched by $M$ to the outside of $J(v_j)$. 
Since $z \ge_{H} x$ for any $z \in Z$ and $Z \cap I(s_i) \not= \emptyset$ for some $i \le j-1$, 
we have $v_j \prec u_{l+1}$, where $u_{l+1}$ is the vertex naturally defined by the decomposition of $P(u)$ (see Fig.~\ref{fig:revisionfig23}).  
Note that the assumption $j \le k-1$ is used here. 
Then, for any vertex $y \in J(v_j)$ with $y <_{H} v_j$, there is no edge in $M$ connecting 
$y$ and $R_{H}(x) \setminus J(v_j)$ because of the following: 
\begin{itemize}
\item
By (PD3) of Lemma~\ref{lem:dec}, $y$ is not adjacent to $I(s_i) \cup J(v_{i-1})$ for $i \le j$, because $y \prec v_j \preceq v_i$. 
\item
By (PD3) of Lemma~\ref{lem:dec}, $y$ is not adjacent to $I(r_i) \cup J(u_{i-1})$ for $i \le l + 1$, because $y \prec v_j \prec u_{l+1} \preceq u_i$. 
\item
If $z \in J(v_i)$ with $i > j$, then $z$ is not adjacent to $y$ by (PD3) of Lemma~\ref{lem:dec}. 
\item
For $i>j$, $(v_i, s_i)$ is the only edge in $M$ between $I(s_i)$ and its outside, and hence 
there is no edge is $M$ between $I(s_i)$ and $y$.
\end{itemize}
This shows that 
$(Z \cap J(v_j)) \cup Z_j = Z_j \subseteq \{ y \in J(v_j) \mid y \ge_{H} v_j\}$. 
Therefore, by (BR5), if $G^\circ [J(v_j) \setminus (Z \cup Z_j)]$ has a tight perfect matching, then 
$Z_j = \{v_j\}$. 
The vertex $v_j$ is not adjacent to the vertices 
in $R_{H}(x)\setminus(J(v_j)\cup I(s_j) \cup \cdots \cup I(s_k))$
by (PD3) and (PD4) of Lemma~\ref{lem:dec}. 
Since $(v_i, s_i)$ is the only edge in $M$ between $I(s_i)$ and its outside for $i>j$,  
$v_j$ has to be adjacent to $I(s_j)$. 
Furthermore, by $(v_j, s_j) \not\in M$ and by (PD4) of Lemma~\ref{lem:dec}, 
we have that 
$v_j$ is incident to a vertex $z \in I(s_j)$ with $z >_{H'} s_j$, 
where $I(s_j) = R_{H'}(s_j)$ for some positive blossom $H' \in \Lambda$. 
Since $G^\circ[I(s_j) \setminus \{z\}]$ has no tight perfect matching by (BR5), we obtain a contradiction. 

\medskip

We next show that $R_{H}(x)$ satisfies (BR4), that is, 
 $G^\circ[R_{H}(x) \setminus \{x\}]$ has a unique tight perfect matching. 
Let $M$ be an arbitrary tight perfect matching in $G^\circ[R_{H}(x) \setminus \{x\}]$. 
Recall that $r_{l+1} = x$ and either $I(r_{l+1}) = \{r_{l+1}\}$ or $I(r_{l+1}) = R_{H_j}(r_{l+1})$ 
for some positive blossom $H_j \in \Lambda$. 
Since $M$ is a tight perfect matching and $|I(r_{l+1}) \setminus \{x\}|$ is even, 
there is no edge in $M$ between $I(r_{l+1})$ and its outside. 
By (BR4), $G^\circ[I(r_{l+1}) \setminus \{x\}]$ has a unique tight perfect matching, 
which must form a part of $M$.  
On the other hand, 
$$G^\circ[J(v_k) I(s_k) J(v_{k-1}) I(s_{k-1})\cdots J(v_1)I(s_1)J(v_0)
\overline{J(u_0)}\, \overline{I(r_1)}\, \overline{J(u_1)}\cdots \overline{I(r_l)}\,\overline{J(u_l)}]$$
has a unique tight perfect matching by the same argument as Lemma~\ref{lem:aug}. 
By combining them, we have that
$G^\circ[R_{H}(x) \setminus \{x\}]$ has a unique tight perfect matching.

\medskip
\noindent
\textbf{Case 2.}
Suppose that $x \in H$ was labeled before $H$ is created. 

We consider the case of $x \in K(y)$ with $y \in P(v|c)$. 
The case of $x \in K(y)$ with $y \in P(u|d)$ can be dealt with in a similar manner. 
By Lemma~\ref{lem:dec}, $R_{H}(x)$ can be decomposed as 
$$R_{H}(x)= J(v_k) I(s_k) J(v_{k-1}) I(s_{k-1})\cdots J(v_{l+1})I(s_{l+1})J(v_l)$$
with $x = v_l$ (see Fig.~\ref{fig:revisionfig24}). 

\begin{figure}[htbp]
  \begin{center}
    \includegraphics[width=3.3cm]{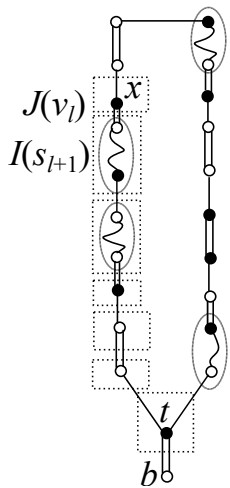}
    \caption{Example of Case 2.} 
    \label{fig:revisionfig24} 
  \end{center}
\end{figure}

We first show that $R_{H}(x)$ satisfies (BR5), that is, 
$G^\circ[R_{H}(x) \setminus Z]$ has no tight perfect matching
if $Z \subseteq R_{H}(x)  \cap H^-$ satisfies that 
$z \ge_{H} x$ for any $z \in Z$, $Z \not= \{x\}$, and $|H_j \cap Z| \le 1$ for any positive blossom $H_j \in \Lambda$. 
Since $z \ge_{H} x$ for any $z \in Z$, 
we have that $Z \subseteq J(v_l) \cup \bigcup_i I(s_i)$, 
which shows that we can apply the same argument as Case 1 to obtain (BR5).  

We next show that $R_{H}(x)$ satisfies (BR4), that is, 
$G^\circ[R_{H}(x) \setminus \{x \}]$ has a unique tight perfect matching. 
By Corollary~\ref{cor:reachable}, $G^\circ[P(x) \setminus \{x\}]$
has a unique tight perfect matching $M$, and a part of $M$ forms a tight perfect matching in 
$G^\circ[R_{H}(x) \setminus \{x \}]$. 
Thus, this matching is a unique tight perfect matching in $G^\circ[R_{H}(x) \setminus \{x \}]$. 
\end{proof}

We note that, for a blossom $H \in \Lambda$,  
creating/deleting another blossom $H'$ does not change $H^-$ and $H^\bullet$
by Corollary~\ref{cor:78} and Lemma~\ref{lem:dblossomexpand}.
We also note that if $R_H(x)$ satisfies (BR1)--(BR5) for $x \in H^\bullet$, 
then creating/deleting another blossom $H'$ does not violate
these conditions by Lemmas~\ref{lem:72},~\ref{clm:71} and~\ref{lem:dblossomexpand}. 
Therefore, Lemma~\ref{lem:BR45} shows that 
the procedure \Search\ keeps the conditions (BR1)--(BR5).

\section{Dual Update}
\label{sec:dualupdatealgo}

In this section, we describe how to modify the dual variables 
when $\Search$ returns $\emptyset$ in Step 2. 
In Section~\ref{sec:infeasible}, 
we show that the procedure keeps the dual variables finite 
as long as the instance has a parity base. 
In Section~\ref{sec:iterations},
we bound the number of dual updates per augmentation. 

Let $R\subseteq V^*$ be the set of vertices that are reached or examined by the search procedure and not contained 
in any blossoms. 
We denote by $R^+$ and $R^-$ the sets of labeled and 
unlabeled vertices in $R$, respectively. In particular, 
the bud $b_i$ of a maximal blossom $H_i$ belongs to 
$R^+$ if $H_i$ is labeled with $\ominus$, and to $R^-$ 
if $H_i$ is labeled with $\oplus$. 
Let $Z$ denote the set 
of vertices in $V^*$ contained in labeled blossoms. The set $Z$ is partitioned into $Z^+$ and $Z^-$, 
where 
\begin{align*}
Z^+ &= \bigcup \{ H_i \mid \mbox{$H_i$ is a maximal blossom labeled with $\oplus$} \}, \\
Z^- &= \bigcup \{ H_i \mid \mbox{$H_i$ is a maximal blossom labeled with $\ominus$} \}. 
\end{align*}
We denote by $Y$ the set of vertices that do not belong to these subsets, i.e., $Y=V^*\setminus (R\cup Z)$. 

For each vertex $v\in R$, we update $p(v)$ as 
$$p(v):= \begin{cases}
 p(v)+\epsilon & (v\in R^+\cap B^*) \\ 
 p(v)-\epsilon & (v\in R^+\setminus B^*) \\
 p(v)-\epsilon & (v\in R^-\cap B^*) \\
 p(v)+\epsilon & (v\in R^-\setminus B^*). 
\end{cases}$$
We also modify $q(H)$ for each maximal blossom $H$ by  
$$q(H):=
\begin{cases}
 q(H)+\epsilon & (H: \mbox{labeled with $\oplus$}) \\
 q(H)-\epsilon & (H: \mbox{labeled with $\ominus$}) \\
 q(H)          & (\mbox{otherwise}).  
\end{cases}$$
To keep the feasibility of the dual variables, $\epsilon$ is determined by 
$\epsilon=\min\{\epsilon_1,\epsilon_2,\epsilon_3,\epsilon_4\}$, where 
\begin{align*}
\epsilon_1 & =  \frac{1}{2}\min\{p(v)-p(u)-Q_{uv}\mid (u, v)\in F^*,\, u, v\in R^+\cup Z^+,\, K(u)\not=K(v) \}, \\
\epsilon_2 & =  \min\{p(v)-p(u)-Q_{uv}\mid (u, v)\in F^*,\, u\in R^+\cup Z^+,\,v\in Y \}, \\ 
\epsilon_3 & =  \min\{p(v)-p(u)-Q_{uv}\mid (u, v)\in F^*,\, u\in Y,\,v\in R^+\cup Z^+ \}, \\
\epsilon_4 & =  \min\{q(H)\mid \mbox{$H$: a maximal blossom labeled with $\ominus$}\}.
\end{align*}
If $\epsilon = + \infty$, then we terminate $\Search$ and conclude that 
there exists no parity base. 
Otherwise, 
while there exists a maximal blossom whose value of $q$ is zero after the dual update, 
delete such a blossom from $\Lambda$
by $\Expand$. Then, apply the procedure $\Search$ again.

\subsection{Detecting Infeasibility}
\label{sec:infeasible}

By the definition of $\epsilon$, 
we can easily see that the updated dual variables are feasible 
if $\epsilon$ is a finite value. 
We now show that we can conclude that the instance has no parity base if $\epsilon = + \infty$. 

A skew-symmetric matrix is called an {\em alternating matrix} 
if all the diagonal entries are zero. 
Note that any skew-symmetric matrix is alternating 
unless the underlying field is of characteristic two.
By a congruence transformation, 
an alternating matrix can be brought into a block-diagonal form 
in which each nonzero block is a $2 \times 2$ alternating matrix. 
This shows that the rank of an alternating matrix is even, 
which plays an important role in the proof of the following lemma. 

\begin{lemma}
\label{lem:noparitybase}
Suppose that there is a source line, and  
suppose also that $\epsilon =  + \infty$
when we update the dual variables.
Then, the instance has no parity base. 
\end{lemma}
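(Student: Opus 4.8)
The plan is to show that when $\epsilon = +\infty$ the remaining source line witnesses that the matrix $A$ restricted to columns outside the set $Y$ has a rank defect, so that no parity base can span all lines. Concretely, I would first use $\epsilon = +\infty$ to read off structural consequences: since $\epsilon_2 = \epsilon_3 = +\infty$, there is no edge of $F^*$ between $R^+ \cup Z^+$ and $Y$ in either direction, and since $\epsilon_4 = +\infty$ no blossom is labeled with $\ominus$, so in fact $R^- = \emptyset$, $Z^- = \emptyset$, and the only labeled vertices are those in $R^+ \cup Z^+$. Thus $C^*$ has no nonzero entry coupling $R^+ \cup Z^+$ with $Y$, and moreover (using that $\Search$ terminated with an empty queue) every vertex of $R^+ \cup Z^+$ that could be reached has been reached, so the ``boundary'' between $R^+\cup Z^+$ and $Y$ is clean in $C^*$. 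Translating through (BT1) back to $C$ and then to $A$, this says the columns of $A$ indexed by $(R^+ \cup Z^+)\cap V$ together with the columns indexed by $Y \cap V$ span complementary pieces of the row space in a way that forces a rank bound.

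Next I would count. Let $S := (R^+ \cup Z^+)\cap V$ be the set of original vertices reachable from source vertices, and let $B$ be the current base. The key point is that because each blossom in $Z^+$ has odd cardinality and contains at most one source line — and there is at least one source line left, residing in some source blossom or as a single source vertex inside $R^+$ — the parity/counting of $|S \cap B|$ versus $|S \setminus B|$ is off by an odd amount, exactly analogous to the deficiency count in the Gallai–Edmonds structure for ordinary matching. Since the alternating (skew-symmetric) submatrix $C^*[S]$, or rather the relevant principal submatrix capturing the reachable region, has even rank, the mismatch in cardinalities means $C^*$ cannot have a perfect matching on $S$ across $B^* \cap V$ and $V \setminus B^*$; combined with the absence of edges to $Y$, any attempt to re-route to make all lines consistent must fail. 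I would make this precise by exhibiting a set $X \subseteq V$ (built from $S$ and the base) with $|X|$ odd but such that $A[U, X]$ has rank at most $|X| - 1$, i.e., a linearly dependent column set that nonetheless would have to be independent in any parity base containing all the lines inside it.

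Then I would conclude: suppose for contradiction that a parity base $B'$ exists. It must contain all the lines meeting $S$ in particular, hence it contains an even number of vertices from each such line; but the rank bound just established shows $A[U, B' \cap (\text{span of } S)]$ is deficient, contradicting that $B'$ is a base. Formally this is cleanest via Lemma~\ref{lem:Pf}: I would argue that the structural consequences of $\epsilon = +\infty$ force a fixed odd-cardinality ``barrier'' $X$ that appears with a negative coefficient in a Tutte–Berge-type bound on $\deg_\theta \Pf\Phi_A(\theta)$, pushing it to $-\infty$, so there is no parity base.

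The main obstacle I anticipate is the bookkeeping around buds and tips: the reachable region $R^+ \cup Z^+$ lives in $V^*$, not $V$, and I must carefully push everything back through the pivoting-around-$T$ operation of (BT1) so that the rank statement is about the genuine matrix $A$ on $V$. In particular I need that expanding all blossoms (repeated $\Expand$, which by Lemma~\ref{lem:expand} preserves feasibility and by Lemma~\ref{lem:pivotsing} preserves the relevant nonsingularity pattern) does not spoil the ``no edge to $Y$'' property — i.e., that the cut between the reachable set and $Y$ is insensitive to the blossom structure. Getting the parity count exactly right, so that the barrier set genuinely has odd size and genuinely violates the even-rank property of alternating matrices, is the delicate part; everything else is a translation of the standard augmenting-path infeasibility argument into the Pfaffian language of Section~\ref{sec:problemdef}.
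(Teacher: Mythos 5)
Your high-level instinct --- that $\epsilon=+\infty$ should yield a Tutte-type odd barrier forcing $\Pf\Phi_A(\theta)=0$, hence no parity base by Lemma~\ref{lem:Pf} --- is exactly the paper's strategy, but the concrete execution has errors that break the argument. First, the structural deduction ``$R^-=\emptyset$'' is false: $\epsilon_4=+\infty$ only rules out maximal blossoms labeled $\ominus$ (so $Z^-=\emptyset$), but $R^-$ still contains the unlabeled vertex $u_\ell$ of every normal line $\ell\subseteq R$ examined in Step (4-1), as well as the bud $b_i$ of every maximal blossom labeled $\oplus$. This is not a side issue: in the paper's proof the barrier is built precisely from $R^-$, namely $X:=(R^-\setminus B)\cup\eta(R^-\cap B)$ inside the auxiliary graph $\Gamma^*$ on $W^*=V^*\cup U$, where $\eta:B\to U$ is the bijection coming from the identity block of $(I\ C)$. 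You instead take the labeled region $S=(R^+\cup Z^+)\cap V$ as the relevant set, which is the wrong side of the cut, and your parity claim (``$|S\cap B|$ versus $|S\setminus B|$ is off by an odd amount'') is never pinned down to an actual count. Second, the step ``a parity base $B'$ must contain all the lines meeting $S$'' is simply not true --- a parity base is any base that is a union of lines and need not contain any prescribed line --- so the contradiction you draw from it does not go through.

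The paper's argument stays entirely in $\Gamma^*$: since $\epsilon_2=\epsilon_3=+\infty$ there are no $F^*$-edges between $R^+\cup Z^+$ and $Y$, so after deleting $X$ each set $Z_i:=H_i\cup\eta(H_i\cap B)$ (for a maximal blossom $H_i$ labeled $\oplus$) and each set $R_\ell$ (for a line $\ell\subseteq R$) is a union of components of odd total cardinality; together with at least one further odd component coming from a source line or source blossom, this gives $\odd(\Gamma^*\setminus X)>|X|$, whence $\Gamma^*$ has no perfect matching by Tutte's theorem, $\Pf\Phi_A^*(\theta)=0$, and therefore $\Pf\Phi_A(\theta)=0$. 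To repair your write-up you would need to (i) drop the claim $R^-=\emptyset$, (ii) move the barrier to $R^-$ and carry out the odd-component count in $W^*$ with the buds and tips kept in place (no prior expansion of blossoms is needed, so your worry about pushing everything back through the pivoting around $T$ does not arise), and (iii) replace the ``parity base must contain these lines'' step by the perfect-matching characterization of $\Pf\Phi_A^*(\theta)\neq 0$ combined with Lemma~\ref{lem:Pf}.
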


\begin{proof}
In order to show that there is no parity base, by Lemma~\ref{lem:Pf}, it suffices to show that $\Pf\Phi_A(\theta)=0$. 
We construct the matrix 
$$\Phi_A^*(\theta)=
\left( 
\begin{array}{c|c|c|c|c}
\multicolumn{2}{c|}{}  &  O & \multicolumn{2}{c}{}   \\ \cline{3-3}
\multicolumn{2}{c|}{\raisebox{7pt}[0pt][0pt]{$O$}} & I  & \multicolumn{2}{c}{\raisebox{7pt}[0pt][0pt]{\quad $C^*$}\quad}   \\ \hline
 O & -I  & \multicolumn{2}{c|}{} & \\ \cline{1-2}
\multicolumn{2}{c|}{} & \multicolumn{2}{c|}{\raisebox{7pt}[0pt][0pt]{$D'(\theta)$}} & \raisebox{7pt}[0pt][0pt]{$O$} \\ \cline{3-5}
\multicolumn{2}{c|}{\raisebox{7pt}[0pt][0pt]{$-{C^*}^\top$}} & \multicolumn{2}{c|}{O} & O  
\end{array}
\right)
\begin{array}{l}
\leftarrow T \cap B^* \\ 
\leftarrow U \mbox{ (identified with $B$)} \\
\leftarrow B \\ 
\leftarrow V\setminus B \\
\leftarrow T \setminus B^* 
\end{array}
$$
in the same way as Section~\ref{sec:optimality}, where $T := \{b_i, t_i \mid H_i \in \Lambda_{\rm n} \}$.  
Note that we regard the row set of $C^*$ as $(T \cap B^*) \cup U$ instead of $U^*$, 
and hence the row/column set of $\Phi^*_A(\theta)$ is $W^* := V^* \cup U$. 
Then $\Pf\Phi_A(\theta)=0$ is equivalent to $\Pf\Phi_A^*(\theta) =0$. 

Construct a graph $\Gamma^*=(W^*,E^*)$ with edge set $E^*:=\{(u,v) \mid (\Phi^*_A(\theta))_{u v} \neq 0 \}$.
In order to show that $\Pf\Phi_A^*(\theta)=0$, it suffices to prove that $\Gamma^*$ does not have a perfect matching. 
Since $\Phi^*_A(\theta) [U, B]$ is the identity matrix, we have a natural bijection 
$\eta: B \to U$ between $B$ and $U$. We then define $X\subseteq W^*$ by 
$X:=(R^-\setminus B)\cup\eta(R^-\cap B)$. 

Since $\epsilon_4 = + \infty$, no maximal blossom $H_i$ is labeled with $\ominus$. 
For each maximal blossom $H_i$ labeled with $\oplus$, we introduce $Z_i := H_i \cup \eta(H_i \cap B)$.
If $H_i$ is a normal blossom, then $H_i$ is of odd cardinality and $H_i$ does not contain any source line, 
which imply that $|Z_i|$ is odd.   
If $H_i$ is a source blossom, then $H_i$ is of even cardinality and $H_i$ contains exactly one source line, 
which again imply that $Z_i$ is of odd cardinality. Note that there exist no edges of $E^*$ between $Z_i$ 
and $W^*\setminus (X\cup Z_i)$. 
 
All the source lines that are not included in any blossoms are contained in $R^+$.
For each normal line $\ell \subseteq R$, exactly one vertex $u_\ell$ in $\ell$ is 
unlabeled and the other vertex $\bar u_\ell$ is labeled. For each line $\ell\subseteq R$, 
we now introduce $R_\ell$ by 
$$R_\ell:=
\begin{cases}
 \{u_\ell, \bar u_\ell, \eta(\bar u_\ell)\} & (\ell\subseteq B), \\
 \{v_\ell, \bar v_\ell, \eta(\bar v_\ell)\} & (\ell=\{v_\ell,\bar v_\ell\}, \bar v_\ell\in B, v_\ell\in V\setminus B), \\
 \{\bar u_\ell\}  & (\ell\subseteq V\setminus B).  
\end{cases}$$
Note that $R_\ell$ is of odd cardinality and that there exist no edges of $E^*$ between $R_\ell$ and $W^*\setminus (X\cup Z_i)$. 

Let $\odd(\Gamma^*\setminus X)$ denote the number of odd components after deleting $X$ from $\Gamma^*$. 
For each $b_i\in R^-$, we have a corresponding odd component $Z_i$. For each $u_\ell\in R^-$, we have an odd
component $R_\ell$. In addition, there are some other odd components coming from source blossoms or source lines. 
Thus we have $\odd(\Gamma^*\setminus X)>|X|$, which implies by the theorem of Tutte~\cite{Tut47} that 
$\Gamma^*$ does not admit a perfect matching. 
\end{proof}

\subsection{Bounding Iterations}
\label{sec:iterations}

We next show that the dual variables are updated $O(n)$ times per augmentation. 
To see this, roughly, 
we show that this operation increases the number of labeled vertices. 
Although $\Search$ contains flexibility on the ordering of vertices, 
it does not affect the set of the labeled vertices when $\Search$ returns $\emptyset$.  
This is guaranteed by the following lemma.

\begin{lemma}
\label{lem:labeliff}
Suppose that a vertex $v \in V \cup \{b_i \mid \mbox{$H_i \in \Lambda_{\rm n}$ is a maximal blossom} \}$ is not removed in $\Search$
that returns $\emptyset$. 
Then, $v$ is labeled in $\Search$
if and only if 
there exists a vertex set $X \subseteq V^*$ such that 
\begin{itemize}
\item
$X \cup \{v\}$ consists of normal lines, dummy lines, and a source vertex $s$, 
\item
$T \subseteq X \cup \{v\}$,   
\item
$C^*[X]$ is nonsingular, and
\item
the equality  
\begin{equation}
\label{eq:8411}
p(X \setminus B^*) - p (X \cap B^*)  
= \sum \{ q( H_i) \mid H_i \in \Lambda,\ \mbox{$|X\cap H_i|$ is odd}\} 
\end{equation}
holds.
\end{itemize}
\end{lemma}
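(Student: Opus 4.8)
The plan is to establish the two implications separately, bridging the algebraic condition in the statement with a combinatorial one. The bridge is: for any $Y\subseteq V^*$, the conjunction ``$C^*[Y]$ is nonsingular and $p(Y\setminus B^*)-p(Y\cap B^*)=\sum\{q(H_i)\mid H_i\in\Lambda,\ |Y\cap H_i|\ \text{odd}\}$'' is equivalent to ``$G^\circ[Y]$ has a tight perfect matching'', and moreover, when that tight perfect matching is unique, the nonsingularity of $C^*[Y]$ comes for free. This is exactly the computation behind Lemma~\ref{lem:keyodd}: nonsingularity of $C^*[Y]$ forces a perfect matching in its support, and under equality~(\ref{eq:8411}) summing the inequalities of (DF2) along any such matching forces, term by term, all its edges to be tight and the matching to be consistent with every positive blossom, hence to be a tight perfect matching of $G^\circ[Y]$; conversely a tight perfect matching makes the inequality of Lemma~\ref{lem:keyodd} an equality. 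The same counting shows every perfect matching of the support of $C^*[Y]$ is tight, hence by Lemma~\ref{lem:tightmatching} equals the unique tight perfect matching of $G^\circ[Y]$ when that is unique, so $\det C^*[Y]$ consists of a single nonzero term. I will also use that, by (BT2) and (DF3), every dummy line $\{b_i,t_i\}$ is a tight edge of $G^\circ$.

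For the ``only if'' direction, assume $v$ is labeled, and let $P(v)$ be the search path attached to $v$, from a source vertex $s$ to $v$ (there is exactly one source vertex on it, since no augmenting path has been found). By the structure maintained by $\Search$ --- the decomposition of Lemma~\ref{lem:dec} together with (BR1)--(BR3) --- the set $P(v)$ consists of normal lines, dummy lines and the source vertex $s$, and $P(v)\cap H_i=R_{H_i}(x_i)$ for a suitable $x_i$ whenever $P(v)$ meets $H_i$. Put $X_0:=P(v)\setminus\{v\}$. By Corollary~\ref{cor:reachable}, $G^\circ[X_0]$ has a unique tight perfect matching, so the bridge gives that $C^*[X_0]$ is nonsingular and (\ref{eq:8411}) holds for $X_0$. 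It remains to enlarge $X_0$ so that $T$ is contained in it. I adjoin the dummy lines missing from $X_0\cup\{v\}$ one at a time, treating outer blossoms before inner ones; by (BR3) and laminarity, each blossom $H_i$ being treated has $X\cap H_i=\emptyset$ at that moment. Pivoting $C^*$ around $\{b_i,t_i\}$ (legitimate since $C^*_{b_it_i}\neq0$) then leaves the submatrix indexed by the current $X$ unchanged, because by Lemma~\ref{lem:pivot} the only entries affected lie in rows of $H_i\cap B^*$ and columns outside $H_i$ (or symmetrically), none of which meet $X$; hence by the argument of Lemma~\ref{lem:pivotsing}, $C^*[X\cup\{b_i,t_i\}]$ is nonsingular iff $C^*[X]$ is. Moreover (\ref{eq:8411}) is preserved, since adjoining $\{b_i,t_i\}$ changes the left-hand side by $q(H_i)$ (by (DF3), as exactly one of $b_i,t_i$ lies in $B^*$) and the right-hand side by $q(H_i)$ too ($H_i$ is the only blossom whose intersection with $X$ changes parity, by laminarity). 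The resulting $X$ satisfies all four conditions.

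For the ``if'' direction, let $X$ be a certificate for $v$. By the bridge, $G^\circ[X]$ has a tight perfect matching $M$; let $N$ be the set of lines and dummy lines comprising $X\cup\{v\}$, a perfect matching of $(X\cup\{v\})\setminus\{s\}$. In $M\triangle N$, the vertices $s$ and $v$ have degree one and all others have degree two, so $M\triangle N$ contains a unique $s$--$v$ path $\pi$ alternating between edges of $M$ (tight edges) and lines/dummy lines; the remaining components are alternating cycles, which are harmless. It is convenient to prove, for every vertex $v\in V^*$ not removed by $\Search$ (a statement containing the lemma), that the existence of such a certificate forces $v$ to be labeled; I do this by induction on the length of $\pi$. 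If $\pi$ has length zero then $v=s$ is a source vertex, labeled in Step~1. Otherwise the last edge of $\pi$ is a line $\{v,\bar v\}$ preceded by an $M$-edge $(u,\bar v)\in F^\circ$ with $u\in X$; the prefix of $\pi$ ending at $u$, re-completed with its missing dummy lines exactly as in the ``only if'' direction, is a certificate for $u$ with a strictly shorter alternating path, so by the induction hypothesis $u$ is labeled. A case analysis of how $\Search$ processes the edge $(u,\bar v)$ (respectively the blossom containing $\bar v$) once $u$ is dequeued then shows that $v$ must become labeled: either $v$ is labeled directly as in (4-1), or $v$ lies in a blossom created at that moment and is labeled through its routing, or $\Search$ would return an augmenting path through $v$ --- the last being impossible because $\Search$ returns $\emptyset$.

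The crux is making this induction precise when $\pi$ traverses a positive blossom $H_j$ along a route that a priori differs from any prescribed $R_{H_j}(\cdot)$, since $\Search$ only routes through $H_j$ along those sets. Here I observe that $\pi\cap H_j$ is a subpath joining $t_j$ --- reached via the dummy line $\{b_j,t_j\}$, since $b_j\in T\subseteq X\cup\{v\}$ and $t_j$'s only $F^*$-neighbour outside $H_j$ is $b_j$ by (BT2) --- to the vertex $w\in H^-_j$ through which $\pi$ enters $H_j$, all of whose interior vertices are matched by $M$; this is precisely the shape of $R_{H_j}(w)$, and I would argue, mirroring the proof of Lemma~\ref{lem:aug} and using (BR4) and (BR5), that in fact $\pi\cap H_j=R_{H_j}(w)$. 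Once every positive blossom is treated this way, $\pi$ is a search-type path, its prefixes carry unique tight perfect matchings (hence nonsingular cocircuit submatrices) by the argument establishing Corollary~\ref{cor:reachable}, and the reduction to $u$ above is legitimate. Verifying this rerouting for positive blossoms, together with the case analysis identifying which step of $\Search$ labels $v$ (including the case where $u$ and $\bar v$ lie in a common blossom, in which $v$ is labeled when that blossom is first reached), is where the bulk of the work lies; the remaining bookkeeping --- that re-adjoining dummy lines to a prefix preserves nonsingularity and (\ref{eq:8411}), and that the alternating cycles of $M\triangle N$ cause no trouble --- is routine given the results already in place.
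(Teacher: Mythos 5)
Your ``only if'' direction is essentially the paper's argument (via $P(v)$ and Corollary~\ref{cor:reachable}, then adjoining the missing dummy lines; the paper takes $X:=(P(v)\cup T)\setminus\{v\}$ in one step and argues that each added pair $(b_i,t_i)$ is forced into any perfect matching, but your incremental pivoting version is a valid substitute). The ``if'' direction, however, is where the content of the lemma lies, and there your proposal has a genuine gap. The paper does \emph{not} attempt a combinatorial induction showing that $\Search$ traces out the alternating path $\pi$; it argues by contradiction using the dual update of Section~\ref{sec:dualupdatealgo}: if $v$ were unlabeled when $\Search$ returns $\emptyset$, one performs the dual update with some $\epsilon>0$ and checks, term by term over $K(s)$, $K(v)$, and the remaining $K(u)$ for $u\in X$, that the left-hand side of (\ref{eq:8411}) decreases strictly relative to the right-hand side (the source vertex contributes a strict decrease of $\epsilon$; the unlabeled $v$'s mate, bud, or blossom contributes a non-increase; everything else cancels because $|X\cap H_i|$ has the right parity). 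This makes the left side strictly smaller than the right side, contradicting Lemma~\ref{lem:keyodd}.

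Your route founders on the step you yourself flag as the crux: the claim that $\pi\cap H_j=R_{H_j}(w)$ for a positive blossom $H_j$ cannot hold in general, because the certificate $X$ is arbitrary subject to the four conditions and its intersection with $H_j$ need not contain the prescribed set $R_{H_j}(w)$ at all --- it may consist of entirely different lines of $H_j$. Conditions (BR4)--(BR5) constrain tight perfect matchings of subsets of $R_{H_j}(w)$; they say nothing about which other subsets of $H_j$ can carry an alternating route, so ``mirroring the proof of Lemma~\ref{lem:aug}'' does not apply. The subsequent case analysis of how $\Search$ labels $v$ once $u$ is dequeued is also left open, and it is delicate: blossom creation during $\Search$ changes $F^\circ$, $K(\cdot)$, and the labels as the search proceeds, so the induction is on a moving target. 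As written, the ``if'' direction is a plan rather than a proof; to complete the lemma along the paper's lines you should instead exploit that a failed search permits a dual update and derive a contradiction with Lemma~\ref{lem:keyodd}.
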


\begin{proof}
We first observe that 
creating or deleting a blossom does not affect the conditions in Lemma~\ref{lem:labeliff} unless $v$ is removed. 
Indeed, when $T$ is updated as $T' := T \cup \{b_i, t_i\}$ or $T' := T \setminus \{b_i, t_i\}$ 
by creating/deleting a blossom, 
$X' := ((X \setminus T) \cup T') \setminus \{v\}$
satisfies the conditions by Lemma~\ref{lem:pivotsing}. 
Thus, it suffices to show that 
$v$ is labeled in $\Search$
if and only if 
there exists a vertex set $X$ satisfying the conditions when $\Search$ returns $\emptyset$.
In what follows in the proof, all notations ($V^*, C^*, T, \Lambda$, etc.) represent the objects when $\Search$ returns $\emptyset$.

If $v$ is labeled 
in $\Search$, 
then we obtain $P(v)$ such that 
$G^\circ[P(v) \setminus \{v\}]$ has a unique tight perfect matching by Corollary~\ref{cor:reachable}. 
Define $X := (P(v) \cup T) \setminus \{v\}$. 
For any minimal $H_i \in \Lambda_{\rm n}$ with $P(v) \cap H_i = \emptyset$, 
it follows from Lemma~\ref{clm:71} that $(b_i, t_i)$ is a unique edge in $G^\circ$ between $t_i$ and $X \setminus \{t_i\}$. 
Thus, if $G^\circ[X]$ has a perfect matching, then it must contain $(b_i, t_i)$. 
By applying this argument repeatedly for each $H_i \in \Lambda_{\rm n}$ with $P(v) \cap H_i = \emptyset$
in the order of indices 
(i.e., in the order from smaller blossoms to larger ones), 
$G^\circ[X]$ has a unique tight perfect matching, 
because $b_i, t_i \in P(v)$ for any $H_i \in \Lambda_{\rm n}$ with $P(v) \cap H_i \not= \emptyset$ by Observation~\ref{obs:budtipinP}. 
Thus, $C^*[X]$ is nonsingular by Lemma~\ref{lem:tightmatching}, and the equality (\ref{eq:8411}) holds. 

We now intend to prove the converse. 
Suppose that $X$ satisfies the above conditions, 
and assume to the contrary that $v$ is not labeled 
when $\Search$ returns $\emptyset$. Then, we can update the dual variables keeping the dual feasibility as described at the beginning of this section. 
We now see how the dual update affects (\ref{eq:8411}). 
\begin{itemize}
\item
Consider the dual variables corresponding to $K(s)$. 
If $s$ is single, then
the left hand side of (\ref{eq:8411}) decreases by $\epsilon$ by updating $p(s)$. 
Otherwise, 
$K(s) = H_i$ for some source blossom $H_i \in \Lambda_{\rm s}$, since $s$ is a source vertex. 
Then, $|X \cap H_i|$ is odd as $v \not\in H_i$, and hence
the right hand side of (\ref{eq:8411}) increases by $\epsilon$ by updating $q(H_i)$.
\item
Consider the dual variables corresponding to $K(v)$. 
\begin{itemize}
\item
If $v$ is single, then
the left hand side of (\ref{eq:8411}) decreases by $\epsilon$ or does not change by updating $p(\bar v)$, 
because $\bar v \in R^+ \cup Y$. 
\item
If $v \in H_i$ for some maximal blossom $H_i \in \Lambda_{\rm n}$, then $|X \cap H_i|$ is even. 
Thus, the right hand side of (\ref{eq:8411}) does not change by updating $q(H_i)$. 
Furthermore, since $H_i$ is not labeled with $\oplus$, 
we have $b_i \in R^+ \cup Y$, which shows that 
the left hand side of (\ref{eq:8411}) decreases by $\epsilon$ or does not change by updating $p(b_i)$. 
\item
If $v = b_i$ for some maximal blossom $H_i \in \Lambda_{\rm n}$, then $|X \cap H_i|$ is odd. 
Since $H_i$ is not labeled with $\ominus$, 
the right hand side of (\ref{eq:8411}) increases by $\epsilon$ or does not change by updating $q(H_i)$.
\item
If $v\in H_i$ for some maximal blossom $H_i\in\Lambda_{\rm s}$, then 
$v$ is labeled, which contradicts the assumption. 
\end{itemize}
\item 
For any $u \in X$ with $s, v \not\in K(u)$, 
updating the dual variables corresponding to $K(u)$ does not 
 affect the equality (\ref{eq:8411}), 
since $|X\cap H_i|$ is even for any $H_i \in \Lambda_{\rm s}$ with $s \not\in H_i$ 
and $|X \cap H_i|$ is odd for any $H_i \in \Lambda_{\rm n}$ with $v \not\in H_i \cup \{b_i\}$. 
\end{itemize}
By combining these facts, after updating the dual variables, we have that 
the left hand side of (\ref{eq:8411}) is strictly less than its right hand side, 
which contradicts Lemma~\ref{lem:keyodd}. 
\end{proof}

By using this lemma, we bound the number of dual updates as follows.

\begin{lemma}
\label{lem:dualupdatebound}
The dual variables are updated at most $O(n)$ times
before $\Search$ finds an augmenting path or 
we conclude that the instance has no parity base by Lemma~\ref{lem:noparitybase}. 
\end{lemma}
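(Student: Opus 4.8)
The plan is to fix one \emph{phase} of the algorithm — the sequence of calls to $\Search$, interleaved with dual updates, between two consecutive augmentations (or up to the moment we declare infeasibility), all of which return $\emptyset$ except possibly the last — and to charge the dual updates against the set $\mathcal L\subseteq V$ of ground-set vertices that get labeled. Since $\mathcal L\subseteq V$ and $|V|=n$, we have $|\mathcal L|\le n$, so it suffices to show that $\mathcal L$ is nondecreasing along the phase and that it \emph{strictly} grows at every dual update not immediately followed by an augmentation. Throughout I would invoke Lemma~\ref{lem:labeliff}, which characterizes membership in $\mathcal L$ purely in terms of the current data $(C^*,\Lambda,p,q)$ — existence of a witness $X$ with $X\cup\{v\}$ a union of normal lines, dummy lines and one source vertex $s$, $T\subseteq X\cup\{v\}$, $C^*[X]$ nonsingular, and the equality \eqref{eq:8411} — so that $\mathcal L$ is insensitive to the ordering freedom in $\Search$.

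For monotonicity, suppose $v\in V$ is labeled in one call of $\Search$, with witness $X$; I would show it stays labeled in every later call. The operations in between are blossom creations (inside $\Search$), $\Expand$'s when some $q(H)=0$, and dual updates. A blossom creation or $\Expand$ changes $T$ to some $T'$, and then $X':=((X\setminus T)\cup T')\setminus\{v\}$ still satisfies the four conditions by Lemma~\ref{lem:pivotsing}, exactly as in the opening paragraph of the proof of Lemma~\ref{lem:labeliff}. The substantive point is that a dual update preserves the equality \eqref{eq:8411} for $X$: this is the same case analysis as in the converse direction of that proof, tracking the $\pm\epsilon$ perturbations of $p$ on $X\cap B^*$ versus $X\setminus B^*$ and of $q(H_i)$ for those $H_i$ with $|X\cap H_i|$ odd, but now — because $v$ \emph{is} labeled (so $v$ and $s$ lie in $R^+\cup Z^+$ with the signs working out, while the groups $K(u)$ for $u\in X$ with $s,v\notin K(u)$ contribute nothing) — the change of $p(X\setminus B^*)-p(X\cap B^*)$ exactly matches the change of the right-hand side of \eqref{eq:8411}, so the equality is maintained.

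For strict progress, suppose a dual update has $\epsilon<+\infty$ and the next call of $\Search$ does not return an augmenting path. Finiteness of $\epsilon$ forces one of $\epsilon_1,\dots,\epsilon_4$ to be attained. If $\epsilon=\epsilon_2$ or $\epsilon=\epsilon_3$, a new tight edge joins a labeled vertex of $R^+\cup Z^+$ to some $v\in Y$, and in the rerun Step~4 reaches $v$ and labels $\bar v\in V$ (Step~4-1) or, if $v$ is the bud of a maximal normal blossom $H_i$ (unlabeled, as $v\in Y$), the nonempty set $H_i\cap V$ via $\DBlossom$ (Step~4-2); either way a vertex of $V\cap Y$ is newly labeled. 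If $\epsilon=\epsilon_1$, a new tight edge joins two labeled vertices in different $K$-classes with the same source line (else the rerun would augment), so $\Blossom$ creates a blossom $H$ and labels \emph{all} of $H\cap V$, which contains a previously-unlabeled vertex (e.g.\ the mate of a labeled single endpoint of the new edge, or a ground-set vertex inside an absorbed $\ominus$-blossom). If $\epsilon=\epsilon_4$, a maximal $\ominus$-blossom is expanded (feasibility kept by Lemma~\ref{lem:expand}) and the rerun re-routes the freed vertices as an outer segment, which by Lemma~\ref{lem:labeliff} again labels a vertex of $V$ that was in $Y$, or augments. Combined with monotonicity, this gives at most $n=O(n)$ dual updates per phase.

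The main obstacle will be the monotonicity step: carefully verifying that the witness $X$ of an already-labeled vertex still satisfies \eqref{eq:8411} after a dual update, which means redoing the signed case analysis from the proof of Lemma~\ref{lem:labeliff} under the hypothesis that $v$ is labeled and checking the net cancellation. The case $\epsilon=\epsilon_4$ is the thorniest instance of both halves — there $X$ must also be carried across the $\Expand$, and the increase of $|\mathcal L|$ is obtained only indirectly by analyzing the subsequent $\Search$.
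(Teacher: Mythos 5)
Your monotonicity half is sound and matches the paper: Lemma~\ref{lem:labeliff} is exactly the tool used to show that a vertex of $V$ once labeled stays labeled across blossom creations, $\Expand$'s, and dual updates. The gap is in the strict-progress half. The quantity you charge against, $|\mathcal L|=|\{w\in V\mid w\mbox{ is labeled}\}|$, does \emph{not} strictly increase at every dual update, and several of your case claims are false or unjustified. (i) For $\epsilon=\epsilon_2,\epsilon_3$: if the new tight edge leads to $u$ with $K(u)=H_i\cup\{b_i\}$ and $(v,b_i)\notin F^\circ$, then Step (4-3) labels $H_i$ with $\ominus$ and labels only the bud $b_i$, which lies in $T$, not in $V$; no ground-set vertex is newly labeled. (ii) For $\epsilon=\epsilon_1$: if the new tight edge joins two $\oplus$-labeled blossoms, every vertex of $H\cap V$ for the new blossom $H$ was already labeled (Step~1 of $\Search$ and Step~2 of $\DBlossom$ label all of $H_i^\bullet\supseteq H_i\cap V$ in a $\oplus$-blossom); the only newly labeled vertices are buds in $T$. (iii) For $\epsilon=\epsilon_4$: after expanding a maximal $\ominus$-blossom there is no reason the rerun must label a new vertex of $V$ --- the freed vertices can simply be reached as unlabeled vertices or re-enclosed in smaller $\ominus$-labeled blossoms, and your appeal to Lemma~\ref{lem:labeliff} here proves nothing.

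This is precisely why the paper does not use $|\mathcal L|$ alone but the potential
$\kappa(V^*,\Lambda)=|\{w\in V\mid w\mbox{ is labeled}\}|+|\Lambda_1|-|\Lambda_2|-2|\Lambda_3|$,
where $\Lambda_1$ counts blossoms containing a labeled vertex, $\Lambda_2$ the maximal normal blossoms labeled $\ominus$, and $\Lambda_3$ the rest. In case (i) a blossom moves from $\Lambda_3$ to $\Lambda_2$ (gain $+1$); in case (ii) a new blossom enters $\Lambda_1$ (gain $+1$); in case (iii) a blossom leaves $\Lambda_2$ (gain $+1$). Lemma~\ref{lem:labeliff} is then also needed to show the blossom terms never regress (e.g.\ a blossom in $\Lambda_1$ stays in $\Lambda_1$, one in $\Lambda_2$ moves only to $\Lambda_1\cup\Lambda_2$). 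To repair your argument you must augment your charging scheme with some such accounting of blossom states; the labeled-vertex count by itself cannot carry the bound.
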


\begin{proof}
Suppose that we update the dual variables more than once, and 
we consider how the value of 
$$\kappa(V^*,\Lambda):=|\{ w \in V \mid \mbox{$w$ is labeled} \}| + |\Lambda_1| -  |\Lambda_2| -  2 |\Lambda_3|$$
will change between two consecutive dual updates, where
\begin{align*}
\Lambda_1&:= 
|\{H_i \in \Lambda \mid \mbox{$H_i$ contains a labeled vertex} \}|,  \\
\Lambda_2&:= 
|\{H_i \in \Lambda_{\rm n} \mid \mbox{$H_i$ is a maximal blossom labeled with $\ominus$} \}|, \\
\Lambda_3&:= \Lambda \setminus (\Lambda_1 \cup \Lambda_2). 
\end{align*}
Note that every maximal blossom labeled with $\ominus$ contains no labeled vertex, and hence $\Lambda_1 \cap \Lambda_2 = \emptyset$.
We first show that $\kappa(V^*,\Lambda)$ does not decrease.
 
By Lemma~\ref{lem:labeliff}, if $w \in V$ is labeled at the time of the first dual update, 
then it is labeled again at the time of the second dual update. 
This shows that $|\{ w \in V \mid \mbox{$w$ is labeled} \}|$ does not decrease.
By Lemma~\ref{lem:labeliff} again, blossoms satisfy the following.
\begin{itemize}
\item
If a blossom is in $\Lambda_1$ at the time of the first dual update, then 
it is still in $\Lambda_1$ at the time of the second dual update unless it is deleted.
Note that such a blossom is deleted only when it is replaced with a new blossom in $\DBlossom$.
\item
If a blossom is in $\Lambda_2$ at the time of the first dual update, then 
it is in $\Lambda_1 \cup \Lambda_2$ at the time of the second dual update unless it is deleted.
\item
If a blossom is in $\Lambda_3$ at the time of the first dual update, then 
it is in $\Lambda = \Lambda_1 \cup \Lambda_2 \cup \Lambda_3$ at the time of the second dual update unless it is deleted.
\item
If a new blossom is created in $\Blossom$ after the first dual update, then it is in $\Lambda_1$ at the time of the second dual update.
\item
If $\DBlossom$ is applied after the first dual update, then it replaces a blossom in $\Lambda$ with 
a new blossom containing a labeled vertex, i.e., the new blossom is in $\Lambda_1$ at the time of the second dual update. 
\end{itemize}
By the above observations, $\kappa(V^*,\Lambda)$ does not decrease. 
In what follows, we show that $\kappa(V^*,\Lambda)$ increases strictly.

If we update the dual variables with $\epsilon = \epsilon_4$, then 
there exists a maximal blossom 
$H_i \in \Lambda_{\rm n}$ labeled with $\ominus$ such that $q(H_i) = \epsilon$, 
which shows that 
$H_i \in \Lambda_2$ is deleted before the time of the second dual update.
This shows that $\kappa(V^*,\Lambda)$ increases. 

If $\epsilon < \epsilon_4$, then
there is a new tight edge between $R^+\cup Z^+$ and $Y$, or between two vertices in $R^+\cup Z^+$. 
We note that some blossoms may be created or deleted in $\DBlossom$ after the first dual update is executed. 
However, such a new tight edge remains to exist by Lemmas~\ref{clm:71} and~\ref{lem:dblossomexpand}. 

Suppose that $\epsilon = \epsilon_2$. In this case, 
we create a new tight edge $(u, v)$ with $u\in R^+\cup Z^+$ and $v\in Y$. 
Since $u$ is labeled again at the time of the second dual update, 
some vertex in $K(v)$ is newly labeled. 
Thus, 
$|\{ w \in V \mid \mbox{$w$ is labeled} \}|$ increases or a blossom in $\Lambda_3$ becomes a member of $\Lambda_2$,  
and hence
the value of $\kappa(V^*,\Lambda)$ will increase. 
The same argument can be applied to the case of $\epsilon = \epsilon_3$. 

Suppose that $\epsilon = \epsilon_1$. In this case, 
we create a new tight edge $(u, v)$ with $u, v\in R^+\cup Z^+$ and $K(u) \not= K(v)$. 
By changing the roles of $u$ and $v$ if necessary, we may assume that $u \prec v$. 
Then, we consider each of the following cases. 
\begin{itemize}
\item
If the first elements in $P(v)$ and $P(u)$ belong to different source lines, then
we obtain an augmenting path, which contradicts that we apply the second dual update. 
\item
If $v \in H_i$ for some maximal normal blossom $H_i \in \Lambda_{\rm n}$ and $u = \rho (b_i)$, then 
there exists an edge in $F^*$ between $u=\rho(b_i)$ and $v \in H_i$, 
which contradicts Lemma~\ref{lem:72}. 
\item
If neither of the above cases apply, then a new blossom $H$ is created in $\Blossom(v, u)$, 
and hence $|\Lambda_1|$ increases.
This shows that the value of $\kappa(V^*,\Lambda)$ increases. 
\end{itemize}

Thus, the value of $\kappa(V^*,\Lambda)$ increases by at least one between two consecutive dual updates. 
Since the range of $\kappa(V^*,\Lambda)$ is at most $O(n)$, 
the dual variables are updated at most $O(n)$ times.
\end{proof}

\section{Augmentation}
\label{sec:augmentation}

The objective of this section is to describe how to update the primal solution using an augmenting path $P$. 
The augmentation procedure that primarily replaces $B^*$ with $B^* \triangle P$, where $\triangle$ denotes 
the symmetric difference. In addition, it updates the bud and the tip of each normal blossom.

Suppose we are given $V^*$, $B^*$, $C^*$, $\Lambda$, and feasible dual variables 
$p$ and $q$. 
Let $P$ be an augmenting path, and 
$\Lambda_P$ denote the set of blossoms 
that intersect with $P$, i.e., $\Lambda_P=\{H_i\in \Lambda\mid H_i\cap P\neq \emptyset\}$.
Let $\Lambda^+_P$ denote the set of positive blossoms in $\Lambda_P$. 
In the augmentation along $P$, we update 
$V^*$, $B^*$, $C^*$, 
$\Lambda$, $b_i$, $t_i$, 
$p$, and $q$.
The procedure for augmentation is described as follows.

\begin{description}

\item[Procedure $\Augment(P)$]

\item[Step 0:]
While there exists a maximal blossom $H_i \in \Lambda \setminus \Lambda_P$ with $q(H_i) = 0$, apply $\Expand(H_i)$.

\item[Step 1:]
Let $M$ be the unique tight perfect matching in $G^\circ[P]$. 
For each $H_i\in\Lambda^+_P$, 
let $(x_i,y_i)$ be the unique edge in $M$ with $x_i\in H_i$ and $y_i\in V^*\setminus H_i$ (see Corollary~\ref{cor:uniqueedgeaug}),
add new vertices $\hat b_i$ and $\hat t_i$ to $V^*$, and update $B^*$, $C^*$, and $p$ as follows (see Fig.~\ref{fig:revisionfig32}). 
\begin{itemize}
\item Add $\hat t_i$ to $H_i$. For each blossom $H_j$ with $H_i\subsetneq H_j$, add $\hat b_i$ and $\hat t_i$ to $H_j$.  
\item If $x_i\in B^*$ and $y_i\in V^*\setminus B^*$, then $B^*:=B^*\cup\{\hat b_i\}$,
$$C^*_{\hat b_i v}:=
\begin{cases} 
C^*_{x_i v}  & (v\in (V^*\setminus B^*)\setminus H_i), \\
0 & (v\in H_i\setminus B^*), 
\end{cases}
\quad\quad
C^*_{u \hat t_i}:=
\begin{cases}
C^*_{u y_i} & (u\in B^*\cap H_i), \\
0 & (u\in B^*\setminus H_i), 
\end{cases}
$$
$p(\hat b_i):=p(y_i) - Q_{\hat b_i y_i}$, and $p(\hat t_i):=p(x_i)+Q_{x_i\hat t_i}$.
\item If $x_i\in V^*\setminus B^*$ and $y_i\in B^*$, then $B^*:=B^*\cup\{\hat t_i\}$,
$$C^*_{u \hat b_i}:=
\begin{cases} 
C^*_{u x_i}  & (u\in B^*\setminus H_i), \\
0 & (u \in B^*\cap H_i), 
\end{cases}
\quad\quad
C^*_{\hat t_i v}:=
\begin{cases}
C^*_{y_i v} & (v\in H_i\setminus B^*), \\
0 & (v \in (V^*\setminus B^*)\setminus H_i),  
\end{cases}
$$
$p(\hat b_i):=p(y_i) + Q_{\hat b_i y_i}$, and $p(\hat t_i):=p(x_i) - Q_{x_i\hat t_i}$.
\end{itemize}

\begin{figure}[htbp]
  \centering
    \includegraphics[width=8.5cm]{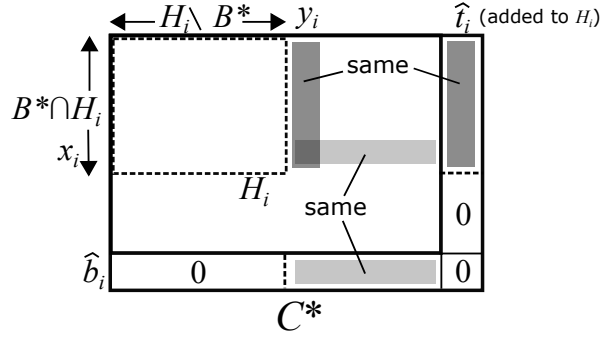}
      \caption{Definition of $C^*$ in $\Augment(P)$.} 
    \label{fig:revisionfig32} 
\end{figure}

\item[Step 2:] Apply the pivoting operation around $P^*:=P\cup \{\hat b_i, \hat t_i\mid H_i\in \Lambda^+_P\}$ to $C^*$, 
namely $B^*:=B^*\triangle P^*$.

\item[Step 3:] 
For each (not necessarily maximal) blossom $H_i\in\Lambda_P \setminus \Lambda^+_P$, remove $H_i$ from $\Lambda$, 
and if $H_i$ is a normal blossom, then remove also $b_i$ and $t_i$ from $V^*$. 
For each $H_i\in\Lambda^+_P$, remove $b_i$ and $t_i$ from $V^*$ if $H_i$ is a normal blossom, and rename $\hat b_i$ and $\hat t_i$ 
as the bud $b_i$ and the tip $t_i$ of $H_i$, respectively.

\item[Step 4:]
For each $H_i\in\Lambda^+_P$ in the order of indices 
(i.e., in the order from smaller blossoms to larger ones),
apply the following. 
\begin{enumerate}
\item[(i)]
Introduce new vertices $b'_i$ and $t'_i$ and
add $t'_i$ to $H_i$. 
For each blossom $H_j$ with $H_i\subsetneq H_j$, add $b'_i$ and $t'_i$ to $H_j$.  
\item[(ii)] 
If $b_i\in B^*$ and $t_i\in V^*\setminus B^*$, then  
$B^*:=B^*\cup\{t'_i\}$, 
$$C^*_{u b'_i}:=
\begin{cases} 
C^*_{u t_i}  & (u \in B^* \setminus H_i), \\
0 & (u \in H_i \cap B^*), 
\end{cases}
\quad\quad
C^*_{t'_i v}:=
\begin{cases}
C^*_{b_i v} & (v\in H_i \setminus B^*), \\
0 & (v\in (V^*\setminus B^*)\setminus H_i), 
\end{cases}
$$
$p(b'_i):=p(t_i) - Q_{b'_i t_i}$, and $p(t'_i):=p(b_i) + Q_{b_i t'_i}$.

\item[(iii)]
If $b_i\in V^* \setminus B^*$ and $t_i\in B^*$, then  
$B^*:=B^*\cup\{b'_i\}$, $$C^*_{b'_i v}:=
\begin{cases} 
C^*_{t_i v}  & (v \in (V^* \setminus B^*) \setminus H_i), \\
0 & (v \in H_i \setminus B^*), 
\end{cases}
\quad\quad
C^*_{u t'_i}:=
\begin{cases}
C^*_{u b_i} & (u\in H_i \cap B^*), \\
0 & (u\in B^* \setminus H_i), 
\end{cases}
$$
$p(b'_i):=p(t_i) + Q_{b'_i t_i}$, and $p(t'_i):=p(b_i) - Q_{b_i t'_i}$.

\item[(iv)] Apply the pivoting operation around $\{b_i, t_i, b'_i, t'_i\}$ to $C^*$, namely $B^*:=B^*\triangle\{b_i, t_i, b'_i, t'_i\}$. 

\end{enumerate}

Then, for each $H_i\in\Lambda^+_P$, remove $b_i$ and $t_i$ from $V^*$, and rename $b'_i$ and $t'_i$ 
as the bud $b_i$ and the tip $t_i$ of $H_i$, respectively.

\item[Step 5:]
For each $H_i\in\Lambda^+_P$ in the reverse order of indices
(i.e., in the order from larger blossoms to smaller ones),
apply the procedures (i)--(iv) in Step 4. 
Then, for each $H_i\in\Lambda^+_P$, remove $b_i$ and $t_i$ from $V^*$, and rename $b'_i$ and $t'_i$ 
as the bud $b_i$ and the tip $t_i$ of $H_i$, respectively. 
\end{description}

Note that Steps 4 and 5 are executed to keep (BT2). 
After Step 3, (BT2) does not necessarily hold, whereas the dual variables are feasible and (BT1) holds. 
Step 4 is applied to delete all the edges in 
$F^*$ between $t_i$ and $(V^* \setminus H_i) \setminus \{b_i\}$ for each $H_i \in \Lambda^+_{P}$, and 
Step 5 is applied to delete all the edges in 
$F^*$ between $b_i$ and $H_i \setminus \{t_i\}$ for each $H_i \in \Lambda^+_{P}$.
See Lemma~\ref{lem:augBT} for details.

In Section~\ref{sec:augmentfeas}, 
we show the validity of the augmentation procedure. 
After the augmentation, the algorithm applies $\Search$ in each blossom $H_i$
to obtain a new routing and ordering in $H_i$, 
which will be described in Section~\ref{sec:reroute}.

\subsection{Validity}
\label{sec:augmentfeas}

In this subsection, we show the validity of $\Augment(P)$.
We first show that 
the dual feasibility holds after the augmentation.

\begin{lemma}
Suppose that the dual variables $(p,q)$ are feasible at the beginning of $\Augment(P)$. 
Then the procedure keeps the dual feasibility. 
\end{lemma}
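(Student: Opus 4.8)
The plan is to verify that each of the several modification steps in $\Augment(P)$—the introduction of the auxiliary vertices $\hat b_i,\hat t_i$ in Step 1, the pivoting around $P^*$ in Step 2, the deletion of blossoms in $\Lambda_P\setminus\Lambda_P^+$ in Step 3, and the two rounds of tip/bud replacement in Steps 4 and 5—preserves conditions (DF1)--(DF3). Since (DF1) concerns only the potentials on lines $\{v,\bar v\}\subseteq V$, and these potentials are never touched (all potential updates are to vertices in $T$ or to the newly created auxiliary vertices outside $V$), (DF1) is immediate throughout, and I would dispose of it in one sentence. The work is in (DF2) and (DF3).

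For Step 1 and Step 2 together, the key observation is that the combined effect is exactly a ``blossom-creation'' pivot of the same shape analyzed in Lemma~\ref{lem:72}, Corollary~\ref{cor:78}, and Lemma~\ref{lem:createblossomdual}: adding $\hat b_i,\hat t_i$ with the prescribed $C^*$ entries and then pivoting around $P^*$ makes $(\hat b_i,\hat t_i)$ a dummy line satisfying (BT2), and by Corollary~\ref{cor:78} any newly created edge $(x,y)\in\hat F^*\setminus F^*$ has exactly one endpoint, say $x$, inside $H_i$ together with $(x,y_i),(x_i,y)\in F^*$ (with $x_i,y_i$ the matching edge crossing $H_i$). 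I would then run the same arithmetic as in the proof of Lemma~\ref{lem:createblossomdual}: using $p(y_i)-p(x_i)=Q_{x_iy_i}$ (which holds because $(x_i,y_i)$ lies in the tight matching $M$), the identity $\hat p(\hat b_i)=\hat p(\hat t_i)$ follows from the formulas $p(\hat b_i):=p(y_i)-Q_{\hat b_i y_i}$, $p(\hat t_i):=p(x_i)+Q_{x_i\hat t_i}$ and the counting identity $Q_{\hat b_i y_i}+Q_{x_i\hat t_i}=Q_{x_i y_i}$, after which (DF2) for edges incident to $\hat b_i$ or $\hat t_i$, and (DF2) for the genuinely new edges $(x,y)$, reduce to the feasibility inequalities that held before $\Augment$. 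One subtlety is that the pivot is around the whole of $P^*$, not a two-element set, so I would invoke Lemma~\ref{lem:pivotsing}/Lemma~\ref{lem:pivot} to reduce to pairwise interactions, and use (AP3)—that $G^\circ[P]$ has a \emph{unique tight} perfect matching—to argue that every edge of $M$ is tight so the relevant potential differences are equalities; this is where the augmenting-path structure is actually used. For (DF3), the only dummy lines whose (DF3) could be disturbed are those newly introduced and the ones among the surviving $H_j$; for the new $(\hat b_i,\hat t_i)$ we just showed $\hat p(\hat b_i)=\hat p(\hat t_i)=p(\hat b_i)$ and $q(H_i)$ is (after renaming) unchanged, and for old $H_j\notin\Lambda_P$ nothing incident changes by Corollary~\ref{cor:78}; for $H_j\in\Lambda_P^+$ with $H_j\subsetneq H_i$ or $H_i\subsetneq H_j$ the nesting forces $b_j,t_j$ to lie on the correct side so the pivot does not alter $C^*_{b_jt_j}$.

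Step 3 only deletes blossoms $H_i$ with $q(H_i)=0$ (since $\Lambda_P\setminus\Lambda_P^+$ consists of non-positive blossoms), and removing such a blossom—if normal, via $\Expand(H_i)$—preserves feasibility by Lemma~\ref{lem:expand}; if it is a source blossom, deletion is harmless as already noted before Lemma~\ref{lem:expand}. Likewise Step 0 is a sequence of $\Expand$ operations on $q$-zero maximal blossoms, covered by Lemma~\ref{lem:expand}. Finally, each iteration of Steps 4 and 5 is, once again, a blossom-creation-style pivot around a four-element set $\{b_i,t_i,b'_i,t'_i\}$: I would check that the $C^*$-entry assignments are exactly of the (BT2)-producing form, that $p(b'_i),p(t'_i)$ are defined so that $p(b'_i)=p(t'_i)$ after the pivot (using $p(b_i)=p(t_i)$, which is (DF3) for $H_i$ carried over from the previous step), and then reuse the Lemma~\ref{lem:createblossomdual} computation verbatim for the freshly created edges. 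The main obstacle I anticipate is bookkeeping rather than conceptual: making sure that when Steps 4 and 5 are applied in the prescribed order of indices, the hypotheses of the ``one new edge has exactly one endpoint inside $H_i$'' lemma still apply after earlier iterations have already modified $C^*$, and that $p(b_i)=p(t_i)$ is available at the moment each iteration runs—this is where I would be most careful, threading the invariant (DF3) through the loop and invoking the analogues of Lemma~\ref{lem:72} and Corollary~\ref{cor:78} at each iteration. Once these pivots are all seen to be instances of the already-analyzed pattern, the dual feasibility after $\Augment(P)$ follows.
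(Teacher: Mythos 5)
Your overall skeleton matches the paper's proof: (DF1) is untouched, Step 0 and Step 3 are handled by Lemma~\ref{lem:expand}, the potentials of the new vertices in Step 1 are analyzed via the tightness of the matching edge $(x_i,y_i)$ together with a $Q$-counting identity, and Steps 4--5 are treated as smaller instances of the same pattern. Two points, however, need attention.

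First, a genuine gap in Step 2. You frame the pivot around $P^*$ as ``exactly a blossom-creation pivot of the same shape analyzed in Lemma~\ref{lem:72}, Corollary~\ref{cor:78}, and Lemma~\ref{lem:createblossomdual},'' and propose to ``reduce to pairwise interactions.'' Corollary~\ref{cor:78} describes the new edges created by a pivot around a \emph{single} dummy line $\{b,t\}$ (a rank-two update); it says nothing about a pivot around the large set $P^*$. After pivoting around $P^*$, an edge $(u,v)$ exists iff $C^*[P^*\triangle\{u,v\}]$ is nonsingular beforehand, and such nonsingularity can arise from a long alternating structure threading through $P^*$ --- it is not witnessed by a single pair $(x,y_i),(x_i,y)\in F^*$. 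The paper's argument is global: apply Lemma~\ref{lem:keyodd} to $X=P^*\triangle\{u,v\}$ to get $p(X\setminus B^*)-p(X\cap B^*)\ge Q_{uv}$ (using that $|P^*\cap H_i|$ is even for every positive blossom, so the odd-intersection blossoms are exactly those in $I_{uv}$), and subtract the equality $p(P^*\setminus B^*)-p(P^*\cap B^*)=0$ coming from the tight perfect matching in $G^\circ[P^*]$. You have both ingredients in hand (Lemma~\ref{lem:pivotsing} and the tightness of $M$), but the lemma you cite to combine them does not do the job; without Lemma~\ref{lem:keyodd} (or an explicit re-derivation of it) the Step 2 inequality is not established. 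The same remark applies to the pivots around $\{b_i,t_i,b'_i,t'_i\}$ in Steps 4--5.

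Second, a smaller arithmetic slip: the counting identity in Step 1 is $Q_{\hat b_i y_i}+Q_{x_i\hat t_i}+q(H_i)=Q_{x_iy_i}$, not $Q_{\hat b_i y_i}+Q_{x_i\hat t_i}=Q_{x_iy_i}$, since $H_i$ crosses $(x_i,y_i)$ but crosses neither $(\hat b_i,y_i)$ nor $(x_i,\hat t_i)$. Consequently $p(\hat b_i)$ and $p(\hat t_i)$ differ by $q(H_i)$ rather than being equal --- which is exactly what (DF3) requires for the positive blossom $H_i$, so the conclusion you want is still true, but the identity as you stated it would contradict (DF3) whenever $q(H_i)>0$.
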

\begin{proof}
By Lemma~\ref{lem:expand}, the dual variables $(p,q)$ are feasible after Step 0. 

We intend to show that $(p,q)$ are feasible after Step 1. 
New edges that appear in $F^*$ are incident to $\hat b_i$ or $\hat t_i$
for some $H_i\in\Lambda_P$. For a new edge $(u,\hat{t_i})\in F^*$, we have $u\in H_i$, $(u,y_i)\in F^*$,  
and $Q_{u y_i}-Q_{u \hat t_i} = Q_{x_i y_i} - Q_{x_i \hat t_i}$. If $x_i\in B^*$, we have 
\begin{eqnarray*}
p(\hat t_i)-p(u) & = & p(x_i)+Q_{x_i\hat t_i}-p(u) \\
                 & = & p(y_i)-Q_{x_iy_i}+Q_{x_i\hat t_i} - p(u) \\ 
                 & = & p(y_i)-Q_{uy_i}+Q_{u\hat t_i} -p(u) \geq Q_{u\hat t_i}. 
\end{eqnarray*}
If $x_i\in V^*\setminus B^*$, we can similarly derive $p(u) - p(\hat t_i) \geq Q_{u\hat t_i}$. 
For a new edge $(\hat{b_i},v)\in F^*$, we have $v\in V^*\setminus H_i$, $(x_i,v)\in F^*$,  
and $Q_{x_i v}-Q_{\hat b_i v} = Q_{x_i y_i} - Q_{\hat b_i y_i}$. If $x_i\in B^*$, 
we have 
\begin{eqnarray*}
p(v) - p(\hat b_i) & = & p(v) - p(y_i) + Q_{\hat b_i y_i}  \\
                   & = & p(v) - p(x_i) - Q_{x_iy_i} + Q_{\hat b_i y_i} \\ 
                   & = & p(v) - p(x_i) - Q_{x_i v} + Q_{\hat b_i v}  \geq Q_{x_i v}. 
\end{eqnarray*}
If $x_i\in V^*\setminus B^*$, we can similarly derive $p(\hat b_i) - p(v) \geq Q_{\hat b_i v}$. 
Thus the dual variables $(p,q)$ remain feasible at the end of Step 1. 

We next intend to show that Step 2 also keeps the dual feasibility. Suppose that $(u,v)\in F^*$ with 
$u\in B^*$ and $v\in V^*\setminus B^*$ after Step 2. Then $C^*[P^*\triangle\{u,v\}]$ must be nonsingular 
before the pivoting operation by Lemma~\ref{lem:pivotsing}. 
Since $|P^*\cap H_i|$ is even for each $H_i\in\Lambda$ with $q(H_i) > 0$, it follows from 
Lemma~\ref{lem:keyodd} that  
$$p((P^*\triangle\{u,v\})\setminus B^*)-p((P^*\triangle\{u,v\})\cap B^*) \geq  Q_{uv}$$
before Step 2. 
On the other hand, since $G^\circ[P^*]$ contains a tight perfect matching, we have 
$$p(P\setminus B^*)-p(P^*\cap B^*) =0$$
before Step 2. 
Combining these two inequalities with $u\in P^*\triangle B^*$ and $v\in V^*\setminus (P^*\triangle B^*)$, 
we obtain $p(v)-p(u)\geq Q_{uv}$, which shows that $(p,q)$ remain feasible after Step 2. 

Removing some vertices in Step 3 does not affect the dual feasibility. 

Finally, we consider each step of Steps 4 and 5. 
We can see that adding $b'_i$ and $t'_i$ does not violate the dual feasibility by the same argument as Step 1. 
If $(u, v) \in F^*$ after the pivoting operation in Step 4 or 5, then 
$C^*[X_i \triangle \{u, v\}]$ is nonsingular where $X_i := \{b_i, t_i, b'_i, t'_i\}$ 
before the pivoting operation by Lemma~\ref{lem:pivotsing}. 
Since $G^\circ[X_i]$ contains a tight perfect matching before the pivoting operation, 
we can apply the same argument as Step 2 to show that 
$(p,q)$ remain feasible after Steps 4 and 5. 

Thus $(p,q)$ is feasible throughout the procedure.  
\end{proof}

We next show the nonsingularity of $C^*[P^*]$ in Step 2, 
which guarantees that we can apply the  pivoting operation in Step 2 of $\Augment(P)$. 

\begin{lemma}
When we apply the pivoting operation in Step 2 of $\Augment(P)$, 
$C^*[P^*]$ is nonsingular.  
\end{lemma}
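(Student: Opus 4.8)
The plan is to reduce the claim to the nonsingularity of $C^*[P]$, which follows from property (AP3), and then to argue that the rank‑two updates performed in Step~1 for the positive blossoms do not destroy nonsingularity.

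First I would note that Step~1 only introduces new rows and columns indexed by the $\hat b_i,\hat t_i$; it leaves every entry of $C^*$ with both indices in $P$ untouched, so $C^*[P]$ at the end of Step~1 coincides with $C^*[P]$ at the start of $\Augment(P)$. By (AP3), $G^\circ[P]$ has a unique tight perfect matching $M$. Since $G^*[P]\supseteq G^\circ[P]$ and the tight matching $M$ lies in $G^*[P]$, the discussion following Lemma~\ref{lem:keyodd} together with Lemma~\ref{lem:tightmatching} forces every perfect matching of $G^*[P]$ to be tight, hence to lie in $G^\circ[P]$; thus the bipartite graph between $P\cap B^*$ and $P\setminus B^*$ given by the nonzero entries of $C^*[P]$ has $M$ as its only perfect matching, exactly one term of the Leibniz expansion of $\det C^*[P]$ is nonzero, and $C^*[P]$ is nonsingular. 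By Corollary~\ref{cor:uniqueedgeaug}, for each $H_i\in\Lambda^+_P$ there is a unique edge $(x_i,y_i)\in M$ crossing $H_i$, and this is precisely the edge used in Step~1.

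Next I would show that the bipartite support graph of the enlarged matrix $C^*[P^*]$ again has a unique perfect matching. Existence is immediate: replacing each crossing edge $(x_i,y_i)$ of $M$ by the two edges $(x_i,\hat t_i)$ and $(\hat b_i,y_i)$ produces a perfect matching $M^*$ of the support of $C^*[P^*]$, since the defining formulas of Step~1 give $C^*_{x_i\hat t_i}=\pm C^*_{x_iy_i}\neq0$ and $C^*_{\hat b_iy_i}=\pm C^*_{x_iy_i}\neq0$, while all remaining edges of $M$ survive. For uniqueness I would work blossom by blossom along the laminar family $\Lambda^+_P$: by the Step~1 formulas the row $\hat b_i$ agrees with the row $x_i$ on all columns of $P^*$ outside $H_i$ and vanishes on those inside $H_i$, and the column $\hat t_i$ agrees with the column $y_i$ on all rows of $P^*$ inside $H_i$ and vanishes on those outside. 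Using these identities (processing nested blossoms from the outermost inward so that the updates of inner blossoms are not undone), together with (BT2) to kill the entries joining $t_i$ or $b_i$ to the outside of $H_i$, and the uniqueness of $M$ restricted to $(P\cap H_i)\setminus\{x_i\}$ — which rests on (BR4) and (BR5) — one shows that any perfect matching of the support of $C^*[P^*]$ must match $\hat t_i$ with $x_i$ and $\hat b_i$ with $y_i$ for every $H_i\in\Lambda^+_P$, hence coincides with $M^*$. Consequently $\det C^*[P^*]$ has a single nonzero term in its expansion and is nonzero.

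The main obstacle is this last uniqueness step. The subtlety is that the Step~1 update for a positive blossom $H_i$ refers to the current matrix, which already carries the $\hat b_j,\hat t_j$ of the positive blossoms $H_j\subsetneq H_i$ processed earlier, so one must keep careful track of which new entries vanish; and pinning down that $\hat t_i$ can only be matched with $x_i$ — equivalently, that $G^*[(P\cap H_i)\setminus\{u\}]$ admits no perfect matching for $u\in H^-_i\cap B^*$ with $u\neq x_i$ — is exactly where (BR5) (and (BT2), for the degenerate case $y_i=b_i$) is needed. Once these combinatorial facts are in place, the conclusion is routine linear algebra.
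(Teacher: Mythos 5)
Your overall strategy --- reduce nonsingularity of $C^*[P^*]$ to uniqueness of a perfect matching in its bipartite support graph, and build that matching from the unique tight perfect matching $M$ of $G^\circ[P]$ by splitting each crossing edge $(x_i,y_i)$ into $(x_i,\hat t_i)$ and $(\hat b_i,y_i)$ --- is the paper's strategy, and your existence half is fine. The gap is in the uniqueness half. You propose to pin down an arbitrary perfect matching $N$ of the support graph $G^*[P^*]$ using the vanishing pattern of the new rows and columns together with (BT2) and (BR4)/(BR5). But (BR5) is a statement about \emph{tight} perfect matchings in the tight-edge graph $G^\circ$, and nothing in your sketch shows that $N$ is tight; moreover, the vanishing pattern of the row $\hat b_i$ and the column $\hat t_i$, and condition (BT2), only control edges incident to the old and new buds and tips, so they do not prevent $N$ from crossing a positive blossom $H_i$ through some other pair $u\in H^-_i\cap P$, $v\in P\setminus H_i$ with $C^*_{uv}\neq 0$. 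Ruling out such crossings is exactly the crux, and the ingredients you list do not suffice for it.

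The paper closes this as follows. One first checks that the new edges $(x_i,\hat t_i)$ and $(\hat b_i,y_i)$ are tight (this is precisely how $p(\hat b_i)$ and $p(\hat t_i)$ are chosen in Step 1) and that the resulting matching is consistent with every positive blossom, so $G^*[P^*]$ has a \emph{tight} perfect matching; Lemma~\ref{lem:tightmatching} then upgrades every perfect matching of $G^*[P^*]$ to a tight one. Now, for a positive $H_i\in\Lambda^+_P$, the set $P'\cap H_i$ has even cardinality once $\hat t_i$ is added, while consistency allows at most one crossing edge; parity therefore forces zero crossing edges, so every tight perfect matching decomposes into one of $G^\circ[P'\cap H_i]$ and one of $G^\circ[P'\setminus H_i]$, each of which is unique because $M$ restricted to $(P\cap H_i)\cup\{y_i\}$ and to $(P\setminus H_i)\cup\{x_i\}$ is unique and the new vertices merely mimic $y_i$ and $x_i$. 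Iterating over $\Lambda^+_P$ gives uniqueness in $G^\circ[P^*]$, hence in $G^*[P^*]$. If you replace your appeal to (BR4)/(BR5) (which are not needed here) by this tightness-plus-parity argument, your proof becomes the paper's.
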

\begin{proof}
We first note that  
$\Expand(H_i)$ in Step 0 does not affect the edges in $G^\circ[P]$. 

We show that  
$G^\circ[P']$ has a unique tight perfect matching for $P' := P \cup \{\hat b_i, \hat t_i\}$
with $H_i \in \Lambda^+_P$. 
Since $G^\circ[P]$ has a unique tight perfect matching $M$, which contains $(x_i, y_i)$, 
both $G^\circ[(P \cap H_i) \cup \{y_i\}]$
and $G^\circ[(P \setminus H_i) \cup \{x_i\}]$	
have a unique tight perfect matching. 
By the definition of $\hat b_i$ and $\hat t_i$, this shows that
both $G^\circ[P' \cap H_i]$
and $G^\circ[P' \setminus H_i]$
have a unique tight perfect matching. 
Thus, we obtain a tight perfect matching in $G^\circ[P']$. 
Furthermore, since $|H_i \cap P'|$ is even and $H_i$ is positive, 
any tight perfect matching in $G^\circ[P']$ 
consists of a tight perfect matching in $G^\circ[P' \cap H_i]$ and one in $G^\circ[P' \setminus H_i]$. 
Therefore, $G^\circ[P']$ has a unique tight perfect matching. 

By applying the same argument to each $H_i \in \Lambda^+_P$, repeatedly, 
we see that $G^\circ[P^*]$ has a unique tight perfect matching. 
By Lemma~\ref{lem:tightmatching}, 
$G^*[P^*]$ has a unique perfect matching, 
which shows that $C^*[P^*]$ is nonsingular.
\end{proof}

Finally in this subsection, we show that 
(BT1) and (BT2) hold after $\Augment(P)$.  

\begin{lemma}\label{lem:augBT}
The procedure $\Augment(P)$ keeps {\em (BT1)} and {\em (BT2)}. 
\end{lemma}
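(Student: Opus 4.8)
The plan is to follow $\Augment(P)$ step by step, using the dual feasibility and the nonsingularity of $C^*[P^*]$ established in the two preceding lemmas. Step~0 preserves (BT1) and (BT2) by Lemma~\ref{lem:expand}, so it suffices to treat Steps~1--5. Throughout, I would record that $\hat b_i,\hat t_i,b_i',t_i'$ are all created outside $V$, so the only operation that changes $B^*\cap V$ is the pivot in Step~2; by (AP1) its effect is exactly $B\mapsto B\triangle(P\cap V)$, where $P\cap V$ is a union of normal lines together with one vertex from each of two distinct source lines.

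For (BT1), I would first note that the bud--tip attachment in Step~1 is the same construction as Step~2 of $\Blossom$, so (as already used right after the definition of $C^*$ and in Lemma~\ref{lem:72}) pivoting the matrix around the collection of bud and tip vertices continues to expose an ambient fundamental cocircuit matrix on $V$. Combining this with the nonsingularity of $C^*[P^*]$ and Lemmas~\ref{lem:pivotsing} and~\ref{lem:pivot}, the pivot in Step~2 makes $B\triangle(P\cap V)$ a base, and, after the renamings in Step~3, pivoting the new $C^*$ around the updated bud--tip set restricts on $V$ to the fundamental cocircuit matrix with respect to that base. Steps~4 and~5 only pivot around bud--tip quadruples disjoint from $V$ and rename auxiliary vertices, so the same pivot bookkeeping shows they preserve (BT1).

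For (BT2) there are three kinds of surviving normal blossoms. If $H_i\notin\Lambda_P$, then $b_i$ and $t_i$ are unchanged, and one checks, using Observation~\ref{obs:budtipinP} together with (BR2) and (BR3) to exclude $b_i\in P$, that $\{b_i,t_i\}$ is disjoint from $P^*$ and from every quadruple $\{b_j,t_j,b_j',t_j'\}$ used in Steps~4--5; hence every pivot performed combines rows and columns through positions where the (BT2) zero pattern of $H_i$ forces the update to vanish in the forbidden entries, exactly as in the computation in the proof of Lemma~\ref{lem:expand}. If $H_i\in\Lambda_P\setminus\Lambda_P^+$, then $q(H_i)=0$ and, when $H_i$ is normal, $b_i,t_i\in P\subseteq P^*$ by Observation~\ref{obs:budtipinP}, so $H_i$ together with its bud and tip is removed consistently in Step~3 and needs no further check. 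The substantive case is $H_i\in\Lambda_P^+$: after Step~3 the pair $(\hat b_i,\hat t_i)$ already satisfies one half of (BT2) by the explicit entries set in Step~1 followed by the Step~2 pivot, by an argument parallel to Lemma~\ref{lem:72}, but the other half may fail, namely there may remain edges of $F^*$ from the tip to $(V^*\setminus H_i)\setminus\{b_i\}$ or from the bud to $H_i\setminus\{t_i\}$.

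The clean-up is the heart of the argument. In Step~4 one introduces a fresh dummy line $(b_i',t_i')$ whose row and column pattern is the correct one by definition, and the pivot around $\{b_i,t_i,b_i',t_i'\}$ transfers the bud and tip roles; I would verify, as in Lemma~\ref{lem:72}, that afterwards $t_i'$ has no $F^*$-edge to $(V^*\setminus H_i)\setminus\{b_i'\}$, and that processing $\Lambda_P^+$ from innermost to outermost guarantees that repairing $H_i$ does not reintroduce such an edge for an already-repaired smaller blossom. Step~5 performs the symmetric clean-up of the bud-side edges, from outermost to innermost for the analogous reason, so that after the final renamings every $H_i\in\Lambda_P^+$ satisfies both halves of (BT2) while the pivots of Steps~4--5 leave (BT1) and the (BT2) of the other blossoms intact. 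I expect the main obstacle to be precisely this last bookkeeping for nested positive blossoms: showing that the prescribed processing orders make the repair monotone, i.e. that re-establishing (BT2) for one blossom never destroys it for a blossom handled earlier, which requires expanding each double pivot and invoking, inductively, the zero patterns guaranteed so far, in the spirit of the computations in the proofs of Lemmas~\ref{lem:72} and~\ref{lem:expand}.
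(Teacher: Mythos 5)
Your plan coincides with the paper's proof in every structural respect: Step~0 is handled by Lemma~\ref{lem:expand}, (BT1) is immediate from the construction, blossoms disjoint from $P$ and the removed blossoms of $\Lambda_P\setminus\Lambda^+_P$ are dispatched as you say, and Steps~4 and~5, processed in increasing and then decreasing order of indices, are indeed where (BT2) is restored for the positive blossoms meeting $P$. The difficulty is that you stop precisely at what you yourself call ``the heart of the argument'': everything from ``I would verify, as in Lemma~\ref{lem:72}, \dots'' onward is a statement of intent, and those verifications are the entire content of the lemma. Three computations are missing. First, $C^*_{b_i't_i'}\neq 0$ after the pivot around $X_i=\{b_i,t_i,b_i',t_i'\}$ is \emph{not} ``correct by definition'': before that pivot the formulas of Step~4 give $C^*_{t_i'b_i'}=0$ (the bud lies outside $H_i$), and the paper derives the nonzero entry from the nonsingularity of $C^*[\{b_j',t_j'\mid H_j\in\Lambda^+_P,\ j\le i\}]$, traced back through Lemma~\ref{lem:pivotsing} to the nonsingularity of the bud--tip part of $C^*[P^*]$, combined with the zero pattern already established for $j<i$. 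Second, the forbidden entries vanish because, for $u\in B^*\setminus H_i$, one has $C^*[X_i\triangle\{t_i',u\}]=C^*[\{b_i,t_i,b_i',u\}]$, and this submatrix is singular since the $t_i$- and $b_i'$-columns agree on every row outside $H_i$ by the Step~4 assignment; Lemma~\ref{lem:pivotsing} then forces $C^*_{ut_i'}=0$ after the pivot. Third, the monotonicity you flag as the main obstacle --- that repairing $H_i$ does not reintroduce a bad entry for an already-processed $H_j$ --- is again a singularity argument for $C^*[X_i\triangle\{t_j',v\}]$, which contains a zero row or column coming from the pattern already secured for $t_j'$. None of these appears in your write-up, so what you have is a correct outline with the decisive inductive computation left unexecuted.

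A secondary inaccuracy: you assert that after Step~3 the pair $(\hat b_i,\hat t_i)$ already satisfies ``one half of (BT2)''. The paper claims only that (BT1) and dual feasibility hold at that point; in particular $C^*_{\hat b_i\hat t_i}\neq 0$ is not guaranteed after the Step~2 pivot and is never used --- the nonzero bud--tip entry is established only for the final pair $(b_i',t_i')$ via the nonsingularity chain described above.
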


\begin{proof}
It is obvious from the definition that (BT1) holds. 

We first show by induction on $i$ that, for any $j \le i$ with $H_j\in\Lambda^+_P$,  
$(b'_j, t'_j) \in F^*$ and 
there is no edge in $F^*$ between $t'_j$ and $(V^* \setminus H_j) \setminus \{b'_j\}$ 
after the pivoting operation around $X_i := \{b_i, t_i, b'_i, t'_i\}$ in Step 4. 
We only consider the case when 
$b'_i \in B^*$ and $t'_i \in V^* \setminus B^*$ after the pivoting operation
as the other case can be dealt with in a similar way. 
Since 
$$
C^*\left[ P^*  \setminus \{\hat b_j, \hat t_j \mid H_j \in \Lambda^+_P, j\le i\}\right]
$$
is nonsingular before the pivoting operation around $P^*$ in Step 2, we have 
$C^*[\{b_j, t_j \mid H_j \in \Lambda^+_{P}, j \le i\}]$ is nonsingular after the pivoting operation around $P^*$ in Step 2 by Lemma~\ref{lem:pivotsing}. 
By Lemma~\ref{lem:pivotsing} again, 
this shows that 
$C^*[\{b'_j, t'_j \mid H_j \in \Lambda^+_{P}, j \le i\}]$ is nonsingular 
after the pivoting operation around $X_i$ in Step 4. 
Since there is no edge between $t'_j$ and $(V^* \setminus H_j) \setminus \{b'_j\}$ for $j < i$ by induction hypothesis, 
the nonsingularity of $C^*[\{b'_j, t'_j \mid H_j \in \Lambda^+_{P}, j \le i\}]$ shows that $C^*_{b'_i t'_i} \not= 0$.   
Before the pivoting operation around $X_i$, for $u \in B^*\setminus H_i$ with $u\neq b_i$, 
$\det C^*[X_i \triangle \{t'_i, u\}] = \det C^*[\{b_i, t_i, b'_i, u\}]$ 
is zero, since two columns in $C^*[\{b_i, t_i, b'_i, v\}]$ are the same by the definition of $b'_i$. 
Thus, $C^*_{u t'_i} = 0$ for $u \in B^*\setminus H_i$ with $u\neq b'_i$
after the pivoting operation around $X_i$. 
Furthermore, for any $j < i$ with $H_j \in \Lambda^+_P$, the pivoting operation around $X_i$ does not create 
a new edge in $F^*$ between $t'_j$ and $v \in (V^* \setminus H_j) \setminus \{b'_j\}$, 
because a row/column of $C^*[X_i \triangle \{t'_j, v\}]$ corresponding to $v$ is zero before the pivoting operation around $X_i$.
We can also see that 
the pivoting operation around $X_i$ does not remove $(b'_j, t'_j)$ from $F^*$ for any $j < i$ with $H_j \in \Lambda^+_P$.  
Hence, for each $H_i \in \Lambda^+_P$, $(b'_i, t'_i) \in F^*$ and 
there is no edge in $F^*$  between $t'_i$ and $(V^* \setminus H_i) \setminus \{b'_i\}$ 
after applying (i)--(iv) for each normal blossom in Step 4. 

We next show by induction on $i$ (in the reverse order) that, for any $j \ge i$ with $H_j \in \Lambda^+_P$,  
$H_j$ satisfies the condition in (BT2) after the pivoting operation around $X_i := \{b_i, t_i, b'_i, t'_i\}$ in Step 5. 
Note that the pivoting operation around $X_i$ creates/deletes neither  
an edge in $F^*$ between $t'_j$ and $(V^* \setminus H_j) \setminus \{b'_j\}$ for $j \not= i$, 
nor 
an edge in $F^*$ between $b'_j$ and $H_j \setminus \{t'_j\}$ for $j > i$. 
Thus, it suffices to show that there is no edge in $F^*$ between $b'_i$ and $H_i \setminus \{t'_i\}$ 
after the pivoting operation around $X_i$ in Step 5.   
We only consider the case when 
$b'_i \in B^*$ and $t'_i \in V^* \setminus B^*$ after the pivoting operation
as the other case can be dealt with in a similar way. 
Before the pivoting operation around $X_i$, for $v \in H_i\cap B^*$ with $v\neq t_i$, 
$\det C^*[X_i \triangle \{b'_i, v\}] = \det C^*[\{b_i, t_i, t'_i, v\}]$ 
is zero, since two rows in $C^*[\{b_i, t_i, t'_i, v\}]$ are the same by the definition of $t'_i$. 
Thus, $C^*_{b'_i v} = 0$ for $v \in H_i\cap B^*$ with $v\neq t'_i$
after the pivoting operation around $X_i$. 
Hence, by applying (i)--(iv) for each normal blossom in Step 5, 
there is no edge in $F^*$ between $b'_i$ and $H_i \setminus \{t'_i\}$ for each $H_i \in \Lambda^+_P$. 

Since the pivoting operations do not 
create/delete an edge in $F^*$ between $t'_i$ and $(V^* \setminus H_i) \setminus \{b'_i\}$ for each $H_i \in \Lambda_{\rm n} \setminus \Lambda^+_P$, 
(BT2) holds after $\Augment(P)$.   
\end{proof}

\subsection{Search in Each Blossom}
\label{sec:reroute}
In this subsection, we describe how to update the routing $R_{H_i}(x)$ for each $x\in H^\bullet_i$ and the ordering 
$<_{H_i}$ in $H^\bullet_i$ after the augmentation. 
If $H_i$ does not intersect with the augmenting path $P$, 
then the augmentation does not affect 
$G^\circ[H_i]$, and the algorithm simply keeps the same routing and ordering as before. 

For each blossom $H_i \in \Lambda_{\rm n}$ with $H_i \cap P \not= \emptyset$, in the order of indices, we apply $\Search$ to $H_i \cup \{b_i\}$ 
in which we regard  the dummy line $\{b_i,t_i\}$ as the unique source line.
The family of blossoms is restricted to the set of blossoms $H_j\in\Lambda_{\rm n}$ with $H_j\subsetneq H_i$. 
For each inner blossom $H_j$, we have already computed $<_{H_j}$ and $R_{H_j}(x)$ for $x \in H^\bullet_j$. 
Since there exists no augmenting path in $H_i$, $\Search$ always returns $\emptyset$. 
Then, we can show that 
the procedure labels every vertex in $H_i \cap V$ without updating the dual variables as we will see in Lemma~\ref{lem:reachableHi}. 
However, this procedure may create new blossoms in $H_i$, 
and the bud $b$ of such a blossom $H$ is not labeled.  
This means that we do not obtain $R_{H_i}(b)$, whereas $b$ might be in $H^-_i$. 
To overcome this problem, we update the dual variables and apply $\Expand(H_i)$. 
Whenever $\Search$ terminates, we update the dual variables as we will describe later.  
We repeat this process until $q (H_i)$ becomes zero. 
Then, we apply $\Expand(H_i)$. 
 
A new blossom $H$ created in this procedure
is accompanied by $<_{H}$ and $R_{H}(x)$ for $x \in H^\bullet$ satisfying (BR1)--(BR5)
by the argument in Sections~\ref{sec:search} and \ref{sec:validity}. 
We can also see that 
$p$ and $q$ are feasible after creating a new blossom 
by the same argument as Lemma~\ref{lem:createblossomdual}.

This argument shows that (BT1), (BT2), and (DF1)--(DF3) hold when we restrict the instance to $H_i \cup \{b_i\}$.
We now show that we can create a new blossom $H$ with $q(H) = 0$ in the procedure
so that these conditions hold in the entire instance. 
To this end, when we create a new blossom $H$, 
we define the row and the column of $C^*$ corresponding to $\{b, t\}$ as follows. 
\begin{itemize}
\item
If $b \in B^*$, then 
we define $C^*_{b y}= 0$ for any $y \in (V^* \setminus (H_i \cup \{b_i\})) \setminus B^*$ and
$C^*_{x t} = C^*_{x g}$ for any $x \in B^* \setminus (H_i \cup \{b_i\})$ (see Fig.~\ref{fig:revisionfig30}).
\item
If $b \in V^* \setminus B^*$, then 
we define $C^*_{x b}= 0$ for any $x \in B^* \setminus (H_i \cup \{b_i\})$ and
$C^*_{t y} = C^*_{g y}$ for any $y \in (V^* \setminus (H_i \cup \{b_i\})) \setminus B^*$
\item
The other entries in $C^*$ are determined by $\Search$ in $H_i \cup \{b_i\}$.
\item
Then, apply the pivoting operation to $C^*$ around $\{b, t\}$. 
\end{itemize}
In other words, we consider all the vertices in $V^*$ (instead of $H_i \cup \{b_i\}$) 
when we introduce new vertices in Step 2 of $\Blossom(v, u)$ or Step 1 of $\DBlossom(v, H_i)$. 
Note that this modification does not affect the entries in $C^*[H_i \cup \{b_i\}]$, 
and hence it does not affect $\Search$ in $H_i \cup \{b_i\}$.

\begin{figure}[htbp]
  \centering
    \includegraphics[width=12cm]{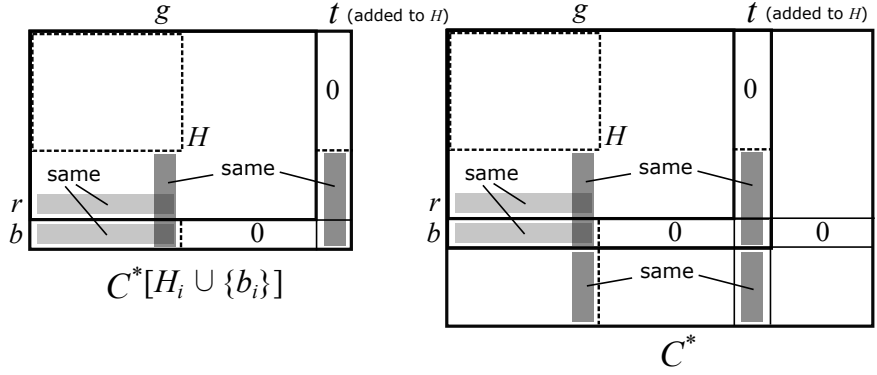}
      \caption{Definition of $C^*$. Each element in the left figure is determined by $\Search$ in $H_i \cup \{b_i\}$
                   in the same way as Fig~\ref{fig:revisionfig13}. We extend this definition to the entire matrix as shown in the right figure.} 
    \label{fig:revisionfig30} 
\end{figure}

\begin{lemma}
\label{lem:reachableHi}
When we apply $\Search$ in $H_i \cup \{b_i\}$ as above,
the procedure labels every vertex in $H_i \cap V$ without updating the dual variables. 
\end{lemma}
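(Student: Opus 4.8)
The plan is to deduce the statement from Lemma~\ref{lem:labeliff}, applied to the restricted instance on which $\Search$ is run here: ground set $H_i\cup\{b_i\}$, the dummy line $\{b_i,t_i\}$ as the unique source line, $C^*$ restricted to $H_i\cup\{b_i\}$, blossom family $\{H_j\in\Lambda_{\rm n}\mid H_j\subsetneq H_i\}$, and $p,q$ restricted accordingly. This is itself a well-posed instance (exactly one end of $\{b_i,t_i\}$ lies in the restricted base, by the invariant $|B^*\cap\{b_i,t_i\}|=1$), the first run of $\Search$ on it precedes any dual update in the re-routing of $H_i$, and it returns $\emptyset$ because Step~(3-1) can never fire (there are no two distinct source lines). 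Hence it suffices to show that every $v\in H_i\cap V$ is labeled in this run. By the restricted form of Lemma~\ref{lem:labeliff}, this reduces to exhibiting, for each $v\in H_i\cap V$, a set $X_v\subseteq(H_i\cup\{b_i\})\setminus\{v\}$ such that $X_v\cup\{v\}$ is a union of lines inside $H_i\cap V$, dummy lines of inner blossoms, and one end of $\{b_i,t_i\}$; such that $\{b_j,t_j\mid H_j\in\Lambda_{\rm n},\,H_j\subsetneq H_i\}\subseteq X_v\cup\{v\}$; such that $C^*[X_v]$ is nonsingular; and such that the equality (\ref{eq:8411}) holds for $X_v$.

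To obtain $X_v$ I would exploit the fact that $\Augment(P)$ \emph{re-bases} $H_i$. If $(x_i,y_i)$ is the unique edge of the tight matching in $G^\circ[P]$ crossing $H_i$ (Corollary~\ref{cor:uniqueedgeaug}), then $P\cap H_i=R^{\rm old}_{H_i}(x_i)$, and Steps~1--5 of $\Augment$ replace the bud and tip of $H_i$ and pivot $C^*$ so that afterwards the new tip $t_i$ occupies the role formerly played by $x_i$, with (BT1) and (BT2) restored (Lemma~\ref{lem:augBT}). Since $H_i$ obeyed (BR1)--(BR5) before $\Augment$, every $v\in H_i\cap V$ was reachable inside $H_i$ from the old tip via $R^{\rm old}_{H_i}(v)$, and $G^\circ[R^{\rm old}_{H_i}(v)\setminus\{v\}]$ carried a unique tight perfect matching by (BR4). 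The candidate $X_v$ then packages the new tip $t_i$, all inner dummy lines $\{b_j,t_j\}$ with $H_j\subsetneq H_i$, and the lines of $H_i\cap V$ that the re-based blossom routes through from $t_i$ to $v$; its defining tight perfect matching is obtained by symmetric-differencing the tight perfect matching of $G^\circ[R^{\rm old}_{H_i}(v)\setminus\{v\}]$ with that of $G^\circ[R^{\rm old}_{H_i}(x_i)\setminus\{x_i\}]$, completing by the tight edges $(b_j,t_j)$ (tight by (DF3)) of the inner blossoms not yet covered, and transporting the result through the pivots of $\Augment$ using Lemma~\ref{lem:pivotsing}.

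The conditions on the form of $X_v\cup\{v\}$ and the containment of the inner dummy lines are then immediate from the construction, using that all lines of $H_i\cap V$ are normal since $H_i\in\Lambda_{\rm n}$. Nonsingularity of $C^*[X_v]$ holds because the transported matching is a perfect matching of $G^*[X_v]$, and the equality (\ref{eq:8411}) holds because that matching is tight, so the chain of inequalities in the proof of Lemma~\ref{lem:keyodd} collapses to equality --- using (DF1)--(DF3) on the individual tight edges and Lemma~\ref{lem:tightmatching} to see that tightness is independent of the matching chosen. The main obstacle is precisely the middle step: making rigorous the claim that augmenting along $P$ turns $H_i$ into a valid blossom re-based at $x_i$, i.e.\ tracking how the pivots in Steps~1--5 of $\Augment(P)$ (and the auxiliary vertices $\hat b_i,\hat t_i,b'_i,t'_i$) transform the tight-edge set inside $H_i$, so that the old routing data really do furnish a valid $X_v$. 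This is the blossom-rerouting counterpart of Lemma~\ref{lem:aug}, and it rests on Lemma~\ref{lem:augBT} together with the pivot and tightness lemmas (Lemmas~\ref{lem:pivot}, \ref{lem:pivotsing}, \ref{lem:tightmatching}).
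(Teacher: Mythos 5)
Your overall strategy coincides with the paper's: reduce to Lemma~\ref{lem:labeliff}, build a witness set for each $v\in H_i\cap V$ out of the pre-augmentation routing $R_{H_i}(\cdot)$ together with $P\cap H_i$ and the inner dummy lines, carry it through the pivots of $\Augment(P)$ via Lemma~\ref{lem:pivotsing}, and derive the equality (\ref{eq:8411}) from tightness. There are, however, two concrete gaps. First, your construction fails for $v\in P\cap H_i$: since $P\cap H_i=R_{H_i}(x_i)$ contains $v$ while $R_{H_i}(v)\setminus\{v\}$ does not, the symmetric difference you describe places $v$ itself inside $X_v$, which is inadmissible ($X_v$ must avoid $v$ and $X_v\cup\{v\}$ must have the prescribed line structure). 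The paper avoids this by starting from $R_{H_i}(\bar v)\setminus\{\bar v\}$ (which contains $v$) precisely when $v\in P$, so that the symmetric difference with $P^*$ expels $v$. Relatedly, the ``symmetric difference of the two tight perfect matchings'' is not a perfect matching of the symmetric difference of their supports (it is a union of alternating paths and cycles); the paper never builds a matching on $X$ by hand, but instead forms the disjoint union $Y=Z\cup(P^*\setminus H_i)\cup\{b_j,t_j\mid H_j\subsetneq H_i,\ H_j\cap Z=\emptyset\}$, checks that $G^\circ[Y]$ has a tight perfect matching, and only then converts nonsingularity of $C^*[Y]$ into nonsingularity of $C^*[Y\triangle P^*]$ by Lemma~\ref{lem:pivotsing}.

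Second, transporting the witness through Steps 4 and 5 of $\Augment(P)$ (the pivots around $X_j=\{b_j,t_j,b'_j,t'_j\}$ that re-tip each positive blossom) is not a routine invocation of Lemma~\ref{lem:pivotsing}: the paper needs a three-case analysis, replacing $X$ by $X\triangle X_j$ when $H_j\subsetneq H_i$, first enlarging $X$ by the tight dummy line $\{b_j,t_j\}$ and afterwards discarding $\{b'_j,t'_j\}$ (using that $|X'\cap H_i|$ is even) when $H_j\supsetneq H_i$ or $H_j\cap H_i=\emptyset$, and enlarging by $\{b_i,b'_i\}$ (using (BT2) for $b'_i$) when $H_j=H_i$. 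You explicitly defer exactly this step as ``the main obstacle'' without supplying an argument, so as written the proposal is a plan whose hard kernel remains unproved.
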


\begin{proof}
By Lemma~\ref{lem:labeliff}, it suffices to show that 
for every vertex 
$v \in H_i \cap V$,
there exists a vertex set $X \subseteq H_i$ with the conditions in Lemma~\ref{lem:labeliff}. 

We first show that such a vertex set exists after Step 3 of $\Augment(P)$. 
For a given vertex $v\in H_i \cap V$, define $Z \subseteq H_i$ by 
$$Z := 
\begin{cases}
R_{H_i}(v) \setminus \{v\} & \mbox{if $v \not\in P$}, \\
R_{H_i}(\bar v) \setminus \{\bar v \} & \mbox{if $v \in P$}.
\end{cases}
$$ 
Then, $G^\circ[Z]$ has a unique tight perfect matching by (BR4). 
Set
$$
Y:= Z \cup (P^* \setminus H_i) \cup \{b_j, t_j \mid H_j \in \Lambda_{\rm n},\ H_j \subsetneq H_i,\ H_j \cap Z = \emptyset\}. 
$$
Since each of $G^\circ[P^* \setminus H_i]$ and $G^\circ[P^* \cap H_i]$ has 
a unique tight perfect matching, where $P^* \cap H_i$ might be the emptyset, 
$G^\circ[Y]$ also has a unique tight perfect matching. 
This shows that $C^*[Y]$ is nonsingular before the pivoting operation around $P^*$ in Step 2 of $\Augment(P)$ by Lemma~\ref{lem:tightmatching}, and hence 
$C^*[X]$ with $X := Y \triangle P^*$ is nonsingular after the pivoting operation around $P^*$ by Lemma~\ref{lem:pivotsing}. 
Then, $X \cup \{v\}$ consists of lines, dummy lines, and the tip $t_i$, and 
it contains all the buds and the tips in $H_i$ after updating $b_i$ and $t_i$ in Step 3 of $\Augment(P)$.  
Furthermore, the tightness of the perfect matching in $G^\circ[Y]$ shows that 
$X$ satisfies (\ref{eq:8411}) after the augmentation. 
Thus, $X$ satisfies the conditions in Lemma~\ref{lem:labeliff} after Step 3 of $\Augment(P)$.

We next show that such a set $X$ exists after Steps 4 and 5 of $\Augment(P)$. 
Suppose that we apply (i)--(iv) in Step 4 or 5 of $\Augment(P)$ for $H_j \in \Lambda_{\rm n}$, 
that is, we apply the pivoting operation around  $X_j := \{b_j, t_j, b'_j, t'_j\}$. 
We consider the following three cases, separately. 
\begin{itemize}
\item
Suppose that $H_j \subsetneq H_i$. 
In this case, since $C^*[X]$ is nonsingular before the pivoting operation around $X_j$, 
$C^*[X \triangle X_j]$ is nonsingular after the pivoting operation by Lemma~\ref{lem:pivotsing}. 
We can also check that $X \triangle X_j$ satisfies the other conditions in Lemma~\ref{lem:labeliff}. 

\item
Suppose that $H_j \supsetneq H_i$ or $H_j \cap H_i = \emptyset$. 
Let $X':= X \cup \{b_j, t_j\}$. 
Since $X'$ satisfies (\ref{eq:8411}) and $|X' \cap H_i|$ is even, 
the nonsingularity of $C^*[X]$ and $C^*[\{b_j, t_j\}]$ shows that 
$C^*[X']$ is nonsingular before the pivoting operation around $X_j$. 
Hence, $C^*[X' \triangle X_j]$ is nonsingular after the pivoting operation by Lemma~\ref{lem:pivotsing}. 
Since $|X' \cap H_i|$ is even, this implies that 
$C^*[(X' \triangle X_j) \setminus \{b'_j, t'_j\}] = C^*[X]$ is nonsingular after the pivoting operation.
We can also check that $X$ satisfies the other conditions in Lemma~\ref{lem:labeliff}. 

\item
Suppose that $H_j = H_i$. Let $X':= X \cup \{b_i, b'_i\}$. 
Since $(b_i, b'_i) \in F^*$ and there is no edge in $F^*$ between $b'_i$ and $H_i$, 
the nonsingularity of $C^*[X]$ shows that $C^*[X']$ is nonsingular
before the pivoting operation around $X_j$. 
Hence, $C^*[X' \triangle X_j]$ is nonsingular after the pivoting operation by Lemma~\ref{lem:pivotsing}. 
We can also check that $X' \triangle X_j$ satisfies the other conditions in Lemma~\ref{lem:labeliff}. 
\end{itemize}
By these cases, 
there exists a set $X$ satisfying the conditions in Lemma~\ref{lem:labeliff} after Steps 4 and 5 of $\Augment(P)$. 

Therefore, every vertex in $H_i \cap V$ is labeled without updating the dual variables by Lemma~\ref{lem:labeliff}, 
which completes the proof. 
\end{proof}

In what follows in this subsection, 
we describe how to update the dual variables. 

Suppose that $\Search$ returns $\emptyset$ when it is applied to $H_i \cup \{b_i\}$. 
Define $R^+$, $R^-$, $Z^+$, $Z^-$, $Y$, and $\epsilon=\min\{\epsilon_1,\epsilon_2,\epsilon_3,\epsilon_4\}$ 
as in Section~\ref{sec:dualupdatealgo}. By Lemma~\ref{lem:reachableHi}, 
we have that $R^+ = \{b_i, t_i \} $, $R^-=\{b_j\mid \mbox{$H_j$: maximal blossom with $H_j\subsetneq H_i$}\}$, 
$Z^-= Y = \emptyset$, and $\epsilon_2 = \epsilon_3 = \epsilon_4 = + \infty$. 
In particular, every maximal blossom is labeled with $\oplus$.
Here, a blossom $H_j\subsetneq H_i$ is called a {\em maximal blossom} if there exists no blossom $H$ with $H_j \subsetneq H \subsetneq H_i$.
We now modify the dual variables in $V^*$ as follows. 
Set $\epsilon' := \min \{ \epsilon,  q (H_i)\}$, which is a finite positive value. 
Then update $p(t_i)$ as 
$$p(t_i) :=\begin{cases}
 p(t_i)+\epsilon' & (t_i \in  B^*), \\ 
 p(t_i)-\epsilon' & (t_i \in  V^* \setminus B^*), 
\end{cases}$$
and update $q(H_i)$ as $q(H_i) := q(H_i)-\epsilon'$. 
For each maximal blossom $H_j\subsetneq H_i$,  
which must be labeled with $\oplus$, 
update $q(H_j)$ as $q(H_j) := q(H_j)+\epsilon'$ 
and $p(b_j)$ as 
$$
p(b_j) :=\begin{cases}
 p(b_j) - \epsilon' & (b_j \in  B^*), \\ 
 p(b_j) + \epsilon' & (b_j \in  V^* \setminus B^*). 
\end{cases}
$$
Note that $\Expand(H_j)$ is not applied for any maximal blossom $H_j \subsetneq H_i$, 
because $q(H_j) > 0$ after the dual update, 
whereas $\Expand(H_i)$ is applied when $q (H_i)$ becomes zero.

We now prove the following claim, which shows the validity of this procedure. 

\begin{claim}
The obtained dual variables $p$ and $q$ are feasible in $V^*$  (not only in $H_i$).  
\end{claim}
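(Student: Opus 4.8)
The plan is to check the dual-feasibility conditions (DF1)--(DF3) over all of $V^*$ by recording exactly what the update changes and splitting $F^*$ according to position relative to $H_i$. The update moves $p$ only on $t_i$ and on the buds $b_j$ of the maximal sub-blossoms $H_j\subsetneq H_i$, and moves $q$ only on $H_i$ (down by $\epsilon'$) and on those $H_j$ (each up by $\epsilon'$); since $t_i$ and every such $b_j$ lies in $T$, no vertex of a genuine line has its potential altered, so (DF1) survives, and $q(H_i)-\epsilon'\ge0$ because $\epsilon'\le q(H_i)$. For (DF2) and (DF3), I would call an edge $(u,v)\in F^*$ \emph{exterior} if both endpoints lie in $V^*\setminus H_i$ (this includes $b_i$), \emph{interior} if both lie in $H_i$, and \emph{crossing} otherwise. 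Exterior edges meet no modified vertex and cross neither $H_i$ nor any sub-blossom of $H_i$, so everything about them, including (DF3) for dummy lines $\{b_k,t_k\}$ with $H_k\supsetneq H_i$ or $H_k$ disjoint from $H_i$, is unchanged.

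For interior edges I would note that they never meet $b_i$ and never cross $H_i$; hence on them the update acts exactly like the dual update of Section~\ref{sec:dualupdatealgo} applied to the restricted instance on $H_i\cup\{b_i\}$ (with $t_i$ and the $b_j$ playing the roles of the labeled non-blossom vertices and the $H_j$ the maximal $\oplus$-blossoms), and their (DF2), together with (DF3) for the sub-dummy-lines $\{b_j,t_j\}$, follows by the analysis of Section~\ref{sec:dualupdatealgo} on that restricted instance. The one point needing the choice of $\epsilon$ is an interior edge meeting two distinct maximal $\oplus$-blossoms, whose $Q$-value rises by $2\epsilon'$, which is safe because $\epsilon'\le\epsilon\le\epsilon_1\le\tfrac12\bigl(p(v)-p(u)-Q_{uv}\bigr)$ --- this is precisely the reason for the factor $\tfrac12$ in the definition of $\epsilon_1$.

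The crossing edges carry the novelty. Each of them crosses $H_i$, so its $Q$-value drops by $\epsilon'$ from $q(H_i)$. By (BT2) for $H_i$, the only crossing edge touching $t_i$ is the dummy line $\{b_i,t_i\}$ itself; there $p(b_i)$ is fixed and the $\pm\epsilon'$ move of $p(t_i)$ matches the $-\epsilon'$ move of $q(H_i)$, so (DF3), hence (DF2), for $\{b_i,t_i\}$ stays an equality. For a crossing edge whose $H_i$-endpoint is $w\ne t_i$: if $w=b_j$ for a maximal $H_j$, then $w\notin H_j$ and, the $H_j$ being maximal, $H_i$ is the only modified blossom the edge crosses, so $Q$ drops by $\epsilon'$ while the forced move of $p(b_j)$ ($-\epsilon'$ if $b_j\in B^*$, in which case $b_j$ is the $u$-side of the edge, $+\epsilon'$ otherwise) pushes the left side of (DF2) up --- net slack increases; if instead $w$ lies inside some maximal $H_j$, the edge also crosses $H_j$, the $+\epsilon'$ in $q(H_j)$ cancels the $-\epsilon'$ in $q(H_i)$, and $p(w)$ is unchanged; and if $w$ lies in no sub-blossom, only $H_i$ among the modified blossoms is crossed, $Q$ drops by $\epsilon'$, and $p(w)$ is unchanged. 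In every case (DF2) is preserved, which completes the check over $V^*$.

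I expect the main obstacle to be the reconciliation in the interior-edge step: one must verify that the reroute update is genuinely the Section~\ref{sec:dualupdatealgo} update on $H_i\cup\{b_i\}$ \emph{except} that the increment of $p(b_i)$ is dropped and a decrement of $q(H_i)$ is added instead, and that this trade can only enlarge the slack of every constraint --- it strictly helps the crossing edges as above and is invisible to the interior and exterior ones since those neither touch $b_i$ nor cross $H_i$ --- together with the routine but fiddly bookkeeping of which (possibly non-maximal) sub-blossoms each crossing edge passes through.
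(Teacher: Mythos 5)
Your proposal is correct and takes essentially the same approach as the paper: a direct verification of (DF1)--(DF3) that identifies exactly which constraints can lose slack under the update and bounds the loss by the pre-update slack via $\epsilon'\le\epsilon\le\epsilon_1$ (and by $q(H_i)$ for nonnegativity). Your write-up is in fact considerably more explicit than the paper's two-line argument, in particular in the case analysis for edges crossing $H_i$ and in spelling out why the worst-case decrease of $2\epsilon'$ is covered by the factor $\tfrac12$ in the definition of $\epsilon_1$.
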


\begin{proof}
It suffices to show (DF2). Suppose that $u \in B^*$, $v \in V^* \setminus B^*$, and $(u, v) \in F^*$. 
Since the value of $q(H)$ is zero for newly created blossoms $H$, the dual variable $(p,q)$ are feasible 
at the end of in $\Search$ applied to $H_i \cup \{b_i\}$. Updating the dual variables decreases the slack 
$p(v)-p(u)-Q_{uv}$ only if $u$ and $v$ belong to distinct maximal blossoms included in $H_i$ or 
one of them is $t_i$. In these cases, however, we have $p(v)-p(u)-Q_{uv}\geq\epsilon_1\geq\epsilon'$. 
Thus the above update of the dual variables does not violate the feasibility.   
\end{proof}

\section{Algorithm Description and Complexity}
\label{sec:complexity}

Our algorithm for the minimum-weight parity base problem is described as follows. 

\begin{description}
\item[Algorithm] {$\sf Minimum$-$\sf Weight$ $\sf Parity$ $\sf Base$} 
\item[Step 1:]
Split the weight $w_\ell$ into $p(v)$ and $p(\bar{v})$ for each 
line $\ell=\{v,\bar{v}\}\in L$, i.e., $p(v) + p(\bar{v}) = w_\ell$. 
Execute the greedy algorithm for finding a base $B\in\B$ 
with minimum value of $p(B)=\sum_{u\in B}p(u)$. 
Set $\Lambda = \emptyset$. 

\item[Step 2:]
If there is no source line, then return $B := B^* \cap V$ as an optimal solution. 
Otherwise, apply $\Search$. 
If $\Search$ returns $\emptyset$, then go to Step 3. 
If $\Search$ finds an augmenting path $P$, then go to Step 4. 

\item[Step 3:]
Update the dual variables as in Section~\ref{sec:dualupdatealgo}. 
If $\epsilon = +\infty$, then conclude that there exists no parity base and terminate the algorithm. 
Otherwise, apply $\Expand(H_i)$ for all maximal blossoms $H_i$ with $q(H_i)=0$ and go to Step 2.

\item[Step 4:]
Apply $\Augment(P)$ 
to obtain a new base $B^*$, a family $\Lambda$ of blossoms, 
and feasible dual variables $p$ and $q$. 
For each normal blossom $H_i$ with $H_i \cap P \not= \emptyset$ in the increasing order of $i$, do the following. 
\begin{quote}
While $q (H_i) > 0$, apply $\Search$ in $H_i$ and update the dual variables as in Section~\ref{sec:reroute}. 
Apply $\Expand(H_i)$. 
\end{quote}
Go back to Step 2. 
\end{description}

We have already seen the correctness of this algorithm, and 
we now analyze the complexity. 
Since $|V^*| \:= O(n)$, an execution of 
the procedure $\Search$ as well as the dual update requires $O(n^2)$ arithmetic operations.  
By Lemma~\ref{lem:dualupdatebound}, 
Step 3 is executed at most $O(n)$ times per augmentation. 
In Step 4, we create a new blossom or apply $\Expand(H_i)$ when we update the dual variables, 
which shows that the number of dual updates as well as executions of $\Search$ in Step 4 is also bounded by $O(n)$. 
Thus, $\Search$ and dual update are executed $O(n)$ times per augmentation, which requires $O(n^3)$ operations. 
We note that it also requires $O(n^3)$ operations to update $C^*$ and $G^*$ after augmentation. 
Since each augmentation reduces the 
number of source lines by two, the number of augmentations during the algorithm is $O(m)$, where $m=\rank A$, 
and hence the total number of arithmetic operations is $O(n^3 m)$.

\begin{theorem}
\label{th:complexity}
Algorithm {\sf Minimum-Weight Parity Base} finds a parity base of minimum weight or 
detects infeasibility with $O(n^3m)$ arithmetic operations over $\K$. 
\end{theorem}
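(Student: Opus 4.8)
The statement packages together the correctness established in Sections~\ref{sec:optimality}--\ref{sec:reroute} with the operation count sketched just above, so the plan is to assemble these pieces into one argument while being explicit about the bookkeeping. First I would settle correctness. The invariants (BT1), (BT2) and (DF1)--(DF3) hold at initialization — the greedy base for the split potential with $\Lambda=\emptyset$ makes $p,q$ feasible, as observed right after the definition of dual feasibility — and they are preserved by every step: by Lemmas~\ref{lem:expand}, \ref{lem:72} and~\ref{lem:createblossomdual} during $\Search$, by the choice of $\epsilon$ in the dual update of Section~\ref{sec:dualupdatealgo}, and by the lemmas of Section~\ref{sec:augmentation} together with the claim of Section~\ref{sec:reroute} during $\Augment(P)$ and the subsequent re-routing. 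Consequently, when the algorithm halts in Step~2 with no source line, $B:=B^*\cap V$ is a parity base carrying feasible dual variables, hence of minimum weight by Theorem~\ref{lem:optimality}; and when it halts in Step~3 with $\epsilon=+\infty$ in the presence of a source line, Lemma~\ref{lem:noparitybase} gives that no parity base exists.

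For the running time, the controlling fact is $|V^*|=O(n)$ throughout: the laminarity of $\Lambda$ together with $H_i\cap V\neq H_j\cap V$ for distinct members gives $|\Lambda|=O(n)$, hence $|T|=O(n)$ and $|V^*|=O(n)$, so $C^*$ is an $O(n)\times O(n)$ matrix and $G^\circ,G^*$ have $O(n)$ vertices. One call to $\Search$ together with the blossoms it creates, and one dual update, each cost $O(n^2)$ arithmetic operations, since every $\Blossom$/$\DBlossom$ pivots only around a two-element set and all path, label, and routing bookkeeping is linear in the size of $G^\circ$. Lemma~\ref{lem:dualupdatebound} bounds the number of dual updates, hence of $\Search$ calls, in Step~3 by $O(n)$ per augmentation. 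In Step~4 each dual update either enlarges $\Lambda$ or precipitates an $\Expand$; since $|\Lambda|=O(n)$ and each normal blossom of $\Lambda_P$ is expanded exactly once (at the end of its re-routing loop), there are $O(n)$ such events and hence $O(n)$ re-routing calls to $\Search$. Finally the pivoting around $P^*$ and the $O(n)$ four-element pivots in Steps~4--5 of $\Augment$ cost $O(n^3)$. Thus one augmentation costs $O(n^3)$ operations; each augmentation destroys exactly two source lines and there are at most $m=\rank A$ source lines to begin with, so $O(m)$ augmentations occur, giving the bound $O(n^3m)$.

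The point I would spell out most carefully is the $O(n)$ bound on re-routing $\Search$ calls in Step~4. Here one must invoke Lemma~\ref{lem:reachableHi}, which guarantees that each invocation of $\Search$ inside a normal blossom $H_i$ on $P$ labels all of $H_i\cap V$ without a dual update, so that the subsequent dual update either ends the loop (when $q(H_i)$ hits zero and $\Expand(H_i)$ follows) or — by the same case analysis as in Lemma~\ref{lem:dualupdatebound}, specialized to the situation $\epsilon_2=\epsilon_3=\epsilon_4=+\infty$ — must create a new blossom inside $H_i$; and one must note that these newly created inner blossoms are treated as black boxes (with their precomputed $<_{H_j}$ and $R_{H_j}$) when a larger blossom on $P$ is re-routed, so that the global bound $|\Lambda|=O(n)$ really caps the number of such iterations over all of Step~4. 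Everything else reduces to routine counting once $|V^*|=O(n)$ is in place.
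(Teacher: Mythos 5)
Your proposal is correct and follows essentially the same route as the paper: correctness is assembled from the maintained invariants together with Theorem~\ref{lem:optimality} and Lemma~\ref{lem:noparitybase}, and the operation count rests on $|V^*|=O(n)$, the $O(n^2)$ cost per $\Search$/dual update, the $O(n)$ bound on dual updates per augmentation from Lemma~\ref{lem:dualupdatebound} (and the blossom-creation/$\Expand$ counting for Step~4), and the $O(m)$ bound on augmentations. Your extra care about the Step~4 re-routing count via Lemma~\ref{lem:reachableHi} only elaborates what the paper states more tersely.
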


If $\K$ is a finite field of fixed order, each arithmetic operation can be executed in $O(1)$ time. 
Hence Theorem~\ref{th:complexity} implies the following. 

\begin{corollary}
The minimum-weight parity base problem over 
an arbitrary fixed finite field $\K$ can be solved 
in strongly polynomial time.
\end{corollary}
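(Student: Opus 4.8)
The plan is to read the corollary off Theorem~\ref{th:complexity}, checking only that over a fixed finite field each of its $O(n^3m)$ arithmetic operations costs constant time and that all numbers occurring stay of polynomially bounded size, which together are exactly what \emph{strongly polynomial} demands. First I would sort the operations of the algorithm \textsf{Minimum-Weight Parity Base} into two groups: those performed in $\K$ --- all manipulations of $A$, of the fundamental cocircuit matrices $C$ and $C^*$, and of the auxiliary submatrices handled by $\Blossom$, $\DBlossom$, $\Expand$, and $\Augment$, including zero-tests and inversions --- and those performed on the weights $w_\ell$ and the dual variables $p$ and $q$, namely additions, subtractions, comparisons, and the single halving in the definition of $\epsilon_1$. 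When $\K$ is a finite field of fixed order, every element of $\K$ occupies $O(1)$ space and each field operation runs in $O(1)$ time on a word RAM, while the weight and dual operations are the usual unit-cost operations of the strongly polynomial model.

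Next I would bound the two quantities that matter. For the operation count, Theorem~\ref{th:complexity} gives $O(n^3m)$ arithmetic operations, and this bound is polynomial in $n=|V|$ and $m=\rank A$ --- which are themselves bounded by the size of the input matrix $A$ --- and is independent of the actual weights. For the number sizes, it suffices to track the dual variables: each dual update alters $p$ and $q$ only by $\pm\epsilon$ (or by $\pm\epsilon'$ in Section~\ref{sec:reroute}), and these increments are minima of differences of current values of $p$ and $q$, with at worst a factor $1/2$ coming from the term $\epsilon_1$; hence, by induction, every value of $p$ and $q$ is a dyadic-rational combination of the input weights $w_\ell$ in which the numerators and the power-of-two denominator have bit length bounded by $O(1)$ times the number of dual updates performed so far, plus the bit length of the input. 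Since there are $O(nm)$ dual updates in total, by Lemma~\ref{lem:dualupdatebound} and the analysis preceding Theorem~\ref{th:complexity}, all numbers stay of polynomially bounded size, so the algorithm runs in polynomial space.

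Combining these points, the algorithm performs $O(n^3m)$ elementary steps --- each of genuine $O(1)$ time for the $\K$-operations and of unit cost for the weight and dual operations --- while keeping all numbers polynomially bounded, which is precisely the definition of a strongly polynomial algorithm. I expect the only step requiring care to be the size bound on the dual variables, since the halving in $\epsilon_1$ forces the power-of-two denominator bookkeeping; everything else is an immediate consequence of Theorem~\ref{th:complexity} together with the hypothesis that $|\K|$ is a fixed constant.
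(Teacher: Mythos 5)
Your proposal is correct and follows essentially the same route as the paper, which simply observes that each of the $O(n^3m)$ arithmetic operations of Theorem~\ref{th:complexity} costs $O(1)$ time once $\K$ is a finite field of fixed order. Your additional bookkeeping on the bit sizes of the dual variables (tracking the halvings from $\epsilon_1$ and the $O(nm)$ dual updates) is a sound way of making explicit the polynomial-space requirement of the strongly polynomial model, which the paper leaves implicit.
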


When $\K = \mathbb{Q}$, it is not obvious that 
a direct application of our algorithm runs in polynomial time. 
This is because we do not know how to bound the number of bits required to represent the entries of $C^*$. 
However, the minimum-weight parity base problem over $\mathbb Q$ can be solved 
in polynomial time by applying our algorithm over a sequence of finite fields.

\begin{theorem}
The minimum-weight parity base problem over $\mathbb Q$ can be solved 
in time polynomial in the binary encoding length $\langle A \rangle$ of the matrix representation $A$.
\end{theorem}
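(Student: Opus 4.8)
The plan is to derive this from the finite-field algorithm of Theorem~\ref{th:complexity} by running it over a suitably chosen sequence of prime fields. First I would clear denominators so that $A\in\mathbb{Z}^{m\times n}$ and every $w_\ell$ is an integer; this changes neither the matroid $\M(A)$ nor, after dividing the objective by the common multiplier, the set of minimum-weight parity bases, and the minimum-weight parity base problem depends only on $\M(A)$, $L$ and $w$. The key observation is that the only objects in our algorithm whose bit-length is not obviously polynomial are the entries of $C^*$: the potentials $p$, the blossom weights $q$, the quantities $Q_{uv}$, and the step sizes $\epsilon_1,\dots,\epsilon_4$ all stay rational with polynomially bounded bit-length, since there are only $\mathrm{poly}(\langle A\rangle)$ dual updates in total (Lemma~\ref{lem:dualupdatebound}). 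So it suffices to carry out the linear algebra that maintains $C^*$ over a finite field $\mathbb{F}_p$, while keeping the combinatorial data and the (rational) dual bookkeeping as before.

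Next I would show that, in a single run, every branch of the algorithm is decided by whether a genuine $m\times m$ minor of the \emph{original} matrix $A$ --- taken relative to the current base $B$ --- is zero. The greedy step of Step~1 tests independence of column sets of $A$; the relation $(u,v)\in F^*$ amounts to nonsingularity of $C^*[\{u,v\}]$, i.e.\ to $\det A[U,B\triangle\{u,v\}]\neq0$; nonsingularity of a general $C^*[X]$ is $\det A[U,B\triangle X]\neq0$; and every pivoting operation inside $\Blossom$, $\DBlossom$, $\Expand$ and $\Augment$, together with the infeasibility test $\epsilon=+\infty$, pulls back --- via Lemmas~\ref{lem:pivot} and~\ref{lem:pivotsing}, since the dummy rows and columns of blossoms are pivoted out --- to such a test on a submatrix of $A$ with respect to the current base. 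Since a run performs $O(n^3m)=\mathrm{poly}(\langle A\rangle)$ arithmetic operations, it queries at most $N=\mathrm{poly}(\langle A\rangle)$ such minors, each the determinant of a submatrix of $A$, hence an integer of absolute value at most $M:=(\sqrt m\,\|A\|_\infty)^m$ by Hadamard's inequality, of $\mathrm{poly}(\langle A\rangle)$ bit-length, and computable over $\mathbb{Q}$ in polynomial time. Call a prime $p$ \emph{faithful} if, in the run over $\mathbb{F}_p$, every queried minor vanishes modulo $p$ exactly when it vanishes over $\mathbb{Q}$; by induction on the steps this makes the run over $\mathbb{F}_p$ step-for-step identical to the run over $\mathbb{Q}$ (with a fixed deterministic implementation), so by Theorem~\ref{lem:optimality} and Lemma~\ref{lem:noparitybase} its output is correct. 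Equivalently, $p$ is faithful iff it divides none of the minors queried by the fixed $\mathbb{Q}$-run, so there are at most $N\log_2 M=\mathrm{poly}(\langle A\rangle)$ non-faithful primes.

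The sequence of fields is then produced by the following deterministic search. Maintain an integer $R$, initially $1$; at each step pick the smallest prime $p$ not dividing $R$, which --- $R$ having only $\mathrm{poly}(\langle A\rangle)$ prime factors at all times --- is among the first $\mathrm{poly}(\langle A\rangle)$ primes and hence of $\mathrm{poly}(\langle A\rangle)$ bit-length, found by sieving. Run the algorithm over $\mathbb{F}_p$, recording the minors of $A$ it queries and computing each of them over $\mathbb{Q}$. If no recorded minor is nonzero over $\mathbb{Q}$ but zero modulo $p$, then $p$ is faithful: output the result and halt. Otherwise take such an offending minor $d$, replace $R$ by $Rd$ (so $p\mid R$ thereafter and $p$ is never reselected), and repeat. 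Each failing step consumes a distinct non-faithful prime, and there are at most $\mathrm{poly}(\langle A\rangle)$ of those, so the search halts within $\mathrm{poly}(\langle A\rangle)$ steps; all primes used have $\mathrm{poly}(\langle A\rangle)$ bit-length, each run costs $O(n^3m)$ operations on $\mathrm{poly}(\langle A\rangle)$-bit numbers plus $\mathrm{poly}(\langle A\rangle)$ time to compute and test the queried minors, so the total running time is $\mathrm{poly}(\langle A\rangle)$.

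The main obstacle I anticipate is exactly the reduction carried out in the second paragraph: one must check that \emph{every} decision the algorithm makes over $\mathbb{F}_p$ --- not just membership in $F^*$ and nonsingularity of the matrices $C^*[X]$, but also the choices inside $\Blossom$, $\DBlossom$, $\Augment$, the applications of $\Expand$, and the termination condition $\epsilon=+\infty$ --- is a question about the vanishing of a single minor of the integer matrix $A$ relative to the current base, so that it can be verified over $\mathbb{Q}$ without ever forming the potentially large rational entries of $C^*$. This amounts to going carefully through Sections~\ref{sec:blossoms}--\ref{sec:augmentation} with Lemmas~\ref{lem:pivot} and~\ref{lem:pivotsing} as the translation device; once this dictionary is established, the counting and the prime search above are routine. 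A secondary point to record is that clearing denominators in $A$ and in $w$ preserves the instance and that the dual variables stay of polynomial bit-length, so that the substitution over $\mathbb{F}_p$ really does eliminate the only source of super-polynomial growth.
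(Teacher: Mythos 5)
Your overall strategy---clear denominators, replace the rational arithmetic on $C^*$ by arithmetic over small prime fields, and use a Hadamard bound to control how many primes can go wrong---is in the right spirit, but the specific route you take rests on a claim you yourself flag as unproven, and that claim is exactly the hard part. You need every branch taken by the algorithm (membership in $F^*$, the nonsingularity tests behind each pivoting step in $\Blossom$, $\DBlossom$, $\Expand$ and $\Augment$, the infeasibility test) to be decided by the vanishing of a single minor of the \emph{original} matrix $A$ relative to the current base. Lemmas~\ref{lem:pivot} and~\ref{lem:pivotsing} only give you this for submatrices $C^*[X]$ with $T\subseteq X$, where pivoting out $T$ lands back in the genuine fundamental cocircuit matrix $C$ via (BT1). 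The individual entry tests $C^*_{uv}\neq 0$ that define $F^*$ and drive $\Search$ are different: the rows and columns attached to buds and tips are built recursively, by copying and zeroing out entries of the \emph{current} $C^*$ and then pivoting, so their vanishing is governed by minors of a growing augmented matrix whose extra rows/columns are themselves history-dependent rational functions of $A$---not by single minors of $A$. This is precisely the obstruction the paper points to when it says it does not know how to bound the bit-length of the entries of $C^*$; your proposed ``dictionary'' would have to resolve that recursion, and nothing in the proposal does. Without it, the notion of a ``faithful'' prime, the count of non-faithful primes, and the adaptive search all lack a foundation.

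The paper's proof avoids the need for any such simulation. It fixes $N:=\lceil m\log(m\gamma)\rceil$ and runs the algorithm independently over ${\rm GF}(p_1),\dots,{\rm GF}(p_N)$ for the first $N$ primes, returning the best of the $N$ answers. No run is required to track a hypothetical rational run; each run correctly solves the minimum-weight parity base problem for the (possibly different) matroid $\M(A\bmod p_i)$ by Theorem~\ref{th:complexity}. Correctness then follows from a purely static fact: by Hadamard, $|\det A[U,X]|\le (m\gamma)^m\le 2^N<\prod_i p_i$ for every $m$-subset $X$, so $\det A[U,X]\neq 0$ over $\mathbb{Q}$ if and only if $\det A[U,X]\not\equiv 0\pmod{p_i}$ for some $i$. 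Hence the parity bases over $\mathbb{Q}$ are exactly the union over $i$ of the parity bases over ${\rm GF}(p_i)$, and taking the minimum over the $N$ runs gives the optimum. If you want to salvage your single-faithful-prime search, you would first have to establish the minor dictionary; the paper's union argument shows it is unnecessary.
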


\begin{proof}
By multiplying each entry of $A$ by the product of the denominators of all entries, 
we may assume that each entry of $A$ is an integer. 
Let $\gamma$ be the maximum absolute value of the entries of $A$, 
and put $N := \lceil m \log (m \gamma) \rceil$. 
Note that $N$ is bounded by a polynomial in $\langle A \rangle$. 
We compute the $N$ smallest prime numbers $p_1, \dots , p_N$. 
Since it is known that $p_N = O(N \log N)$ by the prime number theorem, 
they can be computed in polynomial time by the sieve of Eratosthenes.

For $i=1, \dots , N$, we consider the minimum-weight parity base problem over ${\rm GF}(p_i)$
where each entry of $A$ is regarded as an element of ${\rm GF}(p_i)$. 
In other words, we consider the problem in which each operation is executed modulo $p_i$. 
Since each arithmetic operation over ${\rm GF}(p_i)$ can be executed in polynomial time, 
we can solve the minimum-weight parity base problem over ${\rm GF}(p_i)$
in polynomial time by Theorem~\ref{th:complexity}. 
Among all optimal solutions of these problems, the algorithm returns the best one $B$. 
That is, $B$ is the minimum weight parity set 
subject to $|B| = m$ and $\det A[U, B] \not\equiv 0 \pmod {p_i}$ for some $i \in \{1, \dots , N\}$. 

To see the correctness of this algorithm, 
we evaluate the absolute value of the subdeterminant of $A$. 
For any subset $X \subseteq V$ with $|X| = m$, we have 
$$
|\det A[U, X]| \le m! \gamma^m \le (m \gamma)^m \le 2^N < \prod^N_{i=1} p_i. 
$$
This shows that 
$\det A[U, X] = 0$ if and only if $\det A[U, X] \equiv 0 \pmod {\prod^N_{i=1} p_i}$. 
Therefore, 
$\det A[U, X] \not= 0$ if and only if $\det A[U, X] \not\equiv 0 \pmod {p_i}$ for some $i \in \{1, \dots , N\}$, 
which shows 
that the output $B$ is an optimal solution. 
\end{proof}

\section*{Acknowledgements}
The authors thank the anonymous reviewer and Kei Nakashima for very careful reading of our manuscript and valuable comments.
They also thank Jim Geelen, Gyula Pap and Kenjiro Takazawa for 
fruitful discussions on the topic of this paper. 
This work is supported by JST through CREST, No.~JPMJCR14D2, 
ACT-I, No.~JPMJPR17UB, and ERATO, No.~JPMJER1201, and 
by Grants-in-Aid for Scientific Research 
No.~JP24106002 and No.~JP24106005 from MEXT, and No.~JP16K16010 from JSPS.

\end{document}